\documentclass[leqno]{article}

\usepackage[frenchb,english]{babel}
\usepackage[utf8]{inputenc}
\usepackage{amsmath}
\usepackage{amssymb}
\usepackage{amsfonts}
\usepackage{enumerate}
\usepackage{vmargin}
\usepackage[all]{xy}
\usepackage{mathrsfs}
\usepackage{mathtools}
\usepackage{lmodern}
\usepackage{slashed}
\usepackage[colorlinks=true,linkcolor=blue,pagebackref=true]{hyperref}%
\setmarginsrb{3cm}{3cm}{3.5cm}{3cm}{0cm}{0cm}{1.5cm}{3cm}
%\setmarginsrb{1}  {2}   {3}   {4}  {5}  {6}  {7}   {8}
%1 est la marge gauche
%2 est la marge en haut
%3 est la marge droite
%4 est la marge en bas
%5 fixe la hauteur de l'entête
%6 fixe la distance entre l'entête et le texte
%7 fixe la hauteur du pied de page
%8 fixe la distance entre le texte et le pied de page
%\setmarginsrb{3cm}{3cm}{2.5cm}{2cm}{0cm}{0cm}{1.5cm}{4cm}
\usepackage{comment}

\usepackage{accents}
%\begin{document}
%%%%%%%%%%%%%%%%%%%%%%%%%%%%%%%%%%%
%doublehat

\newlength{\dhatheight}
\newcommand{\doublehat}[1]{%
    \settoheight{\dhatheight}{\ensuremath{\hat{#1}}}%
    \addtolength{\dhatheight}{-0.2ex}%
    \hat{\vphantom{\rule{1pt}{\dhatheight}}%
    \smash{\hat{#1}}}}

%%%%%%%%%%%%%%%%%%%%%%%%%%%%%%%%%%%%

\footskip1.3cm

\newcommand{\C}{\mathrm{C}}

\newcommand{\E}{\ensuremath{\mathbb{E}}}

\newcommand{\N}{\ensuremath{\mathbb{N}}}
\newcommand{\B}{\mathrm{B}} %Bounded operators
\let\H\relax %\H ne fait plus rien !
\newcommand{\H}{\mathrm{H}}
\newcommand{\I}{\mathrm{I}} %matrice identit\'e
\newcommand{\II}{\mathrm{II}} %type II

\let\L\relax %\L ne fait plus rien !
\newcommand{\L}{\mathrm{L}}
 %\O ne fait plus rien !
 %involution

\newcommand{\Q}{\mathrm{Q}}
 %Sierpinski gasket
 %Bounded mean oscillation
\newcommand{\Mult}{\mathrm{Mult}}
\newcommand{\scr}{\mathscr}

\newcommand{\Aut}{\mathrm{Aut}}

\newcommand{\M}{\mathrm{M}}
\newcommand{\CB}{\mathrm{CB}}

 %CAR algebra
\newcommand{\HS}{\mathrm{HS}}
\newcommand{\NP}{\mathrm{NP}}

 %\P ne fait plus rien !
 %algebre des poly trigo

\let\cal\relax
\newcommand{\cal}{\mathcal}
\newcommand{\Z}{\ensuremath{\mathbb{Z}}}
\newcommand{\R}{\ensuremath{\mathbb{R}}}
\newcommand{\T}{\ensuremath{\mathbb{T}}}

 %\ker ne fait plus rien !
 %espace des états

%\newcommand{\tr}{\ensuremath{\mathop{\rm Tr}\nolimits}}

\newcommand{\Id}{\mathrm{Id}}
\newcommand{\VN}{\mathrm{VN}}
 %regular
\newcommand{\la}{\langle}
\newcommand{\ra}{\rangle}
\newcommand{\EA}{\mathrm{EA}} %entanglement assisted

\newcommand{\flip}{\mathrm{flip}}

 %Monge-Kantorovitch distance

\renewcommand{\leq}{\ensuremath{\leqslant}}
\renewcommand{\geq}{\ensuremath{\geqslant}}
\newcommand{\qed}{\hfill \vrule height6pt  width6pt depth0pt}
\newcommand{\bnorm}[1]{ \big\| #1  \big\|}

\newcommand{\norm}[1]{\left\Vert#1\right\Vert}
 %Walsch

\newcommand{\co}{\colon}
\newcommand{\otp}{\widehat{\ot}}

\newcommand{\ot}{\otimes}
\newcommand{\ovl}{\overline}
\newcommand{\otvn}{\ovl\ot}

\newcommand{\dec}{\mathrm{dec}}

\newcommand{\cb}{\mathrm{cb}}
 %nombre de coefficients fini

\newcommand{\op}{\mathrm{op}} %opposed structure
\let\i\relax %\i ne fait plus rien !
\newcommand{\i}{\mathrm{i}}

\newcommand{\ov}{\overset}
\newcommand{\sa}{\mathrm{sa}}

\newcommand{\JBW}{\mathrm{JBW}}

%Radon-Nykodym operator space
 %multiplicative domain
 %left multiplicative domain
 %left multiplicative domain
 %weak

\newcommand{\epsi}{\varepsilon}
\renewcommand{\d}{\mathop{}\mathopen{}\mathrm{d}} %op\'erateur diff\'erentiel
\newcommand{\e}{\mathrm{e}} %constante e
\renewcommand{\d}{\mathop{}\mathopen{}\mathrm{d}}

 %Espce des polynomes groupes quantiques

\DeclareMathOperator{\Span}{span} %sev engendre
 %Lipschitz
 %lipschitz
 %automorphisme
 %support
 %fixed subspace
\DeclareMathOperator{\tr}{Tr} %trace
 %\ker ne fait plus rien !
 %range
 %range
 %domaine
\let\Re\relax %\Re ne fait plus rien
\DeclareMathOperator{\Re}{Re} %partie r\'eelle
 %cardinal
 %Irreductible
 %Irreductible
 %induced

%%%%%%%%%%%%%%%%%%%%%%%%%%%%%%%%%%%
%%% presentation
\selectlanguage{english}
\newtheorem{thm}{Theorem}[section]

\newtheorem{prop}[thm]{Proposition}

\newtheorem{cor}[thm]{Corollary}
\newtheorem{lemma}[thm]{Lemma}

\newtheorem{remark}[thm]{Remark}
\newtheorem{example}[thm]{Example}

\newenvironment{proof}[1][]{\noindent {\it Proof #1} : }{\hbox{~}\qed
\smallskip
}

\usepackage{tocloft}
\setlength{\cftbeforesecskip}{0pt}
		
\numberwithin{equation}{section}
\usepackage[nottoc,notlot,notlof]{tocbibind}%biblio dans sommaire
%%%%%%%%%%%%%%%%%%%%%%%%%%%%%%%%%%%%%%%%%%%%%%%%%%%%%%%%%%%%%%%%
%%%%%%%%%%%%%%%%%%%%%%%%%%%%%%%%%%%%%%%%%%%%%%%%%%%%%%%%%%%%%%%%
%%%%%%%%%%%%%%%%%%%%%%%%%%%%%%%%%%%%%%%%%%%%%%%%%%%%%%%%%%%%%%%%
%%%%%%%%%%%%%%%%%%%%%%%%%%%%%%%%%%%%%%%%%%%%%%%%%%%%%%%%%%%%%%%%
%Biblio: ce paragraphe permet de reduire l'espacement entre les items bibliographiques

\let\OLDthebibliography\thebibliography
\renewcommand\thebibliography[1]{
  \OLDthebibliography{#1}
  \setlength{\parskip}{0pt}
  \setlength{\itemsep}{0pt plus 0.3ex}
}

%reallywidehat
%\usepackage{scalerel}

\newcommand\reallywidehat[1]{\arraycolsep=0pt\relax%
\begin{array}{c}
\stretchto{
  \scaleto{
    \scalerel*[\widthof{\ensuremath{#1}}]{\kern-.5pt\bigwedge\kern-.5pt}
    {\rule[-\textheight/2]{1ex}{\textheight}} %WIDTH-LIMITED BIG WEDGE
  }{\textheight} % 
}{0.5ex}\\           % THIS SQUEEZES THE WEDGE TO 0.5ex HEIGHT
#1\\                 % THIS STACKS THE WEDGE ATOP THE ARGUMENT
\rule{-1ex}{0ex}
\end{array}
}

\begin{document}
\selectlanguage{english}
\title{\bfseries{Entanglement-assisted classical capacities of some channels acting as radial multipliers on fermion algebras}}
\date{}
\author{\bfseries{C\'edric Arhancet}}
%Coulhon-Varopulos dimension, curvature,and functional calculus
%On the spectral dimension of spectral triples
%associated to Markov semigroups of operators
\maketitle

%%%%%%%%%%%%%%%%%%%%%%%%%%%%%%%%%%%%%%%%%%%%%%%%%%%%%%%%%%%%%%%%
%%%%%%%%%%%%%%%%%%%%%%%%%%%%%%%%%%%%%%%%%%%%%%%%%%%%%%%%%%%%%%%%

\begin{abstract}
We investigate a new class of unital quantum channels on $\mathrm{M}_{2^k}$, acting as radial multipliers when we identify the matrix algebra $\mathrm{M}_{2^k}$ with a finite-dimensional fermion algebra. Our primary contribution lies in the precise computation of the (optimal) rate at which classical information can be transmitted through these channels from a sender to a receiver when they share an unlimited amount of entanglement. Our approach relies on new connections between fermion algebras with the $n$-dimensional discrete hypercube $\{-1,1\}^n$. Significantly, our calculations yield exact values applicable to the operators of the fermionic Ornstein-Uhlenbeck semigroup. This advancement not only provides deeper insights into the structure and behaviour of these channels but also enhances our understanding of Quantum Information Theory in a dimension-independent context.
%\textbf{A FAIRE : changer les titres avec Cantor group}
\end{abstract}%to any such quantum channels in any dimension and can be applied to
% and more generally the Cantor group

%\vspace{0.2cm}
%
%-calcul de de l'entropy pour le semigroup fermionic
%
%\emph{\textbf{-lemme L1Lp general}}
%
%-mieux definir fermion algebras

%%%%%%%%%%%%%%%%%%%%%%%%%%%%%%%%%%%%%%%%%%%%%%%%%%%%%%%%%%%%%%%%
%%%%%%%%%%%%%%%%%%%%%%%%%%%%%%%%%%%%%%%%%%%%%%%%%%%%%%%%%%%%%%%%

\makeatletter
 \renewcommand{\@makefntext}[1]{#1}
 \makeatother
 \footnotetext{
 %The authors are supported by the research program ANR-18-CE40-0021 (project HASCON).\\
 2020 {\it Mathematics subject classification:}
 46L51, 94A40, 46L07, 81P45. %  46L51  Noncommutative measure and integration
%  46M35       Abstract interpolation of topological vector spaces [See also 46B70]
% 46L07       Operator spaces and completely bounded maps [See also 47L25]
% 43A22 Homomorphisms and multipliers of function spaces on groups, semigroups, etc.
% 43A15 Lp-spaces and other function spaces on groups, semigroups, etc.
% 47D03 Groups and semigroups of linear operators For nonlinear operators.
% 58B34 	Noncommutative geometry (à la Connes)
% 81P45   	Quantum information, communication, networks (quantum-theoretic aspects)
% 94A40   	Channel models (including quantum) in information and communication theory
\\
{\it Key words}: noncommutative $\L^p$-spaces, multipliers, fermion algebras, entropies, quantum channels, discrete hypercube, quantum information theory.}
%noncommutative $L^p$-spaces, operator spaces

{
  \hypersetup{linkcolor=blue}
 \tableofcontents
}

%%%%%%%%%%%%%%%%%%%%%%%%%%%%%%%%%%%%%%%%%%%%%%%%%%%%%%%%%%%%%%%%%%%%%%%%%%%%%%%%%%%%%%%%%%%%%%%%%%%%%%%%%%%%%%%%%%%%%%%%%%%%%%%%%%%%%%%%%%%%%%%%%%%%%%
%%%%%%%%%%%%%%%%%%%%%%%%%%%%%%%%%%%%%%%%%%%%%%%%%%%%%%%%%%%%%%%%%%%%%%%%%%%%%%%%%%%%%%%%%%%%%%%%%%%%%%%%%%%%%%%%%%%%%%%%%%%%%%%%%%%%%%%%%%%%%%%%%%%%%%
%%%%%%%%%%%%%%%%%%%%%%%%%%%%%%%%%%%%%%%%%%%%%%%%%%%%%%%%%%%%%%%%%%%%%%%%%%%%%%%%%%%%%%%%%%%%%%%%%%%%%%%%%%%%%%%%%%%%%%%%%%%%%%%%%%%%%%%%%%%%%%%%%%%%%%

\section{Introduction}
\label{sec:Introduction}

One of the primary objectives of Quantum Information Theory is to compute the optimal rates of information transmission (be it classical or quantum) through quantum channels. Quantum channels characterize the evolution of quantum states during transmission via noisy physical mediums. We refer to the books \cite{Wat18} and \cite{Wil17} for more information. To achieve this, numerous capacities and entropies have been introduced to describe the ability of a channel to convey information from a sender to a receiver. Determining these values is crucial in defining the limits of the capacity of a quantum channel for reliable information transmission. For a comprehensive overview of the various capacities associated with a quantum channel, we recommend the survey article \cite{GIN18}.

In the Schrödinger picture, a quantum channel is understood as a trace preserving completely positive map 
$T \co S^1_n \to S^1_n$. The concept of minimum output entropy for a quantum channel is defined by
\begin{equation}
\label{Def-minimum-output-entropy-ini}
\H_{\min}(T) \overset{\mathrm{def}}{=} \min_{x \geq 0, \tr x = 1} \H(T(x)).
\end{equation}
In this context, $\H(\rho) \ov{\mathrm{def}}{=} -\tr(\rho\log_2 \rho)$ is used to denote the von Neumann entropy, which is essentially the quantum counterpart of Shannon entropy, acting as a measure of uncertainty for $\rho$. For an introductory perspective on this topic, one might refer to \cite{Pet01}. It is noteworthy that $\H(\rho)$ maintains a non-negative value, achieving its maximum when $\rho$ is maximally mixed and descending to zero in the case of a pure state. Therefore, the minimum output entropy can be interpreted as a measure of how much the purity of states degrades through the channel. 

Note that  Hastings proved in \cite{Has09} that the minimum output entropy is not additive, i.e.~we have $\H_{\min}(T_1 \ot T_2) \not= \H_{\min}(T_1)+\H_{\min}(T_2)$ in general for quantum channels $T_1$ and $T_2$. Its proof relies on probabilistic constructions, and finding a concrete counterexample is one of the most important open problems in quantum information theory. Finally, observe that for a unital qubit channel $T \co S^1_2 \to S^1_2$, a formula of $\H_{\min}(T)$ is known, see \cite[Example 8.10 p.~153]{{Hol19}}. Moreover, King has proved in \cite[Theorem 1]{Kin02} that the minimum output entropy $\H_{\min}$ is additive if one of the two channels $T_1$ or $T_2$ is a unital qubit channel. The same thing is true if $T_1$ or $T_2$ is entangling breaking, see \cite[Exercise 8.2 p.~556]{Wat18}.

We can generalize the definition of the minimum output entropy to quantum channels $T \co \L^1(\mathcal{M})\to \L^1(\cal{M})$ acting on a noncommutative $\L^1$-space $\L^1(\cal{M})$ associated with a finite-dimensional von Neumann algebra equipped with a finite faithful trace. In this broader context, it is well-known \cite[Remark 9.5]{Arh24b} that we can express the minimum output entropy as a derivative of suitable operators norms:
\begin{equation}
\label{Smin-as-derivative-intro}
\H_{\min,\tau}(T)
=-\frac{1}{\log 2}\frac{\d}{\d p} \big[\norm{T}_{\L^1(\mathcal{M}) \to \L^p(\mathcal{M})}\big]|_{p=1}. 
\end{equation}
In this notation, we explicitly denote the dependency of this quantity on the chosen trace $\tau$. If the von Neumann $\cal{M}$ is the matrix algebra $\M_n$ equipped with its non-normalized trace $\tr$, $\H_{\min,\tr}(T)$ is equal to the minimum output entropy $\H_{\min}(T)$ defined in \eqref{Def-minimum-output-entropy-ini}.

Calculating the minimum output entropy for a specific channel is generally a challenging task. Indeed, it is announced in \cite{BeS08} that the problem of deciding whether the minimum entropy of any channel is less than a constant is $\NP$-complete. However, in the case of some specific quantum channels, it is occasionally possible to compute this quantity. Notably, in \cite{Arh24b}, the minimum output entropy of all Fourier multipliers on co-amenable compact quantum groups of Kac type is precisely computed and even for convolution operators on even more abstract structures arising from Jones's subfactor theory. Furthermore, in addition to the minimum output entropy, \cite{Arh24b} also provides a precise computation of its completely bounded version, which is defined by
\begin{equation}
\label{Def-intro-Scb-min}
\H_{\cb,\min,\tau}(T)
\ov{\mathrm{def}}{=} -\frac{1}{\log 2}\frac{\d}{\d p} \big[\norm{T}_{\cb,\L^1(\mathcal{M}) \to \L^p(\mathcal{M})} \big]|_{p=1}. 
\end{equation} 
Here, the subscript $\cb$ means <<completely bounded>> and is natural in operator space theory, see the books \cite{BLM04}, \cite{EfR00}, \cite{Pau02}, \cite{Pis98} and \cite{Pis2}. This concept was initially introduced for matrix algebras in \cite{DJKRB06} and later independently identified in \cite{GPLS09}. See also \cite{YHW19}. For a channel $T \co S^1_n \to S^1_n$, we have the following variational expression
$$
\H_{\cb,\min,\tr}(T)
=\inf_{\rho} \Big\{\H\big[(T \ot \Id)(|\psi\ra \la\psi|)\big]- \H(\rho) \Big\},
$$
where the infimum is taken on all states $\rho$ in $S^1_n$ and where $|\psi\ra$ is a purification of $\rho$. We can restrict the infimum to the pure states. We always have $-\log_2 n \leq \H_{\cb,\min,\tr}(T) \leq \log_2 n$. The lower bound is attained for the identity channel $\Id_{S^1_n}$ and the upper bound for the completely noisy channel. Moreover, if $T \co S^1_n \to S^1_n$ is entanglement breaking, we have $\H_{\cb,\min,\tr}(T) \geq 0$ by \cite{DJKRB06}. Finally, if the trace $\tau$ is \textit{normalized}, it is observed in \cite{Arh24b} that 
\begin{equation}
\label{cbminless0}
\H_{\cb,\min,\tau}(T) \leq 0
\end{equation}
for any quantum channel $T \co \L^1(\mathcal{M})\to \L^1(\cal{M})$.

A significant difference between the minimum output entropy and its completely bounded version is that the second quantity is additive, i.e.~for any quantum channels $T_1$ and $T_2$ we have
\begin{equation}
\label{additivity-cb}
\H_{\cb,\min,\tau}(T_1 \ot T_2)
=\H_{\cb,\min,\tau}(T_1)+\H_{\cb,\min,\tau}(T_2).
\end{equation}
We refer to \cite{GuW15} for a nice application of this additivity to the strong converse of entanglement-assisted capacity. It is worth noting that it is elementary to check that for a \textit{unital} quantum channel $T \co S^1_n \to S^1_n$, the value $-\H_{\cb,\min,\tr}(T)$ gives a lower bound of the channel coherent information $\Q^{(1)}(T)$ defined in \cite[Definition 8.14 p.~474]{Wat18} by 
\begin{equation}
\label{def-Q1}
\Q^{(1)}(T)
\ov{\mathrm{def}}{=} \max_{\rho} \big\{ \H(T(\rho)) -\H((T \ot \Id)(|\psi\ra \la\psi|)) \big\},
\end{equation}
where the maximum is taken on all states $\rho$ in $S^1_n$ and where $|\psi\ra$ is a purification of $\rho$.

In \cite{Arh24b}, we also compute the entanglement-assisted classical capacity $\C_{\EA}(T)$ of Fourier multipliers. This is the highest rate at which classical information can be transmitted from a sender to a receiver when they share an unlimited amount of entanglement. We refer to \cite[Definition 8.6 p.~469]{Wat18} and \cite[p.~574]{Wil17} for a precise definition. Note that for a quantum channel $T \co S^1_n \to S^1_n$, a classical theorem \cite{BSST02} of Bennett, Shor, Smolin and Thapliyal gives the following  remarkably simple variational expression  
$$
\C_{\EA}(T)
=\max_{\rho} \big\{ \H(\rho)+\H(T(\rho))-\H((T \ot \Id)(|\psi \ra \la\psi)) \big\}
$$
for this quantity, where the maximum is taken on all states $\rho$ in $S^1_n$ and where $|\psi\ra$ is a purification of $\rho$. We refer also to \cite[Chapter 9]{Hol19} and \cite[Theorem 8.41 p.~510]{Wat18} for another proof of this important result. It is worth mentioning that for a quantum channel $T \co S^1_n \to S^1_n$ which is <<covariant>> with respect to a compact group and some suitable representations, the entanglement-assisted classical capacity $\C_{\EA}(T)$ is related to the completely bounded minimum output entropy by the formula $\C_{\EA}(T)=\log_2 n-\H_{\cb,\min,\tr}(T)$ of \cite[p.~385]{JuP15}.
%\begin{equation}
%\label{formula-CEA-covariant}
%\end{equation}

In summary, the paper \cite{Arh24b} reveals that noncommutative harmonic analysis on quantum groups provides a huge class of quantum channels, which offer the potential for calculating \textit{exact} values of various quantities. Additionally, it is demonstrated in \cite{Arh24b} how the theory of unital qubit channels $T \co S^1_2 \to S^1_2$ is encompassed within the theory of convolution operators on the noncommutative group algebra $\VN(\mathbb{Q}_8)$ of the quaternion group $\mathbb{Q}_8$ (the case of channels  $T \co S^1_n \to S^1_n$, with $n \geq 3$ has not yet been explored). 

This paper continues in the same vein by exploring other types of multipliers stemming from noncommutative harmonic analysis on fermionic analysis. We introduce a notion of radial multiplier on the fermion algebra $\scr{C}(H)$, where $H$ is a separable real Hilbert space with dimension $n \in \{2,\ldots,\infty\}$. The fermion algebras are also called complex Clifford algebras and arise from the study of systems of  fermions, particles (like electrons) obeying the Pauli exclusion principle. In this paper, we focus on fermion algebras associated with a finite-dimensional real Hilbert space $H$. In this setting, these algebras are finite-dimensional. Furthermore, we will retain the conventional notation $\scr{C}(H)$, even though the real Hilbert space $H$ is no longer explicitly present.

These von Neumann algebras are in particular generated by a <<spin system>>\footnote{\thefootnote. We warn the reader that the converse is false if the real Hilbert space $H$ is infinite-dimensional. Indeed, the family of the $s_i$'s can generate a von Neumann algebra of type III. We refer to \cite[Remark 1.2.11 p.~35]{ARU97} for more information. If the real Hilbert space $H$ is infinite-dimensional, we have to assume that that the generated von Neumann algebra is $*$-isomorphic to the unique approximately finite-dimensional factor of type $\II_1$ with separable predual. }, i.e.~a family $(s_i)_{1 \leq i \leq n}$ of selfadjoint unitary operators $\not=\pm \Id$ acting on a complex Hilbert space $\cal{H}$ which satisfy the anticommutation rule
\begin{equation}
%\label{anticommute}
s_is_j
=-s_js_i, \quad i \not= j.
\end{equation}

Moreover, this von Neumann algebra $\scr{C}(H)$ is equipped with a \textit{normalized} normal finite faithful trace $\tau$ which we will describe later, allowing to introduce the noncommutative $\L^p$-space $\L^p(\scr{C}(H))$ as the completion of the vector space $\scr{C}(H)$ with respect to the norm
\begin{equation*}
\norm{x}_{\L^p(\scr{C}(H))}
\ov{\mathrm{def}}{=} \tau(|x|^p)^{\frac{1}{p}}, \quad x \in \scr{C}(H),
\end{equation*} 
where $1 \leq p <\infty$ and where $|x|\ov{\mathrm{def}}{=} (x^*x)^{\frac{1}{2}}$. There exists a canonical orthonormal basis of the complex Hilbert space $\L^2(\scr{C}(H))$ that can be introduced as follows. For any non-empty finite subset $A=\{i_1,\ldots,i_k\}$ of the set $\{1,\ldots,n\}$ with $i_1 < \cdots < i_k$, we set
\begin{equation*}
s_A 
\ov{\mathrm{def}}{=} s_{i_1} \cdots s_{i_k}.
\end{equation*}
We equally let $s_\emptyset \ov{\mathrm{def}}{=} 1$. It is known that the $s_A$'s define an orthonormal basis of the Hilbert space $\L^2(\scr{C}(H))$. If the real Hilbert space $H$ is finite-dimensional, it is worth observing that the $s_A$'s form a basis of the algebra $\scr{C}(H)$. If $H$ is infinite-dimensional then the subspace $\Span \{s_A : A \text{ finite subset of } \N \}$ is weak* dense in the von Neumann algebra $\scr{C}(H)$.

Now, we are able to describe the quantum channels investigated in this work, referred to as radial multipliers. We say that a (weak* continuous) linear map $T \co \scr{C}(H) \to \scr{C}(H)$ is a <<radial multiplier>> if there exists a complex function $\phi \co \{0,\ldots,n\} \to \mathbb{C}$ such that
\begin{equation*}
%\label{def-rad-mult}
T(s_A)
= \phi(\vert A\vert) s_A, \quad A \subset \{1,\ldots,n\}.
\end{equation*}
The quantum channels considered in our paper are the trace preserving completely positive maps $T \co \L^1(\scr{C}(H)) \to \L^1(\scr{C}(H))$ induced by radial multipliers. For each of these channels, we compute in Corollary \ref{Cor-ent-cb} the completely bounded minimum output entropy $\H_{\cb,\min,\tau}(T)$ defined in \eqref{Def-intro-Scb-min}. We also perform the computation of the entanglement-assisted classical capacity $\C_{\EA}(T)$ in Theorem \ref{th-capacity-assisted}. %We also characterize radial multipliers with zero quantum capacity in Corollary \ref{Cor-breaking}.

Note that if $\cal{S}$ denote the weak closure of the \textit{real} linear space spanned by the $s_i$'s then $\R1+\cal{S}$ is a concrete realization of a <<spin factor>>. This Jordan subalgebra of $\B(\cal{H})$ is an important example of $\JBW$-factor, studied by Topping in \cite{Top65} and \cite{Top66}. Indeed, by \cite[Proposition 6.1.8 p.~138]{HOS84} these $\JBW$-algebras are exactly the $\JBW$-factors of type $\I_2$. Finally, we refer also \cite[Appendix]{GKR24} for more information on spin factors. 

Several concrete realizations of these operators exist. For example, we can consider $s_{i} \ov{\mathrm{def}}{=} a_i+a_i^*$ where the $a_i$'s are <<creation operators>> over fermionic Fock spaces. Another construction is possible using tensors of $2 \times 2$ complex matrices, see e.g.~\cite[p.~140]{HOS84}. Actually, if the dimension $n=2k$ of the Hilbert space $H$ is even, we have a $*$-isomorphism between the fermion algebra $\scr{C}(H)$ and the matrix algebra $\M_{2^k}$. So the previous <<radial multipliers>> can be identified with quantum channels on matrix algebras. In the lowest dimension $n=2$, we observe that we can recover in particular the dephasing channels on the matrix algebras $\M_2$,  see Example \ref{Ex-multiplier-Fermions}. Finally, if $n=2k-1$ is odd, the fermion algebra $\scr{C}(H)$ identifies to a sum $\M_{2^{k-1}} \oplus \M_{2^{k-1}}$ of matrix algebras.

We will observe in  Example \ref{Ex-Ornstein} that the operators of the fermionic Ornstein-Uhlenbeck semigroup $(T_t)_{t \geq 0}$ are radial multipliers. This Markovian semigroup was introduced by Gross in \cite{Gro72} and is a fermionic analogue of the classical Ornstein-Uhlenbeck semigroup. For a better  understanding of this semigroup, we suggest consulting the references \cite{CaL93}, \cite{Hu93} for hypercontractivity, \cite{CaM14} for the fermionic Fokker-Planck equation, and \cite[Section 9.B]{JMX06} \cite{ArK24} for the functional calculus of its generator.

\paragraph{Approach of the Paper}
We begin to identify the completely bounded norm of each radial multiplier from the space $\L^1(\scr{C}(H))$ into the Banach space $\L^p(\scr{C}(H))$ if $p > 1$. This computation relies on new connections with some multipliers on the $n$-dimensional discrete hypercube $\{-1,1\}^n$ %(or the Cantor group $\Omega$ if the real Hilbert space $H$ is infinite-dimensional)
relying in part on a group action of this group on the fermion algebra $\scr{C}(H)$. This allows us to deduce the completely bounded minimum output entropy by differentiating with the formula \eqref{Def-intro-Scb-min}.

Our calculation of the entanglement-assisted classical capacity is based on the following formula \cite[Theorem 1.1 p.~355]{JuP15} of Junge and Palazuelos
\begin{equation}
\label{CEA-formula}
\C_{\EA}(T)
=\frac{1}{\log 2} \frac{\d}{\d p}\big[\norm{T^*}_{\pi_{p^*}^\circ, \M_n \to \M_n}\big]|_{p=1},
\end{equation}
which describes this capacity as a derivative of completely $p$-summing norms of the adjoint map $T^* \co \M_n \to \M_n$ of the quantum channel $T \co S^1_n \to S^1_n$. Here $p^* \ov{\mathrm{def}}{=} \frac{p}{p-1}$. For any $1 \leq p <\infty$, the notion of completely $p$-summing map is the noncommutative version of the classical concept of $p$-summing map between Banach spaces, initially introduced by Pietsch in \cite{Pie66}. If $X$ and $Y$ are Banach spaces, recall that a linear operator $T \co X \to Y$ is $p$-summing if it induces a bounded operator $\Id \ot T \co \ell^p \ot_{\epsi} X \to \ell^p(Y)$, where $\ot_{\epsi}$ denotes the injective tensor product. We refer to the books \cite{DeF93} and \cite{DJT95} for a comprehensive presentation of these operators and their important applications in Banach space theory. Pisier introduced a noncommutative theory of $p$-summing maps between operator spaces (=noncommutative Banach spaces), introducing the concept of completely $p$-summing maps in \cite{Pis98}. If $E$ and $F$ are operator spaces, recall that a linear map $T \co E \to F$ is said to be completely $p$-summing \cite[p.~51]{Pis98} if it induces a bounded map $\Id_{S^p} \ot T \co S^p \ot_{\min} E \to S^p(F)$. In this case, the completely $p$-summing norm is defined by
\begin{equation}
\label{def-norrm-cb-p-summing}
\norm{T}_{\pi_{p}^\circ,E \to F}
\ov{\mathrm{def}}{=} \norm{\Id_{S^p} \ot T}_{S^p \ot_{\min} E \to S^p(F)}.
\end{equation}
The primary focus of the paper is to establish that there exists a simple expression for the completely $p$-summing norm of the adjoint $T^*$ when the quantum channel $T \co \L^1(\scr{C}(H)) \to \L^1(\scr{C}(H))$ is a radial multiplier, which allows us to compute the entanglement-assisted classical capacity $\C_{\EA}(T)$ with the formula \eqref{CEA-formula}.

Our main tools are the previous group action of the $n$-dimensional discrete hypercube $\{-1,1\}^n$ on the associated fermion algebra and Stein interpolation. Moreover, we will also use our  computation of the completely bounded norm $\norm{T}_{\cb, \L^1(\scr{C}(H)) \to \L^p(\scr{C}(H))}$ and some results of \cite{Arh24b}.
% similar thing happen for deciding if the Holevo capacity is bigger than a constant.

\paragraph{Structure of the Paper}
This paper is structured as follows. Section \ref{sec-preliminaries} provides the necessary background and revisits some notations. It also reviews key results that are necessary to our paper. In Section \ref{Sec-multipliers}, we introduce some notions of radials multipliers on fermion algebras and on discrete hypercubes. In Section \ref{sec-ergodic-action}, we describe an ergodic action of the discrete hypercube on the corresponding fermion algebra which is central in our approach. Section \ref{sec-completely-bounded-multipliers} is dedicated to the computation of some completely bounded norms of our multipliers.  We also characterize the complete positivity of the radial multipliers. In Section \ref{entropy-cb-radial-multipliers}, we calculate the completely bounded minimum output entropy. In Section \ref{summing-radial-multipliers}, we describe the completely $p$-summing norm of the radial multipliers. We deduce the computation of the entanglement-assisted classical capacity in Theorem \ref{th-capacity-assisted}, which is our main result. Finally, in Section \ref{sec-future}, we discuss other future developments.

%%%%%%%%%%%%%%%%%%%%%%%%%%%%%%%%%%%%%%%%%%%%%%%%%%%%%%%%%%%%%%%%%%%%%%%%%%%%%%%%%%%%%%%%%%%%%%%%%
\section{Preliminaries}
\label{sec-preliminaries}

\paragraph{Operator spaces} We refer to the books \cite{BLM04}, \cite{EfR00}, \cite{Pau02} and \cite{Pis2} for background on operators spaces and completely bounded maps. An operator space $E$ is a complex Banach space together with a matrix norm $\norm{\cdot}_{\M_n(E)}$ on each matrix space $\M_n(E)$, where $n \geq 1$, satisfying the conditions
$$
\norm{x \oplus y}_{\M_{k+l}(E)} 
= \max \big\{\norm{x}_{\M_k(E)},\norm{y}_{\M_l(E)} \big\}
\quad \text{and} \quad
\norm{\alpha y \beta}_{\M_k(E)} 
\leq \norm{\alpha}_{\M_{k,l}} \norm{y}_{\M_l(E)} \norm{\beta}_{\M_{l,k}}
$$
for any $x \in \M_k(E)$, $y \in \M_l(E)$, $\alpha \in \M_{k,l}$, $\beta \in \M_{l,k}$ and any integers $k,l \geq 1$. When $E$ and $F$ are two operator spaces, a linear map $T \co E \to F$ is completely bounded if
$$
\norm{T}_{\cb,E \to F}
\ov{\mathrm{def}}{=} \sup_{n\geq 1} \norm{\Id_{\M_n} \ot u}_{\M_n(E) \to \M_n(F)}
$$
is finite. We denote by $\CB(E,F)$ the space of all completely bounded maps. It is known that $\CB(E,F)$ is an operator space for the structure corresponding to the isometric identifications $\M_n(\CB(E,F))=\CB(E,\M_n(F))$ for each integer $n \geq 1$. The dual operator space of $E$ is $E^* \ov{\mathrm{def}}{=} \CB(E,\mathbb{C})$. If $E$ and $F$ are operator spaces then the adjoint map $\CB(E,F) \to \CB(F^*,E^*)$, $T \mapsto T^*$ is a complete isometry. Moreover, we have by \cite[Corollary 7.1.5 p.~128]{EfR00} a completely isometric isomorphism
\begin{equation}
\label{ot-proj-CB}
(E \otp F)^* 
=\CB(E,F^*),
\end{equation}
where $\otp$ denotes the operator space projective tensor product. If the operator spaces $E$ and $F$ are in addition \textit{finite-dimensional}, we have a completely isometric isomorphism
\begin{equation}
\label{ot-proj-dual}
(E \otp F)^*
=E^* \ot_{\min} F^*,
\end{equation}
where $\ot_{\min}$ denotes the minimal tensor product. For an operator space $E$, note that its opposite operator space $E^\op$ is defined by the same space $E$ but by the norms $\norm{\cdot}_{\M_n(E^\op)}$ defined by $\norm{[x_{ij}]}_{\M_n(E^\op)} \ov{\mathrm{def}}{=} \norm{[x_{ji}]}_{\M_n(E)}$ for any matrix $[x_{ij}]$ in $\M_n(E)$. If $\cal{A}$ is a $\mathrm{C}^*$-algebra then \cite[1.2.3 p.~5]{BLM04} the $*$-algebra $\M_n(\cal{A})$ has a unique norm with respect to which it is a $\mathrm{C}^*$-algebra. With respect to these matrix norms, $\cal{A}$ is an operator space. It is known that the matrix norms on $\cal{A}^\op$ coincide with the canonical matrix norms on the <<opposite>> $\mathrm{C}^*$-algebra, which is $\cal{A}$ endowed with its reversed multiplication.

%A dual operator space $E$ is said to have the dual slice mapping property \cite[p.~48]{BLM04} if 
%\begin{equation}
%\label{normal-minimal}
%E \otvn F 
%= \CB(F_*,E)
%\end{equation}
%holds for any dual operator space $F$, where $E \otvn F$ is the normal minimal tensor product, i.e.~the weak* closure of $E \ot_{\min} F$ in the dual space $\CB(F_*,E)$. By \cite[Theorem 11.2.5 p.~201]{EfR00}, this is equivalent to $E_*$ possessing the <<operator space approximation property>>.

%\begin{equation}
%\label{ot-min-CB}
%X \ot_{\min} Y 
%= \CB(Y^*, X)
%\end{equation}

\paragraph{Noncommutative $\L^p$-spaces} We direct readers to the survey \cite{PiX03} and references therein for information on noncommutative $\L^p$-spaces. Let $\cal{M}$ be a finite von Neumann algebra equipped with a normal finite faithful trace $\tau$. For any $1 \leq p < \infty$, the noncommutative $\L^p$-space $\L^p(\cal{M})$ is defined as the completion of the von Neumann algebra $\cal{M}$ for the norm
\begin{equation}
\label{norm-Lp}
\norm{x}_{\L^p(\cal{M})}
\ov{\mathrm{def}}{=} \tau(|x|^p)^{\frac{1}{p}}, \quad x \in \cal{M},
\end{equation}
where $|x| \ov{\mathrm{def}}{=} (x^*x)^{^{\frac{1}{2}}}$. We sometimes write $\L^\infty(\cal{M}) \ov{\mathrm{def}}{=} \cal{M}$. The trace $\tau$ induces a canonical continuous embedding $j \co \cal{M} \to \cal{M}_*$  of the von Neumann algebra $\cal{M}$ into its predual $\cal{M}_*$ defined by 
$$
\la j(x), y \ra_{\cal{M}_*,\cal{M}} 
\ov{\mathrm{def}}{=} \tau(xy), \quad x,y \in \cal{M}.
$$ 
Moreover, the previous map $j\co \cal{M} \to \cal{M}_*$ extends to a canonical isometry $j_1 \co \L^1(\cal{M})\to \cal{M}_*$, which allows us to identify the Banach space $\L^1(\cal{M})$ with the predual $\cal{M}_*$ of the von Neumann algebra $\cal{M}$.

With the embedding $j \co \cal{M} \to \cal{M}_*$, $(\cal{M},\cal{M}_*)$ is a compatible couple of Banach spaces in the context of complex interpolation theory \cite{BeL76}. For any $1 \leq p \leq \infty$, we have by \cite[p.~139]{Pis2} the isometric interpolation formula
\begin{equation}
\label{Lp-interpolation}
\L^p(\cal{M})
= (\cal{M},\cal{M}_*)_{\frac{1}{p}}.
\end{equation}
We equip the Banach space $\L^1(\cal{M})$ with the operator space structure induced by the one of $\cal{M}_*^\op$ and the isometric isomorphism $j_1 \co \L^1(\cal{M}) \to \cal{M}_*$. This means that $j_1 \co \L^1(\cal{M}) \to \cal{M}_*^\op$ is a complete isometry.

Suppose that $1 < p < \infty$. There exists a canonical operator space structure on each noncommutative $\L^p$-space $\L^p(\cal{M})$, that can be equally introduced using complex interpolation theory. Recall that this structure is defined by
\begin{equation*}
\label{}
\M_n(\L^p(\cal{M}))
\ov{\mathrm{def}}{=} (\M_n(\cal{M}),\M_n(\cal{M}_*^\op))_{\frac{1}{p}}, \quad n \geq 1,
\end{equation*}
where we use the opposed operator space structure on the predual $\cal{M}_*$. 

Finally, we recall H\"older's inequality. If $1 \leq p,q,r \leq \infty$ satisfy $\frac{1}{r}=\frac{1}{p}+\frac{1}{q}$ then
\begin{equation}
\label{Holder}
\norm{xy} _{\L^r(\cal{M})}
\leq \norm{x}_{\L^p(\cal{M})} \norm{y}_{\L^q(\cal{M})},\qquad x\in \L^p(\cal{M}), y \in \L^q(\cal{M}).
\end{equation}
Conversely, for any $z \in \L^r(\cal{M})$, there exist $x \in \L^p(\cal{M})$ and $y \in \L^q(\cal{M})$ such that 
\begin{equation}
\label{inverse-Holder}
z
=xy
\quad \text{with} \quad 
\norm{z} _{\L^r(\cal{M})}
=\norm{x}_{\L^p(\cal{M})} \norm{y}_{\L^q(\cal{M})}.
\end{equation}

\paragraph{Conditional expectations} We refer to the book \cite[Chapter II]{Str81} for more information. Recall that a positive map $T \co A \to A$ on a $\mathrm{C}^*$-algebra $A$ is said faithful if $T(x)=0$ for some positive element $x \in A$ implies $x=0$. Let $B$ be a $\mathrm{C}^*$-subalgebra of a $\mathrm{C}^*$-algebra $A$. A linear map $\E \co A \to A$ is called a conditional expectation on $B$ if it is a positive projection \cite[p.~116]{Str81} of range $B$ which is $B$-bimodular, that is 
\begin{equation}
\label{bimodular}
\E(xyz)
=x\E(y)z, \quad  y \in A, x,z \in B.
\end{equation}
Such a map is completely positive \cite[Proposition p.~118]{Str81}. The following existence result is \cite[Corollary p.~134]{Str81}.

\begin{prop}
\label{prop-existence-conditional-expectation}
Let $\cal{M}$ be a von Neumann algebra equipped with a normal finite faithful trace $\tau$. For any unital von Neumann subalgebra $\cal{N}$ of $\cal{M}$, there exists a unique faithful normal conditional expectation $\E \co \cal{M} \to \cal{M}$ on $\cal{N}$ such that $\tau \circ \E = \tau$.
\end{prop}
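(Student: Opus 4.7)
\emph{Proof Plan.} The plan is to construct $\E$ via the orthogonal projection on the $\L^2$-level, then transfer this to a bounded map on $\cal{M}$ using the tracial property, and finally to verify each of the required conditional-expectation properties in turn.

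\textbf{Step 1 (Hilbert space set-up).} I would introduce the Hilbert space $\L^2(\cal{M},\tau)$ with inner product $\la x,y\ra = \tau(y^*x)$. Because $\tau|_\cal{N}$ is again a normal finite faithful trace on $\cal{N}$, the inclusion $\cal{N}\subset\cal{M}$ extends to an isometric inclusion $\L^2(\cal{N})\subset\L^2(\cal{M})$. Let $p\co \L^2(\cal{M})\to\L^2(\cal{M})$ denote the orthogonal projection onto the closed subspace $\L^2(\cal{N})$.

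\textbf{Step 2 (The key boundedness claim).} The main point is to show that $p$ sends the subspace $\cal{M}\subset\L^2(\cal{M})$ into $\cal{N}\subset\L^2(\cal{N})$. Because $\tau$ is a trace, both left and right multiplications by elements of $\cal{M}$ extend to bounded operators on $\L^2(\cal{M})$, and the subspace $\L^2(\cal{N})$ is invariant under left and right multiplication by $\cal{N}$; consequently $p$ commutes with both the left and the right actions of $\cal{N}$. For $x\in\cal{M}$ and any $y\in\cal{N}$, the identity $p(x)y = p(xy)$ combined with the estimate $\norm{p(xy)}_{\L^2(\cal{N})}\leq \norm{xy}_{\L^2(\cal{M})}\leq \norm{x}_\infty\norm{y}_{\L^2(\cal{N})}$ shows that left multiplication by $p(x)$ is bounded on $\L^2(\cal{N})$, so $p(x)\in\cal{N}$. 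Define $\E(x)\ov{\mathrm{def}}{=}p(x)$ for $x\in\cal{M}$. This is the step I expect to be the main obstacle, since it is where the tracial hypothesis is genuinely used; for a non-tracial state the analogous statement is equivalent to Takesaki's theorem on the invariance of $\cal{N}$ under the modular group.

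\textbf{Step 3 (Verification of the conditional expectation properties).} With $\E$ in hand, the remaining properties follow in a straightforward manner:
\begin{itemize}
\item $\E$ is a projection with range $\cal{N}$ because $p$ is and $p|_\cal{N}=\Id_\cal{N}$.
\item The bimodularity \eqref{bimodular} is a direct consequence of the commutation of $p$ with the left and right actions of $\cal{N}$ on $\L^2(\cal{M})$.
\item Trace preservation follows from $\tau(\E(x)) = \la \E(x),1\ra = \la p(x),1\ra = \la x,p(1)\ra = \la x,1\ra = \tau(x)$ for every $x\in\cal{M}$, using that $1\in\L^2(\cal{N})$ so $p(1)=1$.
\item Positivity: $\E$ is an idempotent linear map with $\norm{\E(x)}_{\L^2(\cal{M})}\leq\norm{x}_{\L^2(\cal{M})}$ and $\E(1)=1$, so by a standard density plus Cauchy–Schwarz argument $\norm{\E(x)}_\infty\leq\norm{x}_\infty$, and hence by Tomiyama's theorem $\E$ is positive (in fact completely positive).
\item Normality is inherited from the normality of $\tau$ (equivalently, $\E$ is the pre-adjoint of a map on the predual), and faithfulness follows from trace preservation and faithfulness of $\tau$: if $x\geq 0$ and $\E(x)=0$ then $\tau(x)=\tau(\E(x))=0$, so $x=0$.
\end{itemize}

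\textbf{Step 4 (Uniqueness).} If $\E'$ is another normal conditional expectation on $\cal{N}$ with $\tau\circ\E'=\tau$, then for every $x\in\cal{M}$ and $y\in\cal{N}$ we have $\tau(y\E'(x))=\tau(\E'(yx))=\tau(yx)=\tau(y\E(x))$ by bimodularity and trace invariance. Since $\E(x),\E'(x)\in\cal{N}$ and $\tau|_\cal{N}$ is faithful, this forces $\E'(x)=\E(x)$, completing the proof.
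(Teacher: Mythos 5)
The paper does not prove this statement at all: it is quoted verbatim from the reference given just above it ([Str81, Corollary p.~134]) as a known preliminary. Your argument is the standard Umegaki construction (orthogonal projection at the $\L^2$-level, the boundedness of left multiplication by $p(x)$ to conclude $p(x)\in\cal{N}$, Tomiyama for positivity, and the non-degeneracy of $\tau|_{\cal{N}}$ for uniqueness), and it is correct; the only steps left as assertions are the operator-norm contractivity $\norm{\E(x)}_{\infty}\leq\norm{x}_{\infty}$ and the normality of $\E$, both of which are routine to fill in (e.g.\ $|\tau(z^*\E(x)y)|=|\tau(z^*xy)|\leq\norm{x}_\infty\norm{y}_{\L^2}\norm{z}_{\L^2}$ for the former).
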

In this situation, it is well-known \cite[Theorem 7 p.~151]{Kad04} that 
\begin{equation}
\label{Esp-dual}
\tau(x\E(y))
=\tau(xy), \quad x \in \cal{N}, y \in \cal{M}.
\end{equation}

\paragraph{Vector-valued noncommutative $\L^p$-spaces} 
Pisier pioneered the study of vector-valued noncommutative $\L^p$-spaces, focusing on the case where the underlying von Neumann algebra $\cal{M}$ is approximately finite-dimensional and comes with a normal semifinite faithful trace $\tau$, as outlined in \cite{Pis98}. Recall that a von Neumann algebra $\cal{M}$ is approximately finite-dimensional if it admits an increasing net of finite-dimensional $*$-subalgebras whose union is weak* dense. Within this framework, Pisier demonstrated that for any given operator space $E$, the Banach spaces underlying the operator spaces
\begin{equation}
\label{def-Lp-vec-1}
\L^\infty_0(\cal{M},E) \ov{\mathrm{def}}{=}\cal{M} \ot_{\min} 
E
\quad \text{and} \quad
\L^1(\cal{M},E) \ov{\mathrm{def}}{=} \cal{M}_*^{\op} \widehat{\ot} E
\end{equation}
can be embedded into a common topological vector space, as detailed in \cite[pp.~37-38]{Pis98}. 
This result, relying crucially on the approximate finite-dimensionality of the von Neumann algebra $\cal{M}$, allows everyone to use complex interpolation theory in order to define vector-valued noncommutative $\L^p$-spaces by letting
$$
\L^p(\cal{M},E)
\ov{\mathrm{def}}{=} (\L^\infty_0(\cal{M},E),\L^1(\cal{M},E))_{\frac{1}{p}}
$$
for any $1 < p < \infty$. The notation $\L^\infty(\cal{M},E)$ for the space $\L^\infty_0(\cal{M},E)$ is a convenient notation sometimes used to simplify the statement of some results. 

\begin{example} \normalfont
If $\cal{M}$ is the von Neumann algebra $\B(\ell^2)$ equipped with its canonical normal semifinite faithful trace, the space $\L^p(\cal{M},E)$ is the vector-valued Schatten space $S^p(E)$.
\end{example}

Let $z \in \cal{M} \ot E$. There exist elements $(c_1,\ldots,c_n)$ in $\cal{M}$ and elements $(x_1,\ldots,x_n)$ in $E$ such that $y=\sum_{k=1}^{n} c_k \ot x_k$. Then for any $a \in \L^{2p}(\cal{M})$ and any $b \in \L^{2p}(\cal{M})$, we will write $a \cdot z\cdot b$ for the element belonging to the space $\L^p(\cal{M}) \ot E$ defined by
\begin{equation}
\label{product-cdot}
a \cdot z \cdot b
\ov{\mathrm{def}}{=}\sum_{k=1}^n a c_k b \ot x_k.
\end{equation}
By \cite[Theorem 3.8 p.~43]{Pis98}, if the trace $\tau$ is finite then the norm of the Banach space $\L^p(\cal{M},E)$ for any $1 \leq p \leq \infty$ admits the following description. For any $x \in \cal{M} \ot E$, we have the variational formula
\begin{equation}
\label{LpME-inf}
\norm{x}_{\L^p(\cal{M},E)}
=\inf_{x=a\cdot z \cdot b} \norm{a}_{\L^{2p}(\cal{M})}\norm{z}_{\cal{M} \ot_{\min} E}\norm{b}_{\L^{2p}(\cal{M})}.
\end{equation}
Moreover, the subspace $\cal{M} \ot E$ is dense in the Banach space $\L^p(\cal{M},E)$ if the trace $\tau$ is finite and equal to the space $\L^p(\cal{M},E)$ if the von Neumann algebra $\cal{M}$ is finite-dimensional.

By \cite[Theorem 4.1 p.~46]{Pis98}, for any $1 < p < \infty$ we have a completely isometric inclusion
\begin{equation}
\label{inclusion-dual}
\L^{p^*}(\cal{M},E^*) 
\subset \L^{p}(\cal{M},E)^*,
\end{equation}
where $p^* \ov{\mathrm{def}}{=} \frac{p}{p-1}$. Moreover, if $\cal{N}$ is another approximately finite-dimensional von Neumann algebra equipped with a normal semifinite faithful trace, recall Fubini's theorem \cite[(3.6) p.~40]{Pis98}, which states that for any $1 \leq p<\infty$, there exists a canonical completely isometric isomorphism
\begin{equation}
\label{Fubini}
\L^p(\cal{M},\L^p(\cal{N}))
=\L^p(\cal{M} \otvn \cal{N}),
\end{equation}
where the von Neumann algebra $\cal{M} \otvn \cal{N}$ is endowed with the product of traces.

We need the folklore formula \eqref{LpME-sup} which is stated without proof in \cite[p.~3]{JuP10} for approximately finite-dimensional von Neumann algebras. We begin with a straightforward intermediate result which is a variant of H\"older's inequality.

\begin{prop}
\label{prop-ineg}
%Soient $1 \leq p \leq \infty$, $n$ un entier strictement positif et $E$ un espace d'opérateurs. Si $ x \in \L^{p}(M_{n},E) $ et $a,b \in M_{n}$ alors
%$$
%\norm{a \cdot x \cdot b}_{\L^{p}(\cal{M},E)} 
%\leq \norm{a}_{\L^{\infty}(M_{n})}\norm{x}_{\L^{p}(M_{n},E)}\norm{b}_{\L^{\infty}(M_{n})}
%$$
%Plus généralement, si  
Let $E$ be an operator space. Suppose that $\cal{M}$ is a finite-dimensional von Neumann algebra equipped with a finite faithful trace. If $1 \leq r \leq p \leq \infty$ and $1 \leq q \leq \infty$ satisfy $\frac{1}{r} = \frac{1}{p} + \frac{1}{q}$ then for any $x \in \L^{p}(\cal{M},E)$ and any $a,b \in \L^{2q}(\cal{M})$ we have
\begin{equation}
\label{ine-vector}
\norm{a \cdot x \cdot b}_{\L^{r}(\cal{M},E)}
\leq
\norm{a}_{\L^{2q}(\cal{M})} \norm{x}_{\L^{p}(\cal{M},E)} \norm{b}_{\L^{2q}(\cal{M})}.
\end{equation}
\end{prop}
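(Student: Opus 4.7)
The plan is to reduce the claimed vector-valued H\"older estimate to the scalar H\"older inequality \eqref{Holder} via Pisier's variational formula \eqref{LpME-inf}. Since $\cal{M}$ is finite-dimensional, the remark following \eqref{LpME-inf} guarantees that $\cal{M} \ot E = \L^p(\cal{M},E)$ as vector spaces, so that \eqref{LpME-inf} is available for every element of $\L^p(\cal{M},E)$ and not just for elementary tensors.

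Given $x \in \L^{p}(\cal{M},E)$ and $\epsi > 0$, I would first apply \eqref{LpME-inf} to produce a factorization $x = \alpha \cdot z \cdot \beta$ with $\alpha,\beta \in \L^{2p}(\cal{M})$, $z \in \cal{M} \ot_{\min} E$ and
$$
\norm{\alpha}_{\L^{2p}(\cal{M})}\, \norm{z}_{\cal{M}\ot_{\min} E}\, \norm{\beta}_{\L^{2p}(\cal{M})}
\leq (1+\epsi)\,\norm{x}_{\L^{p}(\cal{M},E)}.
$$
From the very definition \eqref{product-cdot}, the external multiplication is bilinear and associative in its two scalar slots, so $a \cdot x \cdot b = (a\alpha)\cdot z\cdot (\beta b)$ is again a factorization through $\cal{M} \ot_{\min} E$, now with outer factors in $\L^{2r}(\cal{M})$. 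Doubling the hypothesis $\frac{1}{r} = \frac{1}{p} + \frac{1}{q}$ gives $\frac{1}{2r} = \frac{1}{2p} + \frac{1}{2q}$, so the scalar H\"older inequality \eqref{Holder} yields
$$
\norm{a\alpha}_{\L^{2r}(\cal{M})} \leq \norm{a}_{\L^{2q}(\cal{M})} \norm{\alpha}_{\L^{2p}(\cal{M})},\qquad \norm{\beta b}_{\L^{2r}(\cal{M})} \leq \norm{\beta}_{\L^{2p}(\cal{M})} \norm{b}_{\L^{2q}(\cal{M})}.
$$

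Inserting these two estimates into the infimum formula \eqref{LpME-inf} for the norm of $a\cdot x \cdot b$ in $\L^{r}(\cal{M},E)$ gives
$$
\norm{a\cdot x\cdot b}_{\L^{r}(\cal{M},E)}
\leq (1+\epsi)\,\norm{a}_{\L^{2q}(\cal{M})}\, \norm{x}_{\L^{p}(\cal{M},E)}\, \norm{b}_{\L^{2q}(\cal{M})},
$$
and letting $\epsi \to 0$ delivers \eqref{ine-vector}. There is no serious obstacle in this route: the only point that requires a moment's care is to match the various H\"older exponents (namely that $\frac{1}{2r} = \frac{1}{2p} + \frac{1}{2q}$ is exactly what is needed to absorb $a,b$ into the outer $\L^{2p}$ factors) and to check bilinearity of the action \eqref{product-cdot}, both of which are bookkeeping rather than substance.
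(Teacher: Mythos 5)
Your argument is correct and is essentially the paper's own proof: both take a factorization $x = c\cdot v\cdot d$ from the variational formula \eqref{LpME-inf}, absorb $a$ and $b$ into the outer factors via H\"older with $\frac{1}{2r}=\frac{1}{2p}+\frac{1}{2q}$, and conclude by optimizing over factorizations. The only cosmetic difference is that you pick an $\epsi$-optimal factorization and let $\epsi\to 0$, whereas the paper works with an arbitrary factorization and takes the infimum at the end.
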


\begin{proof}
%First, we suppose that $p<\infty$. 
Let $x = c \cdot v \cdot d$ be a factorization of $x$ with $c \in \L^{2p}(\cal{M})$, $v \in \cal{M} \ot_{\min} E$ and $d \in \L^{2p}(\cal{M})$. We have $a \cdot x \cdot b \ov{\eqref{product-cdot}}{=} (a c) \cdot v \cdot (d  b)$. So we have a factorization of the element $a \cdot x \cdot b$ with $a c \in \L^{2r}(\cal{M})$, $v \in \cal{M} \ot_{\min} E$ and $d b \in \L^{2r}(\cal{M})$ since we have $\frac{1}{2r} = \frac{1}{2p} + \frac{1}{2q}$. With H\"older's inequality, we deduce that
\begin{align*}
\MoveEqLeft
\norm{a \cdot x \cdot b}_{\L^{r}(\cal{M},E)} 
\ov{\eqref{LpME-inf}}{\leq} 
\norm{ac}_{\L^{2r}(\cal{M})} \norm{v}_{\cal{M} \ot_{\min} E} \norm{db}_{\L^{2r}(\cal{M})} \\
&\ov{\eqref{Holder}}{\leq}  \norm{a}_{\L^{2q}(\cal{M})} \norm{c}_{\L^{2p}(\cal{M})} \norm{v}_{\cal{M} \ot_{\min} E}
\norm{d}_{\L^{2p}(\cal{M})} \norm{b}_{\L^{2q}(\cal{M})}.
\end{align*}
Taking the infimum, we obtain \eqref{ine-vector}.
%Now, if $p=\infty$ and $r < \infty$, we have $q=r$. In this case, we have
%$$
%\norm{a \cdot x \cdot b}_{\L^{r}(\cal{M},E)} 
%\ov{\eqref{LpME-inf}}{\leq} \norm{a}_{\L^{2r}(\cal{M})}\norm{x}_{_{\cal{M} \ot_{\min} E}} \norm{b}_{\L^{2r}(\cal{M})}.
%$$
%The case $p=r=\infty$ is not more difficult since the map $\cal{M} \to \cal{M}$, $x\mapsto axb$ is completely bounded.
\end{proof}

Now, we prove the desired formula.

\begin{prop}
\label{prop-facto-sup}
Let $\cal{M}$ and $\cal{N}$ be two finite-dimensional von Neumann algebras equipped with finite faithful traces. Suppose that $1 \leq q \leq p \leq \infty$ and $1 \leq r \leq \infty$ such that $\frac{1}{q}=\frac{1}{p}+\frac{1}{r}$. For any $y \in \L^{p}(\cal{M},\L^{q}(\cal{N}))$, we have
\begin{equation}
\label{LpME-sup}
\norm{y}_{\L^{p}(\cal{M},\L^{q}(\cal{N}))}
= \sup_{\norm{a}_{\L^{2r}(\cal{M})},\norm{b}_{\L^{2r}(\cal{M})} \leq 1} \norm{(a \ot 1) y (b\ot 1)}_{\L^{q}(\cal{M},\L^{q}(\cal{N}))}.
\end{equation}
\end{prop}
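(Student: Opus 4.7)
The plan is to establish the two inequalities separately.

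For the $\geq$ direction, given $a, b \in \L^{2r}(\cal{M})$ with $\norm{a}_{\L^{2r}(\cal{M})} \leq 1$ and $\norm{b}_{\L^{2r}(\cal{M})} \leq 1$, I would apply Proposition \ref{prop-ineg} with operator space $E = \L^q(\cal{N})$ and with its exponents specialized so that its ``$r$'' becomes our $q$, its ``$p$'' becomes our $p$, and its ``$q$'' becomes our $r$; this is legitimate since $\frac{1}{q} = \frac{1}{p} + \frac{1}{r}$. The estimate gives $\norm{a \cdot y \cdot b}_{\L^q(\cal{M}, \L^q(\cal{N}))} \leq \norm{y}_{\L^p(\cal{M}, \L^q(\cal{N}))}$. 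Fubini's theorem \eqref{Fubini} identifies $a \cdot y \cdot b$ with $(a \ot 1) y (b \ot 1)$ inside $\L^q(\cal{M} \otvn \cal{N})$, so taking the supremum over $a, b$ yields this direction.

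For the $\leq$ direction, I would argue by duality. Since $\cal{M}$ and $\cal{N}$ are finite-dimensional, \eqref{inclusion-dual} upgrades to an isometric isomorphism $\L^p(\cal{M}, \L^q(\cal{N}))^* = \L^{p^*}(\cal{M}, \L^{q^*}(\cal{N}))$. Let $C$ denote the right-hand side of \eqref{LpME-sup}. For $\epsilon > 0$ and an element $z$ of the unit ball of $\L^{p^*}(\cal{M}, \L^{q^*}(\cal{N}))$, the factorization \eqref{LpME-inf} yields $e, f \in \L^{2p^*}(\cal{M})$ and $w \in \cal{M} \ot_{\min} \L^{q^*}(\cal{N})$ with $z = e \cdot w \cdot f$ and $\norm{e}_{\L^{2p^*}} \norm{w} \norm{f}_{\L^{2p^*}} \leq 1 + \epsilon$.

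The relation $\frac{1}{q} = \frac{1}{p} + \frac{1}{r}$ is equivalent to $\frac{1}{2p^*} = \frac{1}{2r} + \frac{1}{2q^*}$, so the reverse Hölder formula \eqref{inverse-Holder} provides optimal splittings $e = e_1 e_2$ and $f = f_2 f_1$ with $e_1, f_1 \in \L^{2r}(\cal{M})$, $e_2, f_2 \in \L^{2q^*}(\cal{M})$, preserving norms. Trace cyclicity on elementary tensors gives the pairing identity $\la y, z \ra = \la f y e, w \ra$, in which $fye \in \L^1(\cal{M}, \L^q(\cal{N}))$ pairs against $w \in \cal{M} \ot_{\min} \L^{q^*}(\cal{N})$ via the duality coming from \eqref{inclusion-dual} at $p = 1$. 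Writing $fye = f_2 (f_1 y e_1) e_2$, the definition of $C$ gives $\norm{f_1 y e_1}_{\L^q(\cal{M}, \L^q(\cal{N}))} \leq \norm{f_1}_{\L^{2r}} \norm{e_1}_{\L^{2r}} C$, and a second application of Proposition \ref{prop-ineg} yields $\norm{fye}_{\L^1(\cal{M}, \L^q(\cal{N}))} \leq \norm{f_2}_{\L^{2q^*}} \norm{f_1 y e_1}_{\L^q(\cal{M}, \L^q(\cal{N}))} \norm{e_2}_{\L^{2q^*}}$. Combining with the optimal factorizations of $e$ and $f$ produces $|\la y, z\ra| \leq (1 + \epsilon) C$; taking the supremum over $z$ and letting $\epsilon \to 0$ concludes.

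The delicate step is the cyclicity identity $\la y, z \ra = \la f y e, w \ra$: one must verify that the ambient $(\L^p, \L^{p^*})$ duality restricts along the factorization $z = e \cdot w \cdot f$ to the $(\L^1, \L^\infty)$ duality against $w$. In finite dimensions this reduces to bookkeeping on elementary tensors $y = \sum y_i \ot x_i$, $w = \sum w_k \ot \xi_k$ together with cyclicity of $\tau_{\cal{M}}$, after which the remaining estimates are all routine Hölder-type bounds supplied by Proposition \ref{prop-ineg}.
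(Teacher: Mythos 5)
Your proposal is correct and follows essentially the same route as the paper: the easy inequality via Proposition \ref{prop-ineg}, and the converse via the duality $\L^{p}(\cal{M},\L^{q}(\cal{N}))^* = \L^{p^*}(\cal{M},\L^{q^*}(\cal{N}))$, the factorization \eqref{LpME-inf} of a norming functional, the reverse H\"older splitting of the $\L^{2p^*}$ factors into $\L^{2r}\cdot\L^{2q^*}$ pieces, and the cyclicity of the pairing. The only cosmetic difference is that you split $e$ and $f$ before estimating the $\L^1$-norm while the paper estimates first and splits afterwards; the bounds combine identically.
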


\begin{proof}
If $a,b$ are elements of $\L^{2r}(\cal{M})$ satisfying $\norm{a}_{\L^{2r}(\cal{M})}$ and $\norm{b}_{\L^{2r}(\cal{M})} \leq 1$, we have
\begin{align*}
\MoveEqLeft
\norm{(a\ot 1) y (b \ot 1)}_{\L^{q}(\cal{M},\L^{q}(\cal{N}))}
 \ov{\eqref{ine-vector}}{\leq}   \norm{a}_{\L^{2r}(\cal{M})} \norm{y}_{\L^{p}(\cal{M},\L^{q}(\cal{N}))} \norm{b}_{\L^{2r}(\cal{M})} 
\leq \norm{y}_{\L^{p}(\cal{M},\L^{q}(\cal{N}))}. 
\end{align*}
Taking the supremum, we infer that
$$
\sup_{\norm{a}_{\L^{2r}(\cal{M})},\norm{b}_{\L^{2r}(\cal{M})} \leq 1}  \norm{(a \ot 1) y (b \ot 1)}_{\L^{q}(\cal{M},\L^{q}(\cal{N}))} 
\leq \norm{y}_{\L^{p}(\cal{M},\L^{q}(\cal{N}))}.
$$
Now, we will show the reverse inequality. Since the spaces are finite-dimensional, the inclusion of \eqref{inclusion-dual} for $E=\L^{q}(\cal{N})$ becomes the isometric identity for any $1 \leq p,q \leq \infty$
\begin{equation}
\label{equality-inter-9}
\L^{p}(\cal{M},\L^{q}(\cal{N}))^*  
= \L^{p^*}(\cal{M},\L^{q^*}(\cal{N})).
\end{equation}
(the case $p=1$ and $p=\infty$ are consequences of \eqref{ot-proj-dual} and \eqref{def-Lp-vec-1}). In particular, we have
$$
\L^{1}(\cal{M},\L^{q}(\cal{N}))^*
%\ov{\eqref{def-Lp-vec-1}}{=} (\cal{M}_*^\op \otp \L^{q}(\cal{N}))^*
\ov{\eqref{def-Lp-vec-1}}{=} \cal{M} \ot_{\min} \L^{q^*}(\cal{N}).
$$
%In all cases, we have
%$$
%(\L^{p}(\cal{M},\L^{q}(\cal{N}))^*
%=\L^{p^*}(\cal{M},\L^{q^*}(\cal{N}) ).
%$$
Let $y \in \L^{p}(\cal{M},\L^{q}(\cal{N}))$. By duality, there exists $x \in \L^{p^*}(\cal{M},\L^{q^*}(\cal{N}) )$ such that $\norm{y}_{\L^{p}(\cal{M},\L^{q}(\cal{N}))} = |\la x , y \ra|$ with $ \norm{x}_{\L^{p^*}(\cal{M},\L^{q^*}(\cal{N}) )} \leq 1$. Let $\epsi >0$. According to \eqref{LpME-inf}, we can write $x=(a \ot 1) z (b \ot 1)$ with $a , b \in B_{\L^{2p^*}(\cal{M})}$, $z \in \cal{M} \ot_{\min} \L^{q^*}(\cal{N})$ such that 
\begin{equation}
\label{div-120}
\norm{z}_{\cal{M} \ot_{\min} \L^{q^*}(\cal{N})} 
\leq \norm{x}_{\L^{p^*}(\cal{M},\L^{q^*}(\cal{N}) )}+\epsi 
\leq 1+\epsi.
\end{equation}
We obtain
\begin{align}
\MoveEqLeft
\label{inter-765}
\norm{y}_{\L^{p}(\cal{M},\L^{q}(\cal{N}))} 
= |\la x , y \ra|
= \big|\big\la (a \ot 1) z (b \ot 1) , y \big\ra_{\L^{p^*}(\cal{M},\L^{q^*}(\cal{N}) ),\L^{p}(\cal{M},\L^{q}(\cal{N}))}\big| \nonumber\\
%&= \sup \Big\{  |\la  v  , a^{T} \cdot y \cdot b^{T} \ra| : a , b \in B_{\L^{2p^*}(\cal{M})}, v \in B_{\cal{M} \ot_{\min} E^{*}}  \Big\} \\
&=  \big|\big\la  z  , (a \ot 1) y (b \ot 1) \big\ra_{\cal{M} \ot_{\min} \L^{q^*}(\cal{N}),\L^{1}(\cal{M},\L^{q}(\cal{N}))} \big|  \nonumber\\
&\leq \norm{z}_{\cal{M} \ot_{\min} \L^{q^*}(\cal{N})} \norm{(a \ot 1) y (b \ot 1)}_{\L^{1}(\cal{M},\L^{q}(\cal{N}))}\nonumber \\
&\ov{\eqref{div-120}}{\leq} (1+\epsi)  \norm{(a \ot 1) y (b \ot 1)}_{\L^{1}(\cal{M},\L^{q}(\cal{N}))}. \nonumber
\end{align}
%Consequently, we obtain by duality
%\begin{align}
%\MoveEqLeft
%\label{inter-765}
%\norm{y}_{\L^{p}(\cal{M},\L^{q}(\cal{N}))} 
%= \sup_{\norm{x}_{\L^{p^*}(\cal{M},\L^{q^*}(\cal{N}) )} \leq 1}   |\la x , y \ra| \\
%&\ov{\eqref{LpME-inf}}{=} \sup \Big\{ \big|\big\la (a \ot 1) v (b \ot 1) , y \big\ra\big| : a , b \in B_{\L^{2p^*}(\cal{M})},\  v \in B_{\cal{M} \ot_{\min} \L^{q^*}(\cal{N})}  \Big\} \nonumber\\
%%&= \sup \Big\{  |\la  v  , a^{T} \cdot y \cdot b^{T} \ra| : a , b \in B_{\L^{2p^*}(\cal{M})}, v \in B_{\cal{M} \ot_{\min} E^{*}}  \Big\} \\
%&= \sup \Big\{  \big|\big\la  v  , (a \ot 1) y (b \ot 1) \big\ra\big| : a , b \in B_{\L^{2p^*}(\cal{M})}, v \in B_{\cal{M} \ot_{\min} \L^{q^*}(\cal{N})}  \Big\} \nonumber\\
%&= \sup_{\norm{a}_{\L^{2p^*}(\cal{M})},\norm{b}_{\L^{2p^*}(\cal{M})} \leq 1}  \norm{(a \ot 1) y (b \ot 1)}_{\L^{1}(\cal{M},\L^{q}(\cal{N}))}. \nonumber
%\end{align}
Since $\frac{1}{p^*}=\frac{1}{q^*}+\frac{1}{r}$, with \eqref{inverse-Holder} we can write $a=a'a''$ et $b=b''b'$ with $a',b' \in B_{\L^{2q^{*}}(\cal{M})}$ et $a'',b'' \in B_{\L^{2r}(\cal{M})}$. We obtain
\begin{align*}
\MoveEqLeft
\norm{y}_{\L^{p}(\cal{M},\L^{q}(\cal{N}))}
%\ov{\eqref{inter-765}}{=} \sup_{\norm{a}_{\L^{2p^*}(\cal{M})},\norm{b}_{\L^{2p^*}(\cal{M})} \leq 1} \norm{(a \ot 1) y (b \ot 1)}_{\L^{1}(\cal{M},\L^{q}(\cal{N}))}   \\
\leq  (1+\epsi)\norm{(a'a'' \ot 1) y (b''b' \ot 1)}_{\L^{1}(\cal{M},\L^{q}(\cal{N}))}  \\
%&  &  \text{car $\frac{1}{r}=\frac{1}{p}-\frac{1}{q}$}\\
&\ov{\eqref{ine-vector}}{\leq}  (1+\epsi) \norm{a'}_{\L^{2q^*}(\cal{M})}  \norm{(a'' \ot 1) y (b'' \ot 1)}_{\L^{q}(\cal{M},\L^{q}(\cal{N}))} \norm{b'}_{\L^{2q^*}(\cal{M})} \\
&\leq (1+\epsi)   \norm{(a'' \ot 1) y (b'' \ot 1)}_{\L^{q}(\cal{M},\L^{q}(\cal{N}))}.
\end{align*} 
The conclusion is obvious.
\end{proof}

In particular, for any $1 \leq q \leq \infty$, we have
\begin{equation}
\label{LinftyLp-norms}
\norm{x}_{\L^\infty(\cal{M},\L^q(\cal{N}))}
=\sup_{\norm{a}_{\L^{2q}(\cal{M})},\norm{b}_{\L^{2q}(\cal{M})} \leq 1} \bnorm{(a \ot 1) x (b \ot 1)}_{\L^q(\cal{M} \otvn \cal{N})}.
\end{equation}
We will use the following result which is proved in a more general framework in \cite{Arh24b}. In this result, the used positive cone of the space $\L^p(\cal{M},\L^1(\cal{N}))$ is the same that the one of the algebra $\cal{M} \ot \cal{N}$.

\begin{lemma}%normal semifinite
Let $\cal{M}$ and $\cal{N}$ be finite-dimensional von Neumann algebras equipped with finite faithful traces. Suppose that $1 \leq p \leq \infty$. For any positive element $x$ in the space $\L^p(\cal{M},\L^1(\cal{N}))$, we have
\begin{equation}
\label{Norm-LpL1-pos}
\norm{x}_{\L^p(\cal{M},\L^1(\cal{N}))}
=\norm{(\Id \ot \tau)(x)}_{\L^p(\cal{M})},
\end{equation}
where $\tau$ is the trace of $\cal{N}$.
\end{lemma}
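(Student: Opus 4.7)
The plan is to reduce everything to Proposition \ref{prop-facto-sup} specialized at $q=1$ and then exploit the positivity of $x$ to turn the resulting bilinear expression into a trace against a single element of $\L^{p^*}(\cal{M})$. Setting $q=1$ forces $r=p^*$, and combining \eqref{LpME-sup} with the Fubini identification \eqref{Fubini} gives
\begin{equation*}
\norm{x}_{\L^p(\cal{M},\L^1(\cal{N}))}
= \sup_{\norm{a}_{\L^{2p^*}(\cal{M})},\norm{b}_{\L^{2p^*}(\cal{M})}\leq 1}
\bnorm{(a\ot 1)\,x\,(b\ot 1)}_{\L^1(\cal{M}\otvn\cal{N})}.
\end{equation*}
Write $\rho\ov{\mathrm{def}}{=}(\Id\ot\tau)(x)\in\L^p(\cal{M})_+$; the goal becomes showing that this supremum equals $\norm{\rho}_{\L^p(\cal{M})}$.

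For the upper bound I would factor $x=x^{1/2}x^{1/2}$ and apply the noncommutative Cauchy--Schwarz inequality in $\L^1(\cal{M}\otvn\cal{N})$, which gives
\begin{equation*}
\bnorm{(a\ot 1)\,x\,(b\ot 1)}_{\L^1}
\leq \bnorm{(a\ot 1)x^{1/2}}_{\L^2}\bnorm{x^{1/2}(b\ot 1)}_{\L^2}.
\end{equation*}
Each factor squared collapses, via the trace identity $(\tau\ot\tau)((a\ot 1)x(a^*\ot 1))=\tau_{\cal{M}}(a\rho a^*)=\tau_{\cal{M}}(\rho\, a^*a)$, to an expression that is controlled by $\norm{\rho}_{\L^p(\cal{M})}\norm{a}_{\L^{2p^*}(\cal{M})}^2$ by H\"older's inequality \eqref{Holder}; the analogous bound for the $b$-factor then yields the desired estimate uniformly in the pair $(a,b)$.

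For the lower bound I would restrict the supremum to the diagonal choice $b=a^*$ with $a\in\L^{2p^*}(\cal{M})$ positive, $\norm{a}_{\L^{2p^*}}\leq 1$. Positivity of $x$ makes $(a\ot 1)x(a^*\ot 1)$ positive in $\L^1(\cal{M}\otvn\cal{N})$, so its $\L^1$-norm is just its trace, which is again $\tau_{\cal{M}}(\rho\, a^*a)$. As $a$ ranges over this set, the element $c=a^*a$ ranges precisely over positive elements of $\L^{p^*}(\cal{M})$ with $\norm{c}_{\L^{p^*}}\leq 1$, so the resulting supremum equals $\norm{\rho}_{\L^p(\cal{M})}$ by H\"older duality for positive elements. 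Combining the two bounds gives \eqref{Norm-LpL1-pos}.

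The main conceptual point (and the only subtle step) is the reduction of a bilinear supremum over $(a,b)$ to the unilateral quadratic supremum over $c=a^*a$; the upper estimate via Cauchy--Schwarz makes the diagonal choice sharp precisely because of positivity of $x$, so the proof really only works in the positive cone. The edge cases $p=1$ (take $a=b=1$, recover Fubini) and $p=\infty$ (supremum over $\L^2$-contractions) fit into the same framework and should not require separate treatment.
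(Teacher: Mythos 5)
Your argument is correct, but note that the paper does not actually prove this lemma: it is stated with a pointer to the more general treatment in \cite{Arh24b}, so there is no internal proof to compare against. What you give is a legitimate self-contained derivation from the paper's own Proposition \ref{prop-facto-sup}: specializing \eqref{LpME-sup} to $q=1$ (so $r=p^*$) and invoking Fubini \eqref{Fubini} correctly reduces the claim to evaluating $\sup_{a,b}\norm{(a\ot 1)x(b\ot 1)}_{\L^1(\cal{M}\otvn\cal{N})}$. Your upper bound via $x=x^{1/2}x^{1/2}$, H\"older in $\L^1$, traciality, and the $\cal{M}$-bimodularity of $\Id\ot\tau$ (which turns $(\tau\ot\tau)((a\ot1)x(a^*\ot1))$ into $\tau_{\cal{M}}(\rho\,a^*a)$) is sound, and the lower bound by restricting to $b=a$ with $a\geq 0$ and using that $\sup\{\tau_{\cal{M}}(\rho c): c\geq 0,\ \norm{c}_{\L^{p^*}}\leq 1\}=\norm{\rho}_{\L^p}$ for positive $\rho$ is the standard positive-cone duality (attained at $c=\rho^{p-1}/\norm{\rho}_p^{p-1}$, with the obvious modifications at $p=1,\infty$). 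The only points that deserve explicit mention in a written-up version are (i) the bimodularity identity $(\Id\ot\tau)((a\ot1)y(a^*\ot1))=a(\Id\ot\tau)(y)a^*$, which is what makes the collapse to $\tau_{\cal{M}}(a\rho a^*)$ legitimate, and (ii) the fact that finite-dimensionality guarantees $x^{1/2}$ and all the relevant $\L^p$-memberships without any approximation argument; neither is a gap. Your closing remark is also accurate: positivity of $x$ is used twice, once to make the Cauchy--Schwarz upper bound match the diagonal choice and once to identify the $\L^1$-norm of $(a\ot1)x(a\ot1)$ with its trace, so the identity genuinely fails outside the positive cone.
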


We finish with a useful variational formula which is written without proof in \cite[Lemma 3.11 p.~3434]{GJL20} in a slightly more general context. For the sake of completeness, we give the proof.

\begin{prop}
\label{prop-Gao}
Let $\cal{M}$ and $\cal{N}$ be finite-dimensional von Neumann algebras equipped with finite faithful traces $\tau_{\cal{M}}$ and $\tau_{\cal{N}}$. If $\E \ov{\mathrm{def}}{=} \Id \ot \tau_{\cal{N}} \co \cal{M} \otvn \cal{N} \to \cal{M}$ is the canonical trace preserving faithful conditional expectation, we have for any element $x$ in the space $\L^\infty(\cal{M},\L^1(\cal{N}))$
\begin{equation}
\label{formula-Junge}
\norm{x}_{\L^\infty(\cal{M},\L^1(\cal{N}))}
=\inf_{x=yz} \bnorm{\E(yy^*)}_{\L^\infty(\cal{M})}^{\frac{1}{2}}\bnorm{\E(z^*z)}_{\L^\infty(\cal{M})}^{\frac{1}{2}}.
\end{equation}
\end{prop}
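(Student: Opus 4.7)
Fix any factorization $x = yz$ with $y, z \in \cal{M}\otvn\cal{N}$. Starting from the supremum formula \eqref{LinftyLp-norms} taken with $q=1$, it suffices to bound the $\L^1(\cal{M}\otvn\cal{N})$-norm of $(a \ot 1)yz(b \ot 1)$ uniformly in $a, b \in \L^2(\cal{M})$ of norm at most $1$. Applying Cauchy--Schwarz (the case $p=q=2$ of \eqref{Holder}) in $\L^1(\cal{M}\otvn\cal{N})$ to the splitting $(a\ot 1)yz(b\ot 1) = [(a\ot 1)y] \cdot [z(b\ot 1)]$ gives the bound $\norm{(a\ot 1)y}_{\L^2}\,\norm{z(b\ot 1)}_{\L^2}$. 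Using trace preservation and bimodularity \eqref{bimodular} of $\E$, together with the scalar operator inequality $0 \leq \E(yy^*) \leq \norm{\E(yy^*)}_{\L^\infty(\cal{M})}\cdot 1_{\cal{M}}$, one obtains
\begin{equation*}
\norm{(a\ot 1)y}_{\L^2}^2 = \tau_{\cal{M}\otvn\cal{N}}\bigl((a^*a\ot 1)yy^*\bigr) = \tau_{\cal{M}}(a^*a\,\E(yy^*)) \leq \norm{a}_{\L^2(\cal{M})}^2\,\norm{\E(yy^*)}_{\L^\infty(\cal{M})},
\end{equation*}
and symmetrically $\norm{z(b\ot 1)}_{\L^2}^2 \leq \norm{b}_{\L^2(\cal{M})}^2\,\norm{\E(z^*z)}_{\L^\infty(\cal{M})}$. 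Taking the supremum over $(a,b)$ and then the infimum over factorizations yields one inequality in \eqref{formula-Junge}.

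\textbf{An explicit factorization via polar decomposition.}
Consider the polar decomposition $x = u|x|$ in $\cal{M}\otvn\cal{N}$, with $u$ the partial isometry whose $u^*u$ is the support projection of $|x|$. Put $y \ov{\mathrm{def}}{=} u|x|^{1/2}$ and $z \ov{\mathrm{def}}{=} |x|^{1/2}$; then $yz = x$ and $\E(z^*z) = \E(|x|)$. The relation $u^*u\,|x| = |x|$ yields
\begin{equation*}
(u|x|u^*)^2 = u|x|(u^*u)|x|u^* = u|x|^2u^* = (u|x|)(u|x|)^* = xx^*,
\end{equation*}
and positivity of $u|x|u^*$ forces $u|x|u^* = |x^*|$; hence $\E(yy^*) = \E(|x^*|)$. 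Consequently
\begin{equation*}
\inf_{x=yz}\norm{\E(yy^*)}_{\L^\infty(\cal{M})}^{1/2}\norm{\E(z^*z)}_{\L^\infty(\cal{M})}^{1/2} \leq \norm{\E(|x^*|)}_{\L^\infty(\cal{M})}^{1/2}\norm{\E(|x|)}_{\L^\infty(\cal{M})}^{1/2}.
\end{equation*}

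\textbf{Closing the loop.}
It remains to show $\norm{\E(|x^*|)}_{\L^\infty}^{1/2}\norm{\E(|x|)}_{\L^\infty}^{1/2} \leq \norm{x}_{\L^\infty(\cal{M},\L^1(\cal{N}))}$, which is the main obstacle and does \emph{not} follow from the Cauchy--Schwarz argument of the first paragraph. The plan is a duality argument. One first verifies that $\rho(x) \ov{\mathrm{def}}{=} \inf_{x=yz}\norm{\E(yy^*)}_{\L^\infty}^{1/2}\norm{\E(z^*z)}_{\L^\infty}^{1/2}$ is a norm on $\cal{M}\otvn\cal{N}$: positivity and homogeneity are immediate, and the triangle inequality comes from combining two factorizations via the block row--column product
\begin{equation*}
x + x' = \begin{pmatrix}y & y'\end{pmatrix}\begin{pmatrix}z\\ z'\end{pmatrix},
\end{equation*}
together with a scaling step normalizing each factorization so that $\norm{\E(yy^*)} = \norm{\E(z^*z)}$. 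The easy direction already gives $\norm{\cdot}_{\L^\infty(\cal{M},\L^1(\cal{N}))} \leq \rho$; one then upgrades this to equality by identifying the dual of $\rho$ with the norm on $\L^1(\cal{M},\L^\infty(\cal{N}))$, which by \eqref{equality-inter-9} is the dual of $\L^\infty(\cal{M},\L^1(\cal{N}))$, and concluding by Hahn--Banach. This final dual identification is the technically delicate step and in spirit reflects the Effros--Ruan operator-space decomposition $\L^1(\cal{N}) = \L^2_c(\cal{N}) \otp \L^2_r(\cal{N})$.
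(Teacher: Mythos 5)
Your first paragraph is correct and is essentially the paper's own argument for the inequality $\norm{x}_{\L^\infty(\cal{M},\L^1(\cal{N}))}\leq \inf_{x=yz}\norm{\E(yy^*)}_{\infty}^{1/2}\norm{\E(z^*z)}_{\infty}^{1/2}$: factor, apply Cauchy--Schwarz, push the trace through $\E$, and take the supremum over $a,b$ via the case $q=1$ of \eqref{LinftyLp-norms}. The polar-decomposition computation in your second paragraph is also correct ($u|x|u^*=|x^*|$ does follow from $u^*u|x|=|x|$ and positivity), but it only produces an upper bound for the infimum by $\norm{\E(|x^*|)}_\infty^{1/2}\norm{\E(|x|)}_\infty^{1/2}$; it does not advance the converse inequality.

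The genuine gap is in the third paragraph: the converse inequality $\inf_{x=yz}(\cdots)\leq\norm{x}_{\L^\infty(\cal{M},\L^1(\cal{N}))}$ is never proved. You correctly identify that it does not follow from the Cauchy--Schwarz argument, and you outline a duality strategy (show $\rho$ is a norm via block row--column factorizations, then identify $\rho^*$ with the norm of $\L^1(\cal{M},\L^\infty(\cal{N}))$ and conclude by bipolarity), but you explicitly defer the crucial step --- the identification of the dual of $\rho$ --- calling it ``technically delicate'' without executing it. That identification \emph{is} the theorem: it is essentially the Effros--Ruan/Junge factorization of $\cal{M}\ot_{\min}\cal{N}_*^{\op}$, and without it the argument establishes only one of the two inequalities in \eqref{formula-Junge}. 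The paper closes exactly this direction not by proof but by citing \cite[Lemma 4.9 p.~76]{JuP10}; a complete write-up must either reproduce that lemma's proof or cite it. As it stands, your proposal proves the same half that the paper proves from scratch and leaves the other half as an unproven plan.
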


\begin{proof}
Let $x \in \L^\infty(\cal{M},\L^1(\cal{N}))$. Suppose that $x=y z$. Then for any elements $a,b \in \L^{2}(\cal{M})$ with $\norm{a}_{\L^{2}(\cal{M})} \leq 1$ and $\norm{b}_{\L^{2}(\cal{M})} \leq 1$, we have
\begin{align*}
\MoveEqLeft
\bnorm{(a \ot 1) x (b \ot 1)}_{\L^1(\cal{M} \otvn \cal{N})}         
=\bnorm{(a \ot 1) yz (b \ot 1)}_{\L^1(\cal{M} \otvn \cal{N})} \\
&\ov{\eqref{Holder}}{\leq} \bnorm{(a \ot 1) y}_{\L^2(\cal{M} \otvn \cal{N})} \bnorm{z (b \ot 1)}_{\L^2(\cal{M} \otvn \cal{N})} \\
&= \bnorm{[(a \ot 1) y]^*}_{\L^2(\cal{M} \otvn \cal{N})} \bnorm{z (b \ot 1)}_{\L^2(\cal{M} \otvn \cal{N})} \\
&\ov{\eqref{norm-Lp}}{=} (\tau_{\cal{M}} \ot \tau_{\cal{N}}) \big[(a \ot 1)(yy^*)(a^* \ot 1)\big]^{\frac{1}{2}} \, (\tau_{\cal{M}} \ot \tau_{\cal{N}})\big[(b^* \ot 1)(z^*z)(b \ot 1)\big]^{\frac{1}{2}} \\
&= \tau_{\cal{M}}\big[\E((a \ot 1)(yy^*)(a^* \ot 1))\big]^{\frac{1}{2}} \tau_{\cal{M}}\big[\E((b^* \ot 1)(z^*z)(b \ot 1))\big]^{\frac{1}{2}}.
\end{align*}
Using the property of the conditional expectation described in \eqref{bimodular}, we obtain that
\begin{align*}
\MoveEqLeft
\bnorm{(a \ot 1) x (b \ot 1)}_{\L^1(\cal{M} \otvn \cal{N})}  
\leq \tau_{\cal{M}}\big[a\E(yy^*) a^*\big]^{\frac{1}{2}} \tau\big[b^*\E(z^*z)b\big]^{\frac{1}{2}} \\
&\ov{\eqref{norm-Lp}}{=}  \bnorm{a\E(yy^*)a^*}_{\L^1(\cal{M})}^{\frac{1}{2}} \bnorm{b^* \E(z^*z)b}_{\L^1(\cal{M})}^{\frac{1}{2}} \\
&\ov{\eqref{Holder}}{\leq} \norm{a}_{\L^2(\cal{M})}^{\frac{1}{2}} \bnorm{\E(yy^*)}_{\L^\infty(\cal{M})}^{\frac{1}{2}} \bnorm{a^*}_{\L^2(\cal{M})}^{\frac{1}{2}}\bnorm{b^*}_{\L^2(\cal{M})}^{\frac{1}{2}}\bnorm{\E(z^*z)}_{\L^\infty(\cal{M})}^{\frac{1}{2}}\bnorm{b}_{\L^2(\cal{M})}^{\frac{1}{2}} \\ 
&\leq \bnorm{\E(yy^*)}_{\L^\infty(\cal{M})}^{\frac{1}{2}} \bnorm{\E(z^*z)}_{\L^\infty(\cal{M})}^{\frac{1}{2}}.
\end{align*}
Taking the infimum, we obtain with the case $q=1$ of \eqref{LinftyLp-norms} that the left-hand member of the equality \eqref{formula-Junge} is less than the right-hand member of \eqref{formula-Junge}. The converse inequality is \cite[Lemma 4.9 p.~76]{JuP10}. 
\end{proof}

\paragraph{Tensorizations of linear maps} 
Let $E$ and $F$ be operator spaces. Suppose $1 \leq p \leq \infty$. By \cite[Corollary 1.2 p.~19]{Pis98}, a linear map $T \co E \to F$ is completely bounded if and only if the linear map $\Id_{S^p} \ot T$ extends to a bounded operator $\Id_{S^p} \ot T \co S^p(E) \to S^p(F)$. In this case, the completely bounded norm $\norm{T}_{\cb,E \to F}$ is given by
\begin{equation}
\label{defnormecb}
\norm{T}_{\cb,E \to F}
=\norm{\Id_{S^p} \ot T}_{S^p(E) \to S^p(F)}.
\end{equation}
Let $\cal{M}$ be an approximately finite-dimensional von Neumann algebra equipped with a normal semifinite faithful trace. If a linear map $T \co E \to F$ is completely bounded then by \cite[(3.1) p.~39]{Pis98} it induces a completely bounded map $\Id_{\L^p(\cal{M})} \ot T \co \L^p(\cal{M},E) \to \L^p(\cal{M},F)$ and we have 
\begin{equation}
\label{ine-tensorisation-os}
\norm{\Id_{\L^p(\cal{M})} \ot T}_{\cb, \L^p(\cal{M},E) \to \L^p(\cal{M},F)} 
\leq \norm{T}_{\cb,E \to F}.
\end{equation}

\paragraph{Entropy}
We refer to \cite{LoW22}, \cite{LuP17}, \cite{LuP19}, \cite{LPS17}, \cite{NaU61}, \cite{OcS78} and \cite{Rus73} for more information on Segal entropy. Let $g \co \R^+ \to \R$ be the function defined by $g(0) \ov{\mathrm{def}}{=} 0$ and $g(t) \ov{\mathrm{def}}{=} t\log_2(t)$ if $t > 0$. Let $\cal{M}$ be a von Neumann algebra equipped with a normal faithful \textit{finite} trace $\tau$. If $x$ is a positive element in the Banach space $\L^1(\cal{M})$, by a slight abuse, we use the notation $x\log_2 x \ov{\mathrm{def}}{=} g(x)$. In this case, if in addition $\norm{x}_{\L^1(\cal{M})}=1$ we can define the Segal entropy 
\begin{equation}
\label{Segal-entropy}
\H(x) \ov{\mathrm{def}}{=} -\tau(x \log_2 x),
\end{equation}
introduced in \cite{Seg60}. If we need to specify the trace $\tau$, we will use the notation $\H_\tau(x)$. Note that if the trace $\tau$ is \textit{normalized} and if $x$ is a positive element in the space $\L^1(\cal{M})$ with $\tau(x)=1$ then by \cite[Proposition 2.2]{LoW22} the entropy $\H(x)$ belongs to $[-\infty,0]$ with
\begin{equation}
\label{carac-entropy-max}
\H(x)=0 
\text{ if and only if } x=1.
\end{equation}
Recall that by \cite[p.~78]{OcS78}, if $x_1$ and $x_2$ are positive elements in the Banach space $\L^1(\cal{M})$ with norm 1, we have the additivity formula
\begin{equation}
\label{entropy-tensor-product}
\H(x_1 \ot x_2)
= \H(x_1)+\H(x_2).
\end{equation}
In this paper, we may need to use various traces (though proportional) on the same algebra. While entropy is dependent on the trace, there exists a straightforward method to switch from one to another. If $\tau$ is again the normalized trace, recall that for any $t > 0$ and any positive elements $x \in \L^1(\cal{M},\tau)$ and $y \in \L^1(\cal{M},t\tau)$ with norms equal to 1, we have
\begin{equation}
\label{changement-de-trace}
\H_\tau(x)
=\H_{t\tau}\bigg(\frac{1}{t}x\bigg)-\log_2 t
\quad \text{and} \quad
\H_{\tau}(t y)
=\H_{t\tau}(y)-\log_2 t.
%\quad 
%\H(\rho,\tau)
%=\H\bigg(\frac{1}{t}\rho,\tr\bigg)-\log t.
\end{equation}
%\begin{lemma}
%\label{Lemma-useful}
%Suppose $1 \leq p \leq \infty$. Let $\QG$ be a compact quantum group of Kac type. The coproduct $\Delta$ map induces an isometry $\Delta \co \L^p(\QG) \to \L^\infty(\QG,\L^p(\QG))$ and a bounded map $\Delta \co \L^1(\QG) \to \L^p(\QG,\L^1(\QG))$ with norm less than $\tau(1)^{\frac{1}{p}}$ where $\tau$ is a Haar trace.
%\end{lemma}
We finish with the following well-known fundamental result, which connects the noncommutative $\L^p$-norms with the Segal entropy, see e.g.~\cite[Remark 9.5]{Arh24b} or \cite[Theorem 3.4 p.~371]{JuP15} (see also \cite[p.~625]{Seg60}).

\begin{prop}
\label{prop-deriv-norm-p}
Let $\cal{M}$ be a finite-dimensional von Neumann algebra equipped with a finite faithful trace $\tau$. For any positive element $x \in \cal{M}$ with $\norm{x}_{\L^1(\cal{M})}=1$, we have
\begin{equation}
\label{deriv-norm-p}
\frac{1}{\log 2}\frac{\d}{\d p} \norm{x}_{\L^p(\cal{M})}|_{p=1}
=\tau(x \log_2 x)
\ov{\eqref{Segal-entropy}}{=} -\H(x).
\end{equation}
%Moreover, the monotonically decreasing family of functions $x \mapsto \frac{1-\norm{x}_{\L^p(\cal{M})}^p}{p-1}$ converge uniformly to the function $x \mapsto -\tau(x \log x)=\H(x)$ when $p \to 1^{+}$ on the subset of positive element $x \in \cal{M}$ with $\norm{x}_1=1$.
\end{prop}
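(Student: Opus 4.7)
The plan is to reduce the claim to a scalar differentiation via the spectral theorem. Since $x$ is positive and $\cal{M}$ is finite-dimensional, I decompose $x = \sum_{i} \lambda_i p_i$ where the $\lambda_i \geq 0$ are the eigenvalues and the $p_i$ are mutually orthogonal spectral projections with $\sum_i p_i = 1$. The functional calculus then yields $\tau(x^p) = \sum_{i} \lambda_i^p\, \tau(p_i)$ for every $p > 0$, with the convention $0^p = 0$. This turns the map $p \mapsto \norm{x}_{\L^p(\cal{M})}$ into a manifestly smooth positive scalar function of $p$ on $(0,\infty)$, so I can differentiate it freely.

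Setting $g(p) \ov{\mathrm{def}}{=} \tau(x^p)$, I rewrite $\norm{x}_{\L^p(\cal{M})} = g(p)^{1/p} = \exp\!\big(\tfrac{1}{p}\log g(p)\big)$ and compute
\[
\frac{\d}{\d p} \norm{x}_{\L^p(\cal{M})}
= \norm{x}_{\L^p(\cal{M})}\bigg(-\frac{\log g(p)}{p^2} + \frac{g'(p)}{p\,g(p)}\bigg),
\qquad g'(p) = \sum_{\lambda_i > 0} \lambda_i^p (\log \lambda_i)\,\tau(p_i).
\]
At $p=1$, the normalization $\norm{x}_{\L^1(\cal{M})} = \tau(x) = 1$ gives $g(1) = 1$, so $\log g(1) = 0$ and the prefactor $\norm{x}_{\L^1(\cal{M})}$ equals $1$. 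The second term reduces to $g'(1) = \sum_{\lambda_i > 0} \lambda_i (\log \lambda_i)\,\tau(p_i)$, which, under the standard convention $0 \log 0 = 0$ used in the Segal entropy \eqref{Segal-entropy}, is exactly $\tau(x \log x)$. Hence $\frac{\d}{\d p}\norm{x}_{\L^p(\cal{M})}|_{p=1} = \tau(x \log x)$, and dividing by $\log 2$ converts $\log$ into $\log_2$, producing the desired identity $\tau(x\log_2 x) = -\H(x)$.

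The argument is essentially a one-variable calculus exercise once the spectral decomposition is in place. The only delicate point is the contribution of the zero eigenvalues of $x$ (if any), which is harmless because $\lambda^p\log\lambda \to 0$ as $\lambda \to 0^+$ for $p > 0$ and the convention $0 \log 0 = 0$ is compatible with the definition of $\H$ via the function $g$ in Section~\ref{sec-preliminaries}. Finite dimensionality ensures the sum over $i$ is finite, so interchanging derivative and sum is automatic, and I expect no genuine technical obstacle.
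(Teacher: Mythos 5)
Your argument is correct. The paper itself gives no proof of Proposition \ref{prop-deriv-norm-p}; it only cites \cite[Remark 9.5]{Arh24b}, \cite[Theorem 3.4 p.~371]{JuP15} and \cite[p.~625]{Seg60}, and the computation you carry out — spectral decomposition of the positive element $x$, reduction of $p \mapsto \tau(x^p)^{1/p} = \exp\bigl(\tfrac{1}{p}\log g(p)\bigr)$ to a smooth scalar function, and evaluation of the logarithmic derivative at $p=1$ using $g(1)=\tau(x)=\norm{x}_{\L^1(\cal{M})}=1$ — is precisely the standard argument behind those references. You also handle the only delicate points correctly: the zero eigenvalues are harmless under the convention $0\log 0=0$ built into the definition of $\H$ via the function $g$ in Section \ref{sec-preliminaries}, and the final division by $\log 2$ converts $\tau(x\log x)$ into $\tau(x\log_2 x)=-\H(x)$.
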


%\begin{equation}
%\label{}
%\H_{\cb,\min,\tau}(T)
%=\H_{\cb,\min,\tr}(T) - \log_2(n), \quad \H_{\cb,\min,\tau}(T) +\log_2(n)
%=\H_{\cb,\min,\tr}(T) 
%\end{equation}

\paragraph{Harmonic analysis on locally compact abelian groups} We refer to the books \cite{Bou19} and \cite{Fre13} for more information on locally compact abelian groups. Let $G$ be a locally compact abelian group endowed with a Haar measure $\mu$. Recall that the Pontryagin dual $\hat{G}$ of $G$ is the locally compact abelian group consisting of continuous group homomorphisms from the group $G$ into the circle group $\T$, with pointwise multiplication as the group operation and the topology of uniform convergence on compact sets. The Pontryagin duality theorem \cite[Th\'eor\`eme 2, II.220]{Bou19} says that any locally compact abelian group is canonically isomorphic with its bidual (the Pontryagin dual of its Pontryagin dual). More precisely the map $G \to \doublehat{G}$, $s \mapsto (\chi \mapsto \chi(s))$ is a topological isomorphism from $G$ onto $\doublehat{G}$. 

Recall that the convolution of a bounded Borel regular complex measure $\nu$ on $G$ and a locally $\mu$-integrable function $g \co G \to \mathbb{C}$, which are convolvable, is defined by
\begin{equation}
\label{conv-mes-funct}
(\nu *g)(t)
\ov{\mathrm{def}}{=} \int_G g(s^{-1}t) \d\nu(s), \quad \text{a.e. } t \in G.
\end{equation}
Consider two locally $\mu$-integrable functions $f,g \co G \to \mathbb{C}$. If the measure $f\cdot\mu$ and the function $g$ are convolvable then we say that $f$ and $g$ are convolvable and we can introduce the convolution product
\begin{equation}
\label{Convolution-formulas}
(f*g)(t)
\ov{\mathrm{def}}{=} \int_G f(s)g(s^{-1}t) \d\mu(s)
=\int_G f(ts^{-1})g(s) \d\mu(s),\quad \text{a.e. } t \in G.
\end{equation}
In this case, we have the equality $(f\cdot\mu)*g= f*g$ almost everywhere.

\begin{example} \normalfont
\label{Example-finite-group}
Let $G$ be a finite abelian group equipped with its Haar probability measure $\mu$. The measure $\mu$ is atomic and we have $\mu(\{s\})=\frac{1}{|G|}$ for any $s \in G$, where $|G|$ is the cardinal of $G$. Let $f,g \co G \to \mathbb{C}$ be some functions. Note that $\norm{f}_{\L^1(G)}=\frac{1}{|G|} \sum_{s \in G} f(s)$. For any $t \in G$, we have
\begin{equation}
\label{convol-finite-group-1}
(f*g)(t)
\ov{\eqref{Convolution-formulas}}{=} %\int_G f(s)g(s^{-1}t) \d\mu(s)
=\frac{1}{|G|} \sum_{s \in G} f(s)g(s^{-1}t).
\end{equation}
Consider the measure $\nu$ on the group $G$ defined by $\nu(\{s\})=f(s)$ for any $s \in G$. We warn the reader that
\begin{equation}
\label{convol-finite-group-2}
(\nu*g)(t)
\ov{\eqref{conv-mes-funct}}{=} \sum_{s \in G} g(s^{-1}t)f(s),\quad t \in G,
\end{equation}
which is slightly different from \eqref{convol-finite-group-1}.
\end{example}

The Fourier transform of a bounded Borel regular complex measure $\mu$ on $G$ is defined in \cite[D\'efinition 3, II.206]{Bou19} or in \cite[p.~3338]{Fre13} (without the conjugate bar) by
$$
\hat{\mu}(\chi)
=\int_G \ovl{\la \chi,s\ra_{\hat{G},G}} \d\mu(s), \quad \chi \in \hat{G}.
$$
Recall that a $\mu_G$-integrable function $f \co G \to \mathbb{C}$ defines a bounded Borel regular complex measure $f \cdot \mu$. The Fourier transform of such a function is defined in \cite[(14), II.208]{Bou19} or in \cite[p.~338]{Fre13} (without the conjugate bar) by
\begin{equation}
\label{Fourier-transform-I}
\hat{f}(\chi)
=\int_G \ovl{\la \chi,s\ra_{\hat{G},G}}\, f(s)\d\mu(s), \quad \chi \in \hat{G}.
\end{equation}
We have $\widehat{f \cdot \mu}=\hat{f}$. For any $\mu_G$-integrable function $f,g \co G \to \mathbb{C}$, we have $\widehat{f*g}=\hat{f}\hat{g}$ by \cite[Proposition 445G p.~339]{Fre13}. Recall that a function $f \co G \to \mathbb{C}$ is positive definite if $\sum_{j,k=1}^{n} \lambda_j \ovl{\lambda_k} f(s_k^{-1}s_j) \geq 0$ for any $ \lambda_1,\ldots,\lambda_n \in \mathbb{C}$ and any $s_1,\ldots,s_n \in G$. By \cite[Theorem 445P p.~349]{Fre13}, there exists a unique measure $\hat{\mu}$ on the Pontryagin dual $\hat{G}$ such that whenever $f \co G \to \mathbb{C}$ is continuous positive-definite $\mu_G$-integrable function then the Fourier transform $\hat{f} \co \hat{G} \to \mathbb{C}$ is $\hat{\mu}$-integrable on $\hat{G}$ and we have the <<Fourier inversion formula>>
\begin{equation}
\label{Fourier-inversion-formula}
f(s)
=\int_{\hat{G}} \la \chi,s\ra_{\hat{G},G} \hat{f}(\chi) \d \hat{\mu}(\chi), \quad s \in G.
\end{equation}
The measure $\hat{\mu}$ is sometimes called the dual measure of $\mu$.

\begin{example} \normalfont
\label{dual-measure-compact-group}
If $G$ is a compact abelian group equipped with its Haar probability measure $\mu$ then by \cite[Proposition 18 II.233]{Bou19} the Pontryagin dual $\hat{G}$ is discrete and the dual measure $\hat{\mu}$ on the Pontryagin dual $\hat{G}$ is the counting measure.
\end{example}

\paragraph{Fourier multipliers} We refer to the book \cite{Lar71} for more information on Fourier multipliers on locally compact abelian groups. Let $G$ be a locally compact abelian group equipped with a Haar measure $\mu$. Suppose that $1 \leq p,q \leq \infty$. Following \cite[p.~66]{Lar71}, we say that a bounded operator $T \co \L^q(G) \to \L^p(G)$ is a bounded Fourier multiplier if it commutes with all translations, i.e.~for all $s \in G$ we have $T\tau_{s,q}=\tau_{s,p} T$ where $\tau_{s,p} \co \L^p(G) \to \L^p(G)$ is the translation operator defined by $(\tau_{s,p}f)(t)=f(ts^{-1})$ for any $s,t \in G$ and any $f \in \L^p(G)$. We say that a bounded Fourier multiplier $T$ is positive if for any almost everywhere positive function $f \in \L^q(G)$ the function $T(f)$ is also almost everywhere positive.

The following characterization of bounded Fourier multipliers from $\L^1$ into $\L^p$ is a particular case of results of the papers \cite{Arh24b} and \cite{Edw55}  (see also \cite[Theorem 3.1.1 p.~68]{Lar71}). Note the compactness assumption on the group $G$ in the two next statements. 

\begin{thm}
\label{Thm-description-multipliers-2}
Let $G$ be a compact abelian group. Suppose that $1 < p \leq \infty$.  Let $T \co \L^1(G) \to \L^p(G)$ be a linear map. The following conditions are equivalent.
\begin{enumerate}
\item $T \co \L^1(G) \to \L^p(G)$ is a bounded Fourier multiplier.
\item There exists a function $f \in \L^p(G)$ such that 
\begin{equation}
\label{T-as-convolution-QG}
T(g)
=f*g,\quad g \in \L^1(G).
\end{equation}
\item There exists a function $\varphi \in \L^\infty(\hat{G})$ such that
\begin{equation}
\label{Fourier-multiplier-and-transforms}
\widehat{T(g)}
=\varphi \hat{g}, \quad g \in \L^1(G).
\end{equation}
\end{enumerate} 
In this situation, the functions $f$ and $\varphi$ are unique and $\hat{f}
=\varphi$.  
%\begin{equation}
%%\label{hatf=varphi}
%.
%\end{equation}
Moreover, the map $T \co \L^1(G) \to \L^p(G)$ is completely bounded and we have
\begin{equation}
\label{norm-equality-QG}
\norm{f}_{\L^p(G)}
=\norm{T}_{\L^1(G) \to \L^p(G)}
=\norm{T}_{\cb,\L^1(G) \to \L^p(G)}.
\end{equation}
\end{thm}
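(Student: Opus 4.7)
The plan is to establish the theorem as a circle of implications $(2)\Rightarrow(3)\Rightarrow(1)\Rightarrow(2)$, with the substantive work concentrated in the last implication; the norm identities (including the completely bounded one) are then extracted from the explicit convolution representation produced by $(2)$.

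For $(2)\Rightarrow(3)$, I would simply set $\varphi\ov{\mathrm{def}}{=}\hat{f}$: since $G$ is compact with probability Haar measure, $\L^p(G)\subset\L^1(G)$ and hence $\|\hat{f}\|_{\L^\infty(\hat{G})}\leq\|f\|_{\L^1(G)}\leq\|f\|_{\L^p(G)}$, so $\varphi\in\L^\infty(\hat{G})$ (note that $\hat{G}$ is discrete by Example \ref{dual-measure-compact-group}), and \eqref{Fourier-multiplier-and-transforms} reduces to the classical identity $\widehat{f*g}=\hat{f}\hat{g}$. For $(3)\Rightarrow(1)$, I would use the translation-Fourier relation $\widehat{\tau_{s,1}g}(\chi)=\overline{\la\chi,s\ra}\,\hat{g}(\chi)$: multiplying by $\varphi$ and invoking Fourier inversion shows $T\tau_{s,1}=\tau_{s,p}T$.

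The heart of the argument is $(1)\Rightarrow(2)$. I would pick a bounded approximate identity $(e_\alpha)$ in $\L^1(G)$ with $\|e_\alpha\|_1\leq 1$ (which exists on any compact abelian group by taking indicators of shrinking neighborhoods of the identity, suitably renormalized). Then $(T(e_\alpha))$ is bounded by $\|T\|$ in $\L^p(G)$, and since $1<p\leq\infty$ the space $\L^p(G)$ is the dual of $\L^{p^*}(G)$, so Banach-Alaoglu yields a weak-$*$ cluster point $f\in\L^p(G)$ with $\|f\|_{\L^p(G)}\leq\|T\|$. To identify $T$ with convolution by $f$, I would upgrade translation-invariance to convolution-invariance by writing $h*g=\int_G h(s)\tau_{s,1}g\,\d\mu(s)$ as a Bochner integral in $\L^1(G)$ and pulling $T$ inside the integral to get $T(h*g)=h*T(g)$. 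Specializing to $h=e_\alpha$ and using commutativity of convolution on the abelian group $G$, one has $T(e_\alpha)*g=T(e_\alpha*g)\to T(g)$ in $\L^p(G)$; simultaneously the left-hand side converges weakly to $f*g$ along the chosen subnet (the pairing with any $\phi\in\L^{p^*}(G)$ factors through a fixed convolution of $g$ and $\phi$, on which the weak-$*$ convergence of $T(e_\alpha)$ applies). Uniqueness of $f$ then follows from injectivity of the Fourier transform together with $\hat{f}=\varphi$.

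The scalar norm equality $\|T\|_{\L^1(G)\to\L^p(G)}=\|f\|_{\L^p(G)}$ is immediate from Young's inequality $\|f*g\|_p\leq\|f\|_p\|g\|_1$ combined with lower semicontinuity of the $\L^p$-norm under weak-$*$ convergence applied to $(T(e_\alpha))$. The main obstacle is the completely bounded version $\|T\|_{\cb,\L^1(G)\to\L^p(G)}\leq\|f\|_{\L^p(G)}$, which requires a vector-valued analogue. My plan is to exploit the commutativity of $\L^\infty(G)$, so that Pisier's vector-valued spaces $\L^p(\L^\infty(G),E)$ identify with classical Bochner-type spaces, and then to reinterpret the amplification $\Id_{S^p}\ot T_f$ as scalar convolution by $f$ acting on $S^p$-valued functions; a vector-valued Young inequality of the form $\|f*h\|_{\L^p(G,S^p)}\leq\|f\|_{\L^p(G)}\|h\|_{\L^1(G,S^p)}$ then yields the desired bound. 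The reverse inequality is automatic from $\|T\|_{\cb}\geq\|T\|=\|f\|_{\L^p(G)}$, closing the circle.
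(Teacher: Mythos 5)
Your circle of implications and the scalar identity $\norm{T}_{\L^1(G)\to\L^p(G)}=\norm{f}_{\L^p(G)}$ follow the standard Edwards--Larsen route (approximate identity, weak-* compactness in the dual space $\L^p(G)=\L^{p^*}(G)^*$, Young's inequality and lower semicontinuity), and this part is essentially sound; note only that in $(3)\Rightarrow(1)$ you must also justify the \emph{boundedness} of $T$ (e.g.\ by the closed graph theorem via pointwise convergence of Fourier coefficients), since condition $(3)$ does not assume it. The paper itself does not prove this statement but quotes it from \cite{Arh24b}, \cite{Edw55} and \cite{Lar71}, so up to the complete boundedness your argument matches the intended one.

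The completely bounded equality, which you rightly identify as the main obstacle, is where your plan breaks down. By \eqref{defnormecb} one has $\norm{T}_{\cb,\L^1(G)\to\L^p(G)}=\norm{\Id_{S^p}\ot T}_{S^p(\L^1(G))\to S^p(\L^p(G))}$: the Schatten index sits on the \emph{outside} and $\L^1(G)$ on the \emph{inside}, so the relevant domain norm is that of Pisier's $S^p(\L^1(G))$, not the Bochner norm of $\L^1(G,S^p)$. These genuinely differ, and in the unfavourable direction: the identity map is a contraction from $\L^1(G,S^p)$ \emph{onto} $S^p(\L^1(G))$ (a Minkowski-type inequality), with, for instance, $x=\sum_{i=1}^n e_{ii}\ot g_i$ and $g_i=n\,1_{A_i}$ for disjoint sets $A_i$ of measure $1/n$ giving $\norm{x}_{S^p(\L^1(G))}=n^{1/p}$ while $\norm{x}_{\L^1(G,S^p)}=n$. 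Hence a vector-valued Young inequality $\norm{f*h}_{\L^p(G,S^p)}\leq\norm{f}_{\L^p(G)}\norm{h}_{\L^1(G,S^p)}$ only controls $\Id\ot T$ with respect to the strictly larger Bochner norm on the domain and does not yield $\norm{T}_{\cb,\L^1(G)\to\L^p(G)}\leq\norm{f}_{\L^p(G)}$. The bound you need is in fact free, by the device the paper itself uses in the proof of Proposition \ref{Prop-eta}: the adjoint $T^*\co\L^{p^*}(G)\to\L^\infty(G)$ takes values in a commutative $\mathrm{C}^*$-algebra, so $\norm{T^*}_{\cb}=\norm{T^*}$ by \cite[1.2.6 p.~6]{BLM04}, and since passing to adjoints is a complete isometry, $\norm{T}_{\cb,\L^1(G)\to\L^p(G)}=\norm{T^*}_{\cb,\L^{p^*}(G)\to\L^\infty(G)}=\norm{T}_{\L^1(G)\to\L^p(G)}$. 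This two-line duality argument replaces the entire vector-valued construction you propose.
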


Finally, recall the characterization of bounded Fourier multipliers on $\L^\infty(G)$ and positive Fourier multipliers on the same space. We refer to \cite[Theorem 3.4.2 p.~76]{Lar71} and \cite[Theorem 3.6.1 p.~82]{Lar71} (and duality).

\begin{thm}
\label{thm-positivity}
Let $G$ be a compact Abelian group. Suppose that $T \co \L^\infty(G) \to \L^\infty(G)$ is a linear map. Then the following statements are equivalent.
\begin{enumerate}
\item The map $T \co \L^\infty(G) \to \L^\infty(G)$ is a bounded Fourier multiplier.

%\item There exists a function $\varphi \in \L^\infty(\hat{G})$ on $G$ such that $\widehat{T(g)}=\varphi \hat{g}$ for each $f \in \L^\infty(G)$.

\item There exists a bounded Borel regular complex measure $\nu$ on $G$  such that $T(g) = \nu * g$ for any $f \in \L^\infty(G)$.

\item There exists a bounded Borel regular complex measure $\nu$ on $G$ such that 
\begin{equation}
\label{Fourier-fin}
\widehat{T(g)}=\hat{\nu} \hat{g}, \quad f \in \L^\infty(G).
\end{equation}

\end{enumerate}
In this situation, the measure $\nu$ is unique. Moreover, the equivalences remain true if we replace <<bounded>> by <<positive>> in the three statements.
\end{thm}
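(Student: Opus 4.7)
The plan is to prove the equivalences in the order (2) $\Leftrightarrow$ (3), then (2) $\Rightarrow$ (1), then the main direction (1) $\Rightarrow$ (2), and finally address the positivity assertion. The equivalence (2) $\Leftrightarrow$ (3) is a direct consequence of the convolution theorem $\widehat{\nu * g} = \hat{\nu} \hat{g}$, together with injectivity of the Fourier transform on $\L^1(G)$ (which applies since $G$ is compact and therefore $\L^\infty(G) \subset \L^1(G)$). The implication (2) $\Rightarrow$ (1) is routine: one has $\norm{\nu * g}_{\L^\infty(G)} \leq \norm{\nu}_{M(G)} \norm{g}_{\L^\infty(G)}$, and commutation of convolution with translations follows from the invariance of the Haar measure by a change of variable.

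The main content is (1) $\Rightarrow$ (2). First I would establish that $T$ maps $C(G)$ into $C(G)$. For $f \in C(G)$, uniform continuity yields that $s \mapsto \tau_s f$ is norm-continuous from $G$ into $C(G) \subset \L^\infty(G)$; composing with $T$ and using the intertwining $T \tau_s = \tau_s T$, the orbit $s \mapsto \tau_s T(f)$ is norm-continuous in $\L^\infty(G)$, and the classical characterization of $C(G)$ as the subspace of translation-continuous vectors in $\L^\infty(G)$ forces $T(f) \in C(G)$. The restriction $T \co C(G) \to C(G)$ is then a bounded translation-commuting operator, so the functional $\phi \co f \mapsto T(f)(e)$ is a bounded linear form on $C(G)$, and the Riesz representation theorem yields a unique bounded Borel regular complex measure $\mu'$ on $G$ with $\phi(f) = \int_G f \d\mu'$. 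Translation invariance then gives $T(f)(t) = T(\tau_{t^{-1}} f)(e) = \int_G f(st) \d\mu'(s)$ for every $t \in G$, and after the substitution $s \mapsto s^{-1}$, setting $\nu$ to be the pushforward of $\mu'$ under inversion, this reads $T(f) = \nu * f$ for all $f \in C(G)$.

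Extending this identity to all of $\L^\infty(G)$ is the main obstacle. The convolution operator $S_\nu \co g \mapsto \nu * g$ is weak-$*$ continuous on $\L^\infty(G) = \L^1(G)^*$, being the adjoint of convolution on $\L^1(G)$ against the reflected measure $\tilde{\nu}$. The plan is to show that $T$ itself is weak-$*$ continuous, so that $T$ and $S_\nu$, agreeing on the weak-$*$ dense subspace $C(G)$, must agree on all of $\L^\infty(G)$; the weak-$*$ continuity of $T$ is the delicate point, which one establishes by constructing an explicit predual $\L^1(G) \to \L^1(G)$ (essentially convolution against $\tilde{\nu}$) using the translation invariance of $T$, or alternatively by testing $T$ against continuous functions approximating elements of $\L^1(G)$. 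For the positivity assertion, $S_\nu$ is positivity preserving on $\L^\infty(G)$ if and only if $\nu$ is a positive measure: one direction is immediate from the convolution formula, and the converse is obtained by testing the identified equality $T(f) = \nu * f$ against a positive continuous approximate identity, which forces $\nu \geq 0$ via the Riesz representation theorem applied to $\phi$.
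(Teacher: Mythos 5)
The paper itself does not prove this statement: it is imported from Larsen's book (\cite[Theorem 3.4.2]{Lar71}, \cite[Theorem 3.6.1]{Lar71}) ``and duality'', i.e.\ the intended argument is the Wendel-type characterization of multipliers of $\L^1(G)$ as convolutions by measures, transported to $\L^\infty(G)$ by taking adjoints. Your reductions (2)$\Leftrightarrow$(3) and (2)$\Rightarrow$(1), and the first half of (1)$\Rightarrow$(2) — norm-continuity of the orbit $s\mapsto\tau_s T(f)$ forces $T(\C(G))\subset \C(G)$, and the Riesz representation theorem applied to $f\mapsto T(f)(e)$ produces a measure $\nu$ with $T(f)=\nu*f$ on $\C(G)$ — are correct and standard.

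The gap is the step you yourself flag as delicate, and it cannot be closed in the way you propose: weak-* continuity of $T$ does \emph{not} follow from ``$T$ bounded and commuting with translations''. For $G=\T$, Rudin (\emph{Invariant means on $L^\infty$}, Studia Math.\ 44 (1972)) constructed translation-invariant means $m$ on $\L^\infty(\T)$ distinct from the Haar integral; the operator $T(f)\ov{\mathrm{def}}{=} m(f)1$ is bounded, commutes with all translations, and agrees on $\C(\T)$ with convolution by the Haar measure, yet is not convolution by any measure on $\L^\infty(\T)$ (testing against characters forces the candidate measure to be Haar, and then $m$ would equal the Haar integral on all of $\L^\infty$). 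Hence neither of your two suggested repairs works: ``constructing an explicit predual of $T$'' presupposes exactly what must be proved, and ``testing against continuous approximations'' only controls $T$ on $\C(G)$, which does not determine a non-weak-*-continuous operator on $\L^\infty(G)$. The statement has to be read with weak-* continuity built into the notion of an $\L^\infty(G)$-multiplier (this is what ``and duality'' in the paper's citation supplies, the operator being by construction the adjoint of an $\L^1(G)$-multiplier), or else restricted to the finite groups $\Omega_n$ to which it is actually applied, where $\L^\infty(G)$ is finite-dimensional and every linear map is weak-* continuous. A clean fix for your write-up is to prove the $\L^1(G)$ statement first (approximate identity plus a weak-* cluster point of $T(e_\alpha)$ in the measure algebra $=\C(G)^*$) and then dualize.
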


%\begin{thm}
%\label{Th-Larsen}
%Let $G$ be a locally compact abelian group. Suppose that $1 \leq p \leq \infty$. A bounded linear map $T \co \L^p(G) \to \L^p(G)$ is a positive Fourier multiplier if and only if there exists a positive bounded Borel measure such that $T(g)=\mu * g$ for any $g \in \L^p(G)$. In this case, the measure $\mu$ is unique.
%\end{thm}

%In this case, the Fourier transform $\hat{\mu}$ is the unique function $\varphi \in \L^\infty(\hat{G})$ such that
%\begin{equation}
%\label{multiplier-56}
%\widehat{T(g)}
%=\varphi \hat{g}, \quad g \in \L^2(G) \cap \L^p(G).
%\end{equation} 
%holds.

 %A continuous function is positive definite if and only if there exists a bounded positive measure such that $f=\hat{\mu}$. In this case the measure is unique. Bochner's theorem

%$\norm{f}_{\L^1(G,\mu)}=\frac{1}{|G|}\sum_{s \in G} f(s)=\frac{1}{|G|}\sum_{s \in G} \nu(s)=1$.
%
%
%that the convolution operator $T \co \ell^1_G \to \ell^1_G$, $g \mapsto \nu*g$ where 
%$$
%(\nu*g)(t)
%\ov{\mathrm{def}}{=} \sum_{s \in G} g\big(s^{-1}t\big)\nu(s)
%,\quad t \in G,
%$$
%
%Indeed, first we consider the function $f \co G \to \mathbb{C}$, $s \mapsto f(s)=|G|\nu(s)$. Note that $\norm{f}_{\L^1(G,\mu_G)}=\frac{1}{|G|}\sum_{s \in G} f(s)=\frac{1}{|G|}\sum_{s \in G} \nu(s)=1$.

\paragraph{Stein's interpolation theorem}
We recall an <<abstract version>> of Stein's interpolation theorem for compatible couples of Banach spaces. See \cite[Theorem 1]{CwJ84}, \cite[Theorem 2.7 p.~52]{Lun18} and \cite[Theorem 2.1]{Voi92} for the proof. Here, we use the open strip $S \ov{\mathrm{def}}{=} \{z \in \mathbb{C} : 0 < \Re z < 1 \}$.

\begin{thm}
\label{thm-Stein}
Let $(X_0,X_1)$ and $(Y_0,Y_1)$ be two compatible couples of Banach spaces such that $X_0$ is separable. Let $(T_z)_{z \in \ovl{S}}$ be a family of bounded operators from $X_0 \cap X_1$ into $Y_0+Y_1$ such that 
\begin{enumerate}
	
\item for any $x \in X_0 \cap X_1$ and any $y \in (Y_0+Y_1)^*$ the function $z \mapsto \la T_z(x),y\ra_{Y_0+Y_1,(Y_0+Y_1)^*}$ is continuous and bounded on $\ovl{S}$ and analytic on $S$,
	
\item for any $x \in X_0 \cap X_1$, the functions $t \to T_{\i t}(x)$ and $t \mapsto T_{1+\i t}(x)$ take values in $Y_0$ and $Y_1$,
	
\item there exist positive constants $M_0$ and $M_1$ such that for any $x \in X_0 \cap X_1$
$$
\sup_{t \in \R} \norm{T_{\i t}(x)}_{Y_0}
\leq M_0 \norm{x}_{X_0}
\quad \text{and} \quad
\sup_{t \in \R} \norm{T_{1+\i t}(x)}_{Y_1}
\leq M_0 \norm{x}_{X_1}.
$$
\end{enumerate}
%$$
%\{T_{\i t} : t\in \mathbb{R}\}\subset \B(X_0,Y_0) ,\quad \{T_{1+\i t} : t\in \mathbb{R}\}\subset  \B(X_1,Y_1).
%$$
%Suppose $M_0=\sup_{t \in \R} \norm{T_{\i t}}_{X_0 \to Y_0}$ and $M_1=\sup_{t \in \R} \norm{T_{1+\i t}}_{X_1 \to Y_1}$ are both finite, 
Then for any $ 0< \theta < 1$, $T_\theta$ admits a unique extension as a bounded linear map from the space $(X_0,X_1)_\theta$ into the space $(Y_0,Y_1)_\theta$ and
$$
\norm{T_\theta}_{(X_0,X_1)_\theta \to (Y_0,Y_1)_\theta}
\leq M_0^{1-\theta}M_1^{\theta}.
$$
%\noindent In particular, when $T$ is a constant map, the above theorem implies
%\begin{align}
%\norm{T}_{\B((X_0,X_1)_\theta ,(Y_0,Y_1)_\theta)} 
%\leq \norm{T}_{\B(X_0 ,Y_0)}^{1-\theta}\norm{T}_{\B(X_1 ,Y_1)}^{\theta}. \label{interpolation1}
%\end{align}
\end{thm}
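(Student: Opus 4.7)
The plan is to adapt the classical Stein three-lines argument to this abstract setting, using the Calderón space description of complex interpolation together with a Gaussian regularizer and a density reduction in which the separability of $X_0$ plays an essential role.

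Let $x \in X_0 \cap X_1$ and fix $\epsi > 0$. By definition of $(X_0, X_1)_\theta$, I would choose a function $f$ in the Calderón space $\F(X_0, X_1)$ of functions holomorphic on $S$ and continuous bounded on $\ovl S$, with boundary values in $X_0$ on $\i\R$ and in $X_1$ on $1 + \i\R$, such that $f(\theta) = x$ and $\norm{f}_{\F(X_0, X_1)} \leq \norm{x}_{(X_0, X_1)_\theta} + \epsi$. Since $X_0$ is separable, the finite exponential sums $\sum_j e^{\lambda_j z} x_j$ with $\lambda_j \in \R$ and $x_j \in X_0 \cap X_1$ are dense in $\F(X_0, X_1)$, so up to an arbitrarily small increase of $\epsi$ I may assume that $f(z)$ lies in $X_0 \cap X_1$ for every $z \in \ovl S$. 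In particular $T_z(f(z))$ is literally defined on the whole closed strip.

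For a small parameter $\delta > 0$, introduce the regularized function $F_\delta(z) \ov{\mathrm{def}}{=} e^{\delta(z^2 - \theta^2)} T_z(f(z))$. The key claim is that $F_\delta \in \F(Y_0, Y_1)$ with controlled norm. Writing $f(z) = \sum_j e^{\lambda_j z} x_j$, linearity gives $T_z(f(z)) = \sum_j e^{\lambda_j z} T_z(x_j)$; for any $y \in (Y_0 + Y_1)^*$, hypothesis (1) then implies that $z \mapsto \la T_z(f(z)), y \ra$ is continuous bounded on $\ovl S$ and analytic on $S$, and the Gaussian prefactor $e^{\delta(z^2 - \theta^2)}$ preserves these properties while providing decay of order $e^{-\delta (\Im z)^2}$ as $|\Im z| \to \infty$. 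Hypothesis (2) gives the correct boundary values in $Y_0$ and $Y_1$, while hypothesis (3) delivers the crucial norm bounds
\begin{equation*}
\sup_{t \in \R} \norm{F_\delta(\i t)}_{Y_0} \leq M_0 e^{-\delta \theta^2} \norm{f}_{\F(X_0, X_1)},
\qquad
\sup_{t \in \R} \norm{F_\delta(1 + \i t)}_{Y_1} \leq M_1 e^{\delta(1 - \theta^2)} \norm{f}_{\F(X_0, X_1)}.
\end{equation*}

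Plugging $F_\delta$ into the defining estimate of the interpolation norm at $z = \theta$, where $F_\delta(\theta) = T_\theta(x)$, the three-lines property of $\F(Y_0, Y_1)$ yields
\begin{equation*}
\norm{T_\theta(x)}_{(Y_0, Y_1)_\theta}
\leq \bigl(M_0 e^{-\delta \theta^2}\bigr)^{1-\theta} \bigl(M_1 e^{\delta(1 - \theta^2)}\bigr)^{\theta} \norm{f}_{\F(X_0, X_1)}
= M_0^{1-\theta} M_1^{\theta}\, e^{\delta \theta(1 - \theta)} \norm{f}_{\F(X_0, X_1)}.
\end{equation*}
Letting $\delta \to 0^+$ and then $\epsi \to 0^+$ proves the inequality on $X_0 \cap X_1$, and uniqueness of the extension to $(X_0, X_1)_\theta$ follows from density. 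The main obstacle is the reduction to functions $f$ whose values remain in $X_0 \cap X_1$ throughout the strip: the operators $T_z$ are only hypothesized to be defined on $X_0 \cap X_1$, yet a generic $f \in \F(X_0, X_1)$ takes interior values merely in $X_0 + X_1$, and one simultaneously needs the holomorphy of the \emph{vector-valued} function $z \mapsto T_z(f(z))$ rather than just of its scalar matrix coefficients. This is precisely what the separability-based density of exponential sums resolves, after which the remaining three-lines computation is routine.
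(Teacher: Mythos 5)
First, a point of comparison: the paper does not prove Theorem \ref{thm-Stein} at all --- it is quoted as a known result, with the proof delegated to \cite{CwJ84}, \cite{Lun18} and \cite{Voi92} --- so there is no internal proof to measure your argument against. Your outline does follow the standard strategy of those references (reduce to Gaussian-damped exponential sums with coefficients in $X_0 \cap X_1$, form $F_\delta(z) = e^{\delta(z^2-\theta^2)}T_z(f(z))$, and run a three-lines estimate), and you correctly identify the main obstacle, namely that $T_z$ is only defined on $X_0 \cap X_1$ while a generic $f \in \F(X_0,X_1)$ takes interior values only in $X_0 + X_1$.

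Two steps are nonetheless genuinely incomplete. First, density of the exponential sums in $\F(X_0,X_1)$ is Calder\'on's lemma and is a general fact independent of separability; what you actually need is the sharper statement that for $x \in X_0 \cap X_1$ the norm $\norm{x}_{(X_0,X_1)_\theta}$ is already computed (up to $\epsi$) by such sums $f$ satisfying $f(\theta)=x$ \emph{exactly}. Mere density does not give this: if $g$ approximates $f$ in $\F$, then $g(\theta)$ approximates $x$ only in the interpolation norm, and the natural correction $e^{\delta(z^2-\theta^2)}(x-g(\theta))$ has $\F$-norm controlled by $\norm{x-g(\theta)}_{X_0\cap X_1}$, which is not small. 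The refined lemma is true (it is the one underlying the duality theorem in \cite{BeL76}), but it must be invoked as such. Second, and more seriously, the assertion that $F_\delta \in \F(Y_0,Y_1)$ does not follow from hypotheses (1)--(3): these give only \emph{weak} continuity of $z \mapsto T_z(x_j)$ on $\ovl{S}$ and no continuity at all of the boundary traces $t \mapsto T_{\i t}(x_j)$ in $Y_0$ (resp.\ $t \mapsto T_{1+\i t}(x_j)$ in $Y_1$), whereas membership in the Calder\'on space requires norm continuity on $\ovl{S}$ and on the two boundary lines. Upgrading interior analyticity is fine (Dunford's theorem converts weak analyticity into norm analyticity), but establishing the three-lines inequality for a function with merely weakly continuous, norm-bounded boundary behaviour is precisely the technical content of the abstract Stein theorem and the reason \cite{Voi92} and \cite{CwJ84} exist; it is handled there by a Poisson-integral or duality argument, not by the bare ``three-lines property of $\F(Y_0,Y_1)$''. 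Your sketch assumes away exactly this point.
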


\paragraph{Extension of linear maps}

We will use the following very useful folklore result. We sketch a proof. %is a particular case of \cite[Lemma 1 p.~390]{JuX07}.

\begin{lemma}
\label{lemma-trace-preserving} 
Let $\cal{M}$ and $\cal{N}$ be von Neumann algebras equipped with normal finite faithful traces. Suppose that $1 \leq p < \infty$. Let $T \co \cal{M} \to \cal{N}$ be a trace preserving unital $*$-homomorphism then $T$ induces a complete isometry $T$ from the space $\L^p(\cal{M})$ into $\L^p(\cal{N})$.
\end{lemma}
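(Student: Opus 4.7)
The plan is to exploit the two defining features of $T$ separately: the $*$-homomorphism property gives $\mathrm{C}^*$-algebraic rigidity (isometry on positive functional calculus), while trace preservation transports $\L^p$-norms. First I would observe that $T$ is automatically injective: if $T(x)=0$ for some $x\in\cal{M}$, then $\tau_{\cal{M}}(x^*x)=\tau_{\cal{N}}(T(x^*x))=\tau_{\cal{N}}(T(x)^*T(x))=0$, and faithfulness of $\tau_{\cal{M}}$ forces $x=0$. (This also recovers the fact that $T$ is a complete isometry at the $\mathrm{C}^*$-level, since every injective unital $*$-homomorphism between $\mathrm{C}^*$-algebras is.)

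Next, for any $x\in\cal{M}$, the $*$-homomorphism property gives
$$
|T(x)|^2 = T(x)^*T(x) = T(x^*x) = T(|x|^2),
$$
so by uniqueness of positive square roots in the continuous functional calculus, $|T(x)| = T(|x|)$, and hence $|T(x)|^p = T(|x|)^p = T(|x|^p)$. Combined with trace preservation, this yields
$$
\norm{T(x)}_{\L^p(\cal{N})}^p
= \tau_{\cal{N}}(|T(x)|^p)
= \tau_{\cal{N}}(T(|x|^p))
= \tau_{\cal{M}}(|x|^p)
= \norm{x}_{\L^p(\cal{M})}^p.
$$
So $T$ is isometric for the $\L^p$-norm on the dense subspace $\cal{M}\subset \L^p(\cal{M})$, and therefore extends uniquely to an isometry $T\co \L^p(\cal{M})\to \L^p(\cal{N})$ for every $1\leq p<\infty$.

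For the complete isometry I would tensorize: for each $n\geq 1$, the map $\Id_{\M_n}\otimes T\co \M_n \otvn \cal{M}\to \M_n \otvn \cal{N}$ is again a unital $*$-homomorphism, and it preserves the product traces $\tr_n\otimes \tau_{\cal{M}}$ and $\tr_n\otimes \tau_{\cal{N}}$. Running the scalar argument at this level provides an isometry $\L^p(\M_n \otvn \cal{M})\to \L^p(\M_n\otvn \cal{N})$. Invoking Fubini's identification \eqref{Fubini} to rewrite these noncommutative $\L^p$-spaces as $\M_n(\L^p(\cal{M}))$ and $\M_n(\L^p(\cal{N}))$ (Pisier's matrix level structure), the isometry transfers to each matrix level, giving the desired complete isometry. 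The only delicate point is this identification of matrix spaces over $\L^p(\cal{M})$ with the appropriate tensor product noncommutative $\L^p$-space, but this is part of the background already used in the paper; once granted, the argument is purely computational.
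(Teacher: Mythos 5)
Your proof is correct, but it takes a genuinely different and more direct route than the paper. You exploit the algebraic identity $|T(x)|=T(|x|)$ (hence $|T(x)|^p=T(|x|^p)$ by functional calculus) together with trace preservation to get the isometric identity $\norm{T(x)}_{\L^p(\cal{N})}=\norm{x}_{\L^p(\cal{M})}$ in one line on the dense subspace $\cal{M}$, for every $p$ simultaneously. The paper instead proves only that $T$ is \emph{contractive}: it first extends $T$ to a bounded positive map on $\L^1$ via the positive cone, shows contractivity on $\L^1$ by a duality argument ($\norm{T_1^*}=\norm{T_1^*(1)}=1$), interpolates to get contractions on all $\L^p$, and then obtains the isometry only at the end by running the same machinery on the inverse $*$-homomorphism $T^{-1}\co T(\cal{M})\to\cal{M}$. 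Your argument avoids the duality/interpolation detour and the appeal to the inverse map entirely; what the paper's route buys is that it never needs the square-root/functional-calculus identity and is phrased so as to generalize to maps that are merely positive and trace preserving. The only point in your write-up that deserves sharpening is the last step: Fubini \eqref{Fubini} identifies $\L^p(\M_n\otvn\cal{M})$ with the $S^p_n$-valued space $S^p_n(\L^p(\cal{M}))$, \emph{not} with the operator-space matrix level $\M_n(\L^p(\cal{M}))$ (these carry different norms for $p<\infty$). The argument still closes because, by \eqref{defnormecb} and its isometric counterpart in \cite{Pis98}, a map between operator spaces is a complete isometry if and only if its $S^p$-amplification is an isometry; your scalar argument applied to the trace-preserving unital $*$-homomorphism $\Id_{\M_n}\ot T$ establishes exactly that. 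So the gap you flag is real but is bridged by a result already quoted in the paper.
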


\begin{proof}
We denote by $\tau_{\cal{M}}$ and $\tau_{\cal{N}}$ the traces on $\cal{M}$ and $\cal{N}$. For any positive element $x$ in $\cal{M}$, we have
$$
\norm{T(x)}_{\L^1(\cal{N})}
=\tau_\cal{N}(T(x))
=\tau_\cal{M}(x)
=\norm{x}_{\L^1(\cal{M})}.
$$
Using the density \cite[Lemma 2.2]{ArK23} of the cone $\cal{M}_+$ in the cone $\L^1(\cal{M})_+$ for $\norm{\cdot}_1$, we obtain a continuous map  $T_+ \co \L^1(\cal{M})_+ \to \L^1(\cal{M})_+$, which is clearly additive. By \cite[Lemma 1.26 p.~24]{AlT07}, the map $T_+$ extends uniquely to a real linear positive map $T_{\sa} \co \L^1(\cal{M})_\sa \to \L^1(\cal{N})_\sa$, where $\L^1(\cal{M})_\sa$ is the space of selfadjoint elements in the space $\L^1(\cal{M})$ (the space $\L^1(\cal{N})_\sa$ is defined similarly). Let $x \in \L^1(\cal{M})_{\sa}$. We can write $x = x_1-x_2$ for some positive elements $x_1$ and $x_2$ with $\norm{x_1}_{\L^1(\cal{M})} \leq \norm{x}_{\L^1(\cal{M})}$ and $\norm{x_2}_{\L^1(\cal{M})} \leq \norm{x}_{\L^1(\cal{M})}$. Then
\begin{align*}
\MoveEqLeft
  \norm{T_\sa(x)}_{\L^1(\cal{M})}
=\norm{T_\sa(x_1-x_2)}_{\L^1(\cal{M})}
=\norm{T_\sa(x_1)-T_\sa(x_2)}_{\L^1(\cal{M})}\\
&\leq \norm{T_\sa(x_1)}_{\L^1(\cal{M})}+\norm{T_\sa(x_2)}_{\L^1(\cal{M})}
\leq \norm{x_1}_{\L^1(\cal{M})}+\norm{x_2}_{\L^1(\cal{M})}
\leq 2\norm{x}_{\L^1(\cal{M})}.
\end{align*}
If $x \in \L^1(\cal{M})$ we can write $x=x_1+\i x_2$ for some selfadjoint elements $x_1$ and $x_2$ with $\norm{x_1}_{\L^1(\cal{M})} \leq \norm{x}_{\L^1(\cal{M})}$ and $\norm{x_2}_{\L^1(\cal{M})} \leq \norm{x}_{\L^1(\cal{M})}$. We obtain
\begin{align*}
\MoveEqLeft
\norm{T_\sa(x_1)+\i T_\sa(x_2)}_{\L^1(\cal{M})}         
\leq \norm{T_\sa(x_1)}_{\L^1(\cal{M})} +\norm{T_\sa(x_2)}_{\L^1(\cal{M})}  \\
&\leq 2\norm{x_1}_{\L^1(\cal{M})} +2\norm{x_2}_{\L^1(\cal{M})} 
\leq 4 \norm{x}_{\L^1(\cal{M})}.
\end{align*}
By a complexification argument, we obtain a positive linear map $T_1 \co \L^1(\cal{M}) \to \L^1(\cal{M})$ with $\norm{T_1}_{\L^1(\cal{M}) \to \L^1(\cal{M})} \leq 4$, which extends $T$.

By duality, $R \ov{\mathrm{def}}{=} T_1^* \co \cal{N} \to \cal{M}$ is a bounded positive map. Consequently, by \cite[Corollary 2.9 p.~15]{Pau02} we have $\norm{R}_{\cal{N} \to \cal{M}} = \norm{R(1)}_\cal{M}$. Moreover, we have %\textbf{(if $\cal{M}_+$ ?is? a positively norming set, second equality)}
\begin{align*}
\MoveEqLeft
    \norm{R(1)}_\cal{M} 
=\norm{T_1^*(1)}_\cal{M}
=\sup_{x \in \cal{M}_+, \norm{x}_1 \leq 1} \tau(T_1^*(1)x)
=\sup_{x \in \cal{M}_+, \norm{x}_1 \leq 1} \tau(1T_1(x))\\
&=\sup_{x \in \cal{M}_+, \norm{x}_1 \leq 1} \tau(T(x))
= \sup_{x \in \cal{M}_+, \norm{x}_1 \leq 1} \tau(x)
= 1.
\end{align*}
Thus $R \co \cal{N} \to \cal{M}$ is contractive, and consequently $T_1 \co \L^1(\cal{M}) \to \L^1(\cal{N})$ is also contractive. Then by complex interpolation, $T$ extends to a contraction on the Banach space $\L^p(\cal{M})$ for all $1 \leq p \leq \infty$. 

%By \cite[Lemma 1 p.~390]{JuX07}, the map $T$ is normal and induces a contraction $T_p \co \L^p(\cal{M}) \to \L^p(\cal{N})$. 
Replacing $T$ by $\Id \ot T$, we obtain a complete contraction $T$. By \cite[Lemma 2.1 p.~2283]{Arh19}, the map $T$ is injective. So we can consider the $*$-homomorphism $T^{-1} \co T(\cal{M}) \to \cal{M}$, which extends to a complete contraction from the space $\L^p(T(\cal{M}))$ into $\L^p(\cal{M})$. The conclusion is transparent with \cite[(1.2.7) p.~6]{BLM04}.
\end{proof}

%\begin{remark} \normalfont
%Actually, it is known that such a map $T \co \cal{M} \to \cal{N}$ is normal by \cite[p.~249]{AcC82}.
%\end{remark}

%%%%%%%%%%%%%%%%%%%%%%%%%%%%%%%%%%%%%%%%%%%%%%%%%%%%%%%%%%%%%%%%%%%%%%%%%%%%%%%%%%%%%%%%%%%%%
\section{Radial multipliers on discrete hypercubes and fermion algebras}
\label{Sec-multipliers}

In this section, we introduce a notion of multipliers acting on fermion algebras. Some of these operators identify to well-known quantum channels (See Example \ref{Ex-multiplier-Fermions}).

\paragraph{The $n$-dimensional discrete hypercube and the Cantor group} 
If $n \in \{1,2,\ldots,\infty\}$, we equip the compact abelian group $\Omega_n \ov{\mathrm{def}}{=} \{-1,1\}^n$ with its canonical normalized Haar measure $\mu$. If $n$ is finite then the group $\Omega_n$ is sometimes called the $n$-dimensional discrete hypercube. For any complex function $f \co \Omega_{n} \to \mathbb{C}$, the integral over $\Omega_{n}$ is given by 
\begin{equation}
\label{measure-hypercube}
\int_{\Omega_{n}} f
=\frac{1}{2^n}\sum_{\epsi \in \{-1,1\}^n} f(\epsi),
\end{equation}
i.e.~the average over uniformly chosen signs. The group $\Omega_\infty$ is the Cantor group. For any integer $1 \leq i \leq n$, we denote by $\epsi_i \co \{-1,1\}^n \to \{-1,1\}$ the $i$-th coordinate on the product $\{ -1, 1 \}^n$. We can see the $\epsi_i$'s as independent Rademacher variables. We refer to \cite{EdW82} and \cite{SWS90} for more information on the Cantor group and to \cite{SBSW15a} and \cite{SBSW15b} for the historical aspect of the associated <<dyadic analysis>>. For a non-empty finite subset $A=\{i_1,\ldots,i_k\}$ of the set $\{1,\ldots,n\}$ with $i_1 < \cdots < i_n$ we define the Walsh function $w_A \co \Omega_n \to \R$ by
$$
w_A 
\ov{\mathrm{def}}{=} \epsi_{i_1} \cdots \epsi_{i_k}. 
$$
We also set $w_\emptyset \ov{\mathrm{def}}{=} 1$. In other words, for any finite subset $A$ of $\{1,\ldots,n\}$, we have
\[ 
w_A(\epsi_1,\ldots,\epsi_n)
=\prod_{j \in A} \epsi_j, \quad \epsi_1,\ldots,\epsi_n \in \{-1,1\}.
\]
If $n=\infty$, the family $(w_A)_{A \subset \{1,\ldots,n\}, A \textrm{ finite}}$ is the Walsh system and it is well-known that it can be identified with the Pontryagin dual $\hat{\Omega}_\infty$ of the Cantor group $\Omega_\infty$. It is worth noting that the Walsh system is an orthonormal basis of the Hilbert space $\L^2(\Omega_\infty)$. Finally, note that for any finite subsets $A$ and $B$ of $\{1,\ldots,n\}$ we have
\begin{equation}
\label{prod-walsh}
w_A w_B
=w_{A\Delta B},
\end{equation}
where we use the symmetric difference $A \Delta B \ov{\mathrm{def}}{=} \left(A \setminus B\right) \cup \left(B \setminus A\right)$, and
\begin{equation}
\label{int-walsh}
\int_{\Omega_n} w_A
= 0 \quad \text{if } A \not= \emptyset, \quad \text{i.e.} \quad \int_{\Omega_n} w_A=\delta_{A,\emptyset},
\end{equation}
where $\delta$ is the Kronecker symbol.

It is standard to write $\hat{f}(A)$ for the Fourier coefficient $\hat{f}(w_A)$. Every function $f \co \{-1,1\}^n \to \mathbb{C}$ can be uniquely written as
\begin{equation}
\label{Fourier-decompo}
f(\epsi_1,\ldots,\epsi_n)
\ov{\eqref{Fourier-inversion-formula}}{=} \sum_{A \subset \{1,\ldots,n\}} \hat f(A) w_A(\epsi_1,\ldots,\epsi_n),
\end{equation}
where the Fourier coefficient $\hat f(A)$ is given by
$$
\hat f(A)
\ov{\eqref{Fourier-transform-I}}{=} \int_{\Omega_n} w_A(\epsi)f(\epsi) \d\mu(\epsi).
$$
In particular, we have 
\begin{equation}
\label{Fourier-Walsh}
\hat{w}_A(B)
=\delta_{A,B}, \quad A,B \subset \{1,\ldots,n\}.
\end{equation}
%what follows, whenever  we use
%the probabilistic notation.
%$$
%\widehat f(A)
%=\E_\epsi \Bigl(f(\epsi)w_A(\epsi) \Bigr).
%$$

%\begin{equation}
%\label{radial-as-multiplier}
%T_g
%=C_{\sum_A g(|A|)w_A}
%\end{equation}

\paragraph{Radial multipliers on the $n$-dimensional discrete hypercube}
We introduce some <<radial multipliers>> on the finite group $\Omega_n$. Consider a complex function $\phi \co \{0,\ldots,n\} \to \mathbb{C}$ and the complex function $f_\phi \co \Omega_n \to \mathbb{C}$ defined by 
\begin{equation}
\label{def-hg}
f_\phi
\ov{\mathrm{def}}{=} \sum_A \phi(|A|)w_A.
\end{equation}
By the uniqueness of the decomposition \eqref{Fourier-decompo}, we see that that the Fourier coefficients of this function are given by
\begin{equation}
\label{coeff-1}
\widehat{f_\phi}(A)       
=\phi(|A|), \quad A \subset \{1,\ldots,n\}.
\end{equation}
Now, we introduce the convolution operator $T_\phi \co \L^\infty(\Omega_n) \to \L^\infty(\Omega_n)$, $g \mapsto f_\phi*g$. By \eqref{convol-finite-group-2}, this operator is the convolution operator by the measure $\nu \ov{\mathrm{def}}{=} f_\phi \cdot \mu$ on the group $\Omega_n$ defined by $\nu(\{s\})=\frac{1}{|\Omega_n|}f_\phi(s)$ for any $s \in \Omega_n$. By Theorem \ref{thm-positivity}, such a linear operator is a Fourier multiplier satisfying the formula 
\begin{equation}
\label{}
\widehat{T_\phi(w_A)}(B)
\ov{\eqref{Fourier-fin}}{=} \widehat{f_\phi \cdot \mu}(B) \hat{w}_A(B)
\ov{\eqref{Fourier-Walsh}}{=} \hat{f}_\phi(B)\delta_{A,B}
=\hat{f}_\phi(A)\delta_{A,B}
\ov{\eqref{coeff-1}}{=} \phi(|A|)\delta_{A,B}
\end{equation}
for any finite subsets $A$ and $B$ of $\{1,\ldots,n\}$. The uniqueness of the Fourier decomposition \eqref{Fourier-decompo} implies that is it equivalent to
\begin{equation}
\label{def-mult-rad}
T_\phi(w_A)
=\phi(|A|) w_A, \quad A \subset \{1,\ldots,n\}.
\end{equation}

\paragraph{Fermion algebras}
We refer to \cite{DeG13}, \cite[Section 9.B]{JMX06}, and \cite[Section 9.3]{Pis2} and \cite{PlR94} for information, other points of view or variants of these algebras. Let $H$ be a  real Hilbert space with dimension $n \in \{2,\ldots,\infty\}$. Here, we limit ourselves to defining $\scr{C}(H)$ when $H$ is finite-dimensional and we refer to previous references for the more involved infinite-dimensional case. In this case, the fermion algebra $\scr{C}(H)$ is the algebra generated by a sequence $(s_1,\ldots,s_n)$ of selfadjoint unitary operators $\not=\pm \Id$ acting on a complex Hilbert space $\cal{H}$ which satisfy the anticommutation rule
\begin{equation}
\label{anticommute-bis}
s_is_j
=-s_js_i, \quad i \not= j.
\end{equation}
%\begin{equation}
%\label{anticommute}
%s_is_j
%=-s_js_i, \quad i \not= j.
%\end{equation}operators on a finite-dimensional Hilbert space satisfying the canonical anti-commutation relations (CAR):
%QjQk + QkQj = 2δjk1 .
The element $s_A$ belonging to $\scr{C}(H)$ is defined for any non-empty finite subset $A=\{i_1,\ldots,i_k\}$ of the set $\{1,\ldots,n\}$ with $i_1 < \cdots < i_k$ by
\begin{equation}
\label{def-sA}
s_A 
\ov{\mathrm{def}}{=} s_{i_1} \cdots s_{i_k}.
\end{equation}
We also let $s_\emptyset \ov{\mathrm{def}}{=} 1$. Whether $H$ is finite-dimensional or not, recall that this algebra is equipped with a normalized normal finite faithful trace $\tau$ satisfying
\begin{equation}
\label{trace-fer}
\tau(s_A) 
= 0 \quad \text{if } A \not= \emptyset, 
\quad \text{i.e.} \quad 
\tau(s_A)=\delta_{A,\emptyset}.
\end{equation}
Moreover, it is well-known that the involutive algebra generated by the $s_i$'s is equal to the subspace $\mathrm{span}\{ s_A : A \subset \{1,\ldots,n\}\}$. The weak* closure of this algebra is the von Neumann algebra $\scr{C}(H)$. Moreover, for any finite subsets $A$ and $B$ of $\{1,\ldots,n\}$ we have 
\begin{equation}
\label{product-sA}
s_As_B
=(-1)^{n(A,B)}s_{A \Delta B},
\end{equation}
where $n(A,B)$ is an integer depending only on $A$ and $B$. Finally, we have 
\begin{equation}
\label{adjoint-sA}
(s_A)^*
=(-1)^{n_A} s_A
\end{equation}
for some integer $n_A$.

%There is a normalized trace $\tau$ on this algebra such that $\tau(s_A) = 0$ if $A \not= \emptyset$.  

%It is known that the $s_A$'s define an orthonormal basis of the Hilbert space $\L^2(\scr{C}(H))$. 
%Let $R$ be the $w*$-closure of $R_0$. $R$ is again a hyperfinite factor, so $R$ is isomorphic to $N$. 
%Note that the von Neumann subalgebra generated by ${s_1 , \ldots , s_{2n}}$ is isomorphic to $\M_{2n}$.

%Assume now that $H$ is infinite dimensional, and let $(e_i)_{i \geq 1}$ be an orthonormal family. We let $s_i=w(e_i)$ for any $i \geq 1$. 

%It is well-known that these operators form a `spin system'. Namely they are hermitian unitaries on $\Lambda(\H)$ and
%$$
%s_is_j 
%= -s_js_i,
%\qquad\hbox{ if}\,\ i\not= j.
%$$
%We let $I$ be the set of all increasing finite sequences $\{i_1 < i_2 <\cdots < i_m\}$ of positive integers. 
%If $F$ is such a sequence we let
%$$
%s_A 
%= s_{i_1}\cdots s_{i_m}.
%$$
%By convention, the empty set belongs to $\I$, and we let $s_{\emptyset} = 1$. Also we write $\vert F\vert$ for the cardinal of $A \in \I$. 
%Since the $s_i$'s form a spin system, the $*$-algebra they generate is equal to
%$$
%\P 
%= \textrm{span} \big\{ s_A :  A \in I \big\}.
%$$
%Thus $\P$ is weak* dense in $\scr{C}(H)$, and it is dense in $\L^p(\scr{C}(H))$ for any $1 \leq p <\infty$. 

\paragraph{Radial multipliers on fermion algebras}
We now present a notion of <<radial multipliers>> on fermion algebras which parallels the previous notion of radial multipliers on discrete hypercubes. Such a multiplier is a weak* continuous map $R_\phi \co \L^\infty(\scr{C}(H)) \to \L^\infty(\scr{C}(H))$ defined by a complex function $\phi \co \{0,\ldots,n\} \to \mathbb{C}$ by the formula
\begin{equation}
\label{def-rad-mult}
R_\phi(s_A)
= \phi(\vert A\vert) s_A, \quad A \subset \{1,\ldots,n\}.
\end{equation}
In this situation, we say that the complex function $\phi \co \{0,\ldots,n\} \to \mathbb{C}$ is the symbol of the radial multiplier.

%Now, we introduce some multipliers inspired by \cite{HoR11}.

%\begin{defi}
%Let $n \in \mathbb{N} \cup \{\infty\}$. For any function $g \co \{0,\ldots,N\} \to \mathbb{C}$, 
%Then we say that a weak* continuous linear map $T \co \scr{C}(H) \to \scr{C}(H)$ is a <<radial multiplier>> if there exists a complex function $g \co \{0,\ldots,N\} \to \mathbb{C}$ such that
%\begin{equation}
%\label{def-rad-mult}
%T(s_A)
%= g(\vert A\vert) s_A, \quad A \subset \{1,\ldots,N\}.
%\end{equation}
%
%
%
%the linear map
%\begin{equation}
%\label{def-rad-mult}
%M_g(s_A)
%= g(\vert A\vert) s_A
%\end{equation}
%is called radial multiplier if it induces a weak* continuous linear map on the algebra $\scr{C}(H)$. We say that $g$ is the symbol of the multiplier.
%\end{defi}

\begin{remark} \normalfont
\label{remark-trace-preserving}
It is easy to see that such a radial multiplier $R_\phi$ is unital if and only if $\phi(0)=1$. This is also equivalent to say that the map $R_\phi$ is trace preserving.
\end{remark}

Let us now explore this concept when applied to the fermion algebra of the smallest dimension.

\begin{example} \normalfont
\label{Ex-multiplier-Fermions}
Consider the case $n=2$ and the Pauli matrices $s_1
\ov{\mathrm{def}}{=} \begin{bmatrix}
      0&1\\
      1&0
    \end{bmatrix} $ and $s_2 \ov{\mathrm{def}}{=}
    \begin{bmatrix}
      0& -\i \\
      \i&0
    \end{bmatrix} $.  
		%$s_3=
    %\begin{bmatrix}
      %1&0\\
      %0&-1
    %\end{bmatrix}
%$. 
These matrices are selfadjoint unitaries and anticommute. We have $s_{\{1,2\}}\ov{\eqref{def-sA}}{=} s_1s_2=\begin{bmatrix}
      0&1\\
      1&0
    \end{bmatrix}    \begin{bmatrix}
      0& -\i \\
      \i&0
    \end{bmatrix} 
		=    \begin{bmatrix}
    \i  & 0 \\
    0  & -\i
    \end{bmatrix}=\i \sigma_3$ where $\sigma_3 \ov{\mathrm{def}}{=} \begin{bmatrix}
    1 & 0 \\
    0 & -1
\end{bmatrix}
$. 
Note that a 2x2 complex matrix $\begin{bmatrix}
    a & b \\
    c & d
\end{bmatrix}$ can be written %(\href{https://physics.stackexchange.com/questions/292102/how-do-one-show-that-the-pauli-matrices-together-with-the-unit-matrix-form-a-bas}{mathstack})
\begin{equation}
\label{decompo}
\begin{bmatrix}
    a & b \\
    c & d
\end{bmatrix}
=\frac{a+d}{2}\I +\frac{b+c}{2}\sigma_1+\frac{\i(b-c)}{2}\sigma_{2}+\frac{a-d}{2\i}\i\sigma_3.
\end{equation}
Consider a complex function $\phi \co \{0,1,2\} \to \mathbb{C}$. Let us concretely describe the associated radial multiplier $R_\phi \co \M_2 \to \M_2$, $\alpha \I_2+\beta s_1+\gamma s_2+\zeta s_{\{1,2\}} \mapsto \phi(0)\alpha \I_2+\phi(1)\beta s_1+\phi(1)\gamma s_2+\phi(2)\zeta s_{\{1,2\}}$. For any matrix $\begin{bmatrix}
    a & b \\
    c & d
\end{bmatrix}$ of $\M_2$, we have
\begin{align*}
\MoveEqLeft
R_\phi\left(\begin{bmatrix}
    a & b \\
    c & d
\end{bmatrix}\right)         
\ov{\eqref{decompo}}{=} R_\phi\left(\frac{a+d}{2}\I +\frac{b+c}{2}s_1+\frac{\i(b-c)}{2}s_{2}+\frac{a-d}{2\i}\i\sigma_3\right)  \\
& \ov{\eqref{def-rad-mult}}{=} \phi(0)\frac{a+d}{2}\I +\phi(1)\frac{b+c}{2}s_1+\phi(1)\frac{\i(b-c)}{2}s_{2}+\phi(2)\frac{a-d}{2\i}\i \sigma_3\\
&=\begin{bmatrix}
  \phi(0)\frac{a+d}{2}+\phi(2)\frac{a-d}{2}  & \phi(1)\frac{b+c}{2}+\phi(1)\frac{b-c}{2} \\
  \phi(1)\frac{b+c}{2}-\phi(1)\frac{b-c}{2}   & \phi(0)\frac{a+d}{2}-\phi(2)\frac{a-d}{2}
\end{bmatrix} 
=\begin{bmatrix}
  \phi(0)\frac{a+d}{2}+\phi(2)\frac{a-d}{2}  & \phi(1)b \\
  \phi(1)c   & \phi(0)\frac{a+d}{2}-\phi(2)\frac{a-d}{2}
\end{bmatrix}.
\end{align*}
In particular, if $\phi(0)=1$, $\phi(1)=1-2t$ and $\phi(2)=1$ for $0 \leq t \leq 1$, we obtain
\begin{equation}
\label{}
R_\phi\left(\begin{bmatrix}
    a & b \\
    c & d
\end{bmatrix}\right) 
=\begin{bmatrix}
 1    & 1-2t \\
 1-2t    & 1
\end{bmatrix}*\begin{bmatrix}
    a & b \\
    c & d
\end{bmatrix},
\end{equation}
where $*$ is the Schur product. Consequently, this radial multiplier identifies to the dephasing channel $T_t \co \M_2 \to \M_2$, $x \mapsto (1-t)x+tZxZ$ of \cite[p.~155]{Wil17} (see also \cite[Example~2.14 p.~19]{Pet08}) defined with the Pauli matrix $Z
\ov{\mathrm{def}}{=} \begin{bmatrix}
   1  &  0 \\
   0  &  -1 \\
\end{bmatrix}$. Indeed, for any complex matrix $\begin{bmatrix}
    a & b \\
    c & d
\end{bmatrix}$, we have
\begin{align*}
\MoveEqLeft
T_t\left(\begin{bmatrix}
    a & b \\
    c & d
\end{bmatrix}\right)         
=(1-t)\begin{bmatrix}
    a & b \\
    c & d
\end{bmatrix}+t\begin{bmatrix}
   1  &  0 \\
   0  &  -1 \\
\end{bmatrix}\begin{bmatrix}
    a & b \\
    c & d
\end{bmatrix}\begin{bmatrix}
   1  &  0 \\
   0  &  -1 \\
\end{bmatrix} \\
&=(1-t)\begin{bmatrix}
    a & b \\
    c & d
\end{bmatrix}+t\begin{bmatrix}
   1  &  0 \\
   0  &  -1 \\
\end{bmatrix}\begin{bmatrix}
    a & -b \\
    c & -d
\end{bmatrix} \\
&=(1-t)\begin{bmatrix}
    a & b \\
    c & d
\end{bmatrix}+t\begin{bmatrix}
    a & -b \\
    -c & d
\end{bmatrix}
=\begin{bmatrix}
    a & (1-2t)b \\
    (1-2t)c & d
\end{bmatrix} .
\end{align*}
\end{example}

\begin{remark} \normalfont
If we replace the Pauli matrices $\begin{bmatrix}
      0&1\\
      1&0
    \end{bmatrix} $ and $
    \begin{bmatrix}
      0& -\i \\
      \i&0
    \end{bmatrix}$ by $\begin{bmatrix}
   1  & 0  \\
   0  & -1  \\
\end{bmatrix}$ and $
\begin{bmatrix} 
0&1\\ 
1&0
\end{bmatrix}$, it seems to us that we obtain different channels.
\end{remark}

\begin{remark} \normalfont
It would be interesting to look at the higher dimensions with the help of a computer.
\end{remark}

\begin{example} \normalfont
\label{Ex-Ornstein}
Let $H$ be a separable real Hilbert space with dimension $n \in \{2,\ldots,\infty\}$. The fermionic Ornstein-Uhlenbeck semigroup $(T_t)_{t \geq 0}$ is a Markov semigroup of operators acting on the von Neumann algebra $\scr{C}(H)$. Each contraction $T_t \co \L^\infty(\scr{C}(H)) \to \L^\infty(\scr{C}(H))$ is defined by
$$
T_t(s_A)
= \e^{-t\vert A\vert} s_A, \quad A \subset \{1,\ldots,n\}.
$$
So each operator $T_t$ is a radial multiplier with associated function $\phi_t\co \{0,\ldots,n\} \to \R$ defined by $\phi_t(|A|)=\e^{-t|A|}$. We refer to \cite{CaL93} and \cite[Section 9.B]{JMX06} for more information.
\end{example}

%%%%%%%%%%%%%%%%%%%%%%%%%%%%%%%%%%%%%%%%%%%%%%%%%%%%%%%%%%%%%%%%%%%%%%%%%%%%%%%%%%%%%%%%%%%%%
\section{Ergodic actions of discrete hypercubes on fermion algebras}
\label{sec-ergodic-action}

\paragraph{Coproducts} We need some classical construction of quantum group theory, see \cite{Kus05} for more information on this theory. Let $G$ be a compact group equipped with its normalized Haar measure. For any continuous function $f \co G \to \mathbb{C}$, we define
\begin{equation}
\label{Delta_commutatif}
(\Delta f)(s,t)
\ov{\mathrm{def}}{=} f(st),\quad s,t\in G.
\end{equation} 
It is well-known \cite[p.~145]{Kus05} that $\Delta$ induces an integral preserving normal unital injective $*$-homomorphism $\Delta \co \L^\infty(G) \to \L^\infty(G) \otvn \L^\infty(G)=\L^\infty(G \times G)$, called coproduct, which extends by Lemma \ref{lemma-trace-preserving} to a complete isometry from the Banach space $\L^p(G)$ into the space $\L^p(G \times G)$ for any $1 \leq p < \infty$, also denoted by $\Delta \co \L^p(G) \to \L^p(G \times G)$. Note that $(\L^\infty(G),\Delta)$ is a compact quantum group.

%If $\mu_G$ is a left Haar measure on $G$, then a left Haar weight $\varphi$ on the abelian von Neumann algebra $\L^\infty(G)$ is given by $\varphi(f)=\int_G f \d \mu_G$ and similarly we obtain a right Haar weight $\psi$ with a right Haar measure. In this situation, $\L^\infty(G)$ equipped with $\Delta$, $\varphi$ and $\psi$ is a locally compact quantum group which is compact if and only if the group $G$ is compact. 

%The antipode $S$ is given by $(Sf)(s)=f(s^{-1})$ where $s \in G$. If $G$ is compact, for any $f \in \L^\infty(G)$, we can show that 
%$$
%\hat{f}(\pi)
%=\int_G \pi(s)^*f(s) \d\mu_G(s),\quad \pi \in \Irr(G).
%$$ 
%\begin{example} \normalfont
%\begin{equation}
%\label{coproduct-VNG}
%\Delta(w_A)
%\ov{\mathrm{def}}{=} w_A \ot w_A,\quad .
%\end{equation} 
\begin{example} \normalfont
Consider the case of the compact abelian group $\Omega_n$ for $n \in \{1,2,\ldots,\infty\}$ and the linear map $\Delta \co \L^p(\Omega_n) \to \L^p(\Omega_n \times \Omega_n)$ for some $1 \leq p \leq \infty$. For any finite subset $A$ of the set $\{1,\ldots,n\}$, we have
\begin{align*}
\MoveEqLeft
\Delta(w_A)(\omega_1,\omega_2)         
 \ov{\eqref{Delta_commutatif}}{=} w_A(\omega_1 \omega_2)  
=w_A(\omega_1) w_A(\omega_2)
=(w_A \ot w_A)(\omega_1,\omega_2), \quad \omega_1,\omega_2 \in \Omega_n,
\end{align*}
where we used the fact that each Walsh function $w_A$ is a character of the abelian group $\Omega_n$. We conclude that
\begin{equation}
\label{delta-epsi}
\Delta(w_A)
=w_A \ot w_A, \quad A \subset \{1,\ldots,n\}.
\end{equation}
\end{example}

\paragraph{Group actions} Recall the following notion of action of \cite[Definition I.1]{Eno77} and \cite[Definition 1.1]{Vae01}. We also refer to \cite{DeC17} for more information on actions of compact quantum groups. A right action of a compact group $G$ on a von Neumann algebra $\cal{M}$ is a normal unital injective $*$-homomorphism $\alpha \co \cal{M} \to \cal{M} \otvn \L^\infty(G)$ such that
\begin{equation}
\label{Def-corep}
(\alpha \ot \Id) \circ \alpha
=(\Id \ot \Delta) \circ \alpha.
\end{equation}
Similarly, a left action of $G$ on a von Neumann algebra $\cal{M}$ is a normal unital injective $*$-homomorphism $\beta \co \cal{M} \to \L^\infty(G) \otvn \cal{M}$ such that
\begin{equation}
\label{Def-corep-left}
(\Id \ot \beta) \circ \beta
=(\Delta \ot \Id) \circ \beta.
\end{equation}

\begin{remark} \normalfont
Let us recall that there is a simple passage from right to left actions (and conversely) if $G$ is abelian. Indeed, if $\flip \co \cal{M} \otvn \L^\infty(G) \to \L^\infty(G) \otvn \cal{M}$, $x \ot y \mapsto y \ot x$ denotes the flip map it is easy to check that a map $\alpha \co \cal{M} \to \cal{M} \otvn \L^\infty(G)$ is a right action of $G$ on $\cal{M}$ if and only if $\beta \ov{\mathrm{def}}{=} \flip \circ \alpha \co \cal{M} \to \L^\infty(G) \otvn \cal{M}$ is a left action of the group $G$.
 %\cite[Section 4]{KuV03}, i.e. the locally compact quantum group defined by $\L^\infty(G^\op) \ov{\mathrm{def}}{=} \L^\infty(G)$ and the reversed coproduct $\Delta_{G^\op} \ov{\mathrm{def}}{=} \flip \circ \Delta_G$ where $\flip$ denotes again a suitable flip map. 
This allows simple translation of many results about right actions to left actions and vice versa.
\end{remark}

\begin{remark} \normalfont
\label{Rem-classical-action}
Let $\cal{M}$ be a von Neumann algebra and $G$ be a compact group. Suppose that we have a group homomorphism $\Phi \co G \to \Aut(\cal{M})$ such that for each $x \in \cal{M}$ the function $G \to \cal{M}$, $s \mapsto \Phi_s(x)$ is continuous if $\cal{M}$ is endowed with the weak* topology. This homomorphism can be translated to a map $\alpha \co \cal{M} \to \cal{M} \otvn \L^\infty(G)=\L^\infty(G,\cal{M})$, $x \mapsto \alpha(x)$ where $\alpha(x)(s) \ov{\mathrm{def}}{=} \Phi_s(x)$. By \cite[p.~263]{Str81} or \cite[Proposition I.3]{Eno77}, the map $\alpha$ is a right action of the group $G$ on the von Neumann algebra $\cal{M}$. A reverse procedure is described in 
\cite[p.~265]{Str81} and \cite[Proposition I.3]{Eno77}. So there is a bijective correspondence between such homomorphisms and right actions. 
\end{remark}

\paragraph{Ergodic actions} For a right action $\alpha \co \cal{M} \to \cal{M} \otvn \L^\infty(G)$ of a compact group $G$ as in \eqref{Def-corep}, we introduce the fixed point algebra \cite[Definition 1.2]{Vae01}
$$
\cal{M}^\alpha 
\ov{\mathrm{def}}{=} \{x \in \cal{M} : \alpha(x) = x \ot 1\}.
$$
Note that this algebra is a unital sub-von Neumann algebra of the von Neumann algebra $\cal{M}$. If $\Phi \co G \to \Aut(\cal{M})$ is the associated group homomorphism, we have 
$$
\cal{M}^\alpha=\{x \in \cal{M}: \Phi_s(x)=x \text{ for any } s \in G\}.
$$ 
The right action $\alpha$ is said to be ergodic if the fixed point subalgebra $\cal{M}^\alpha$ equals $\mathbb{C}1$. 

If the compact group $G$ is equipped with its normalized Haar measure and if $\alpha \co \cal{M} \to \cal{M} \otvn \L^\infty(G)$ is an ergodic right action then, by adapting the reasoning found in \cite[pp.~94-95]{Boc95}, originally applied to $\mathrm{C}^*$-algebras, to the context of von Neumann algebras, the von Neumann algebra $\cal{M}$ admits a unique normal state $h_\cal{M}$ determined by the following condition
\begin{equation}
\label{Def-ergodic}
\bigg(\Id \ot \int_G\bigg) \circ \alpha(x)
=h_\cal{M}(x)1_{\cal{M}}, \quad x \in \cal{M}.
\end{equation} 
This means that for any $x \in \cal{M}$ the vector-valued integral $\int_G \alpha(x)(s) \d s$ (Gelfand integral) is a scalar operator $h_\cal{M}(x)1_{\cal{M}}$. Then $h_\cal{M}$ is the unique $G$-invariant normal state on the von Neumann algebra $\cal{M}$. Note that $h_\cal{M}$ is faithful by adapting \cite[Remark 2]{Boc95} to the context of von Neumann algebras.

The right action $\alpha \co \cal{M} \to \cal{M} \otvn \L^\infty(G)$ preserves the states, since for any $x \in \cal{M}$, we have
\begin{align*}
\MoveEqLeft
\bigg(h_\cal{M} \ot \int_G\bigg) \circ \alpha(x)
=h_\cal{M}\bigg(\bigg(\Id \ot \int_G\bigg) \circ \alpha(x)\bigg)
\ov{\eqref{Def-ergodic}}{=} h_\cal{M}\big(h_\cal{M}(x)1_{\cal{M}}\big) \\
&=h_\cal{M}(x)h_\cal{M}(1_{\cal{M}})
=h_\cal{M}(x).         
\end{align*}
If there exists a right ergodic action of a compact group $G$ on a von Neumann algebra $\cal{M}$ then it is known \cite[Corollary 4.2]{HLS81} that the von Neumann algebra $\cal{M}$ is necessarily finite and injective. Moreover, the proof of \cite[Corollary 4.2]{HLS81} shows that the state $h_\cal{M}$ is necessarily a trace. Finally, we refer to \cite{AHK80} and \cite{OPT80} for a complete classification of ergodic actions of \textit{abelian} compact groups\footnote{\thefootnote. Note that \cite[Lemma 1.24]{ChH1} says that abelian compact groups are ergodically rigid, i.e.~the only ergodic actions of abelian compact groups are on type I von Neumann algebras. Unfortunately, this contradicts the results of \cite{OPT80}. So the result \cite[Lemma 1.24]{ChH1} is false (confirmed by email).} and to the important papers \cite{Was88a}, \cite{Was88b} and \cite{Was89}.

We will use the the following result from \cite{Arh24b}.

\begin{prop}
\label{Prop-transfer-1}
Let $G$ be a compact group equipped with its normalized Haar measure and let $\cal{M}$ be a finite von Neumann algebra. Let $\alpha \co \cal{M} \to \cal{M} \otvn \L^\infty(G)$ be an ergodic right action. Consider $1 \leq p \leq \infty$. The map $\alpha$ induces a completely contractive map $\alpha \co \L^1(\cal{M}) \to \L^p(\cal{M},\L^1(G))$, where $\cal{M}$ is equipped with the unique $G$-invariant trace.
\end{prop}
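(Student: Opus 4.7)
The plan is to obtain the result by operator-space complex interpolation between the endpoints $p=1$ and $p=\infty$, exploiting the fact that $\alpha$ is a trace-preserving $*$-homomorphism. The trace-preservation is already noted in the excerpt: applying $h_\cal{M}$ to both sides of the ergodicity identity \eqref{Def-ergodic} yields $(h_\cal{M} \otimes \int_G) \circ \alpha = h_\cal{M}$. Combined with the fact that $\alpha$ is a unital injective $*$-homomorphism, Lemma \ref{lemma-trace-preserving} furnishes a complete isometry $\alpha \co \L^1(\cal{M}) \to \L^1(\cal{M} \otvn \L^\infty(G))$, and Fubini \eqref{Fubini} identifies the target with $\L^1(\cal{M}, \L^1(G))$, handling the case $p=1$.

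For the case $p=\infty$ I would invoke the variational formula \eqref{formula-Junge}. Given $x \in \cal{M}$, the polar decomposition $x = u|x|$ produces the factorization $\alpha(x) = yz$ with $y \ov{\mathrm{def}}{=} \alpha(u|x|^{1/2})$ and $z \ov{\mathrm{def}}{=} \alpha(|x|^{1/2})$, so that $yy^* = \alpha(u|x|u^*)$ and $z^*z = \alpha(|x|)$. Applying the conditional expectation $\E \ov{\mathrm{def}}{=} \Id \otimes \int_G$, using $\E \circ \alpha = h_\cal{M}(\cdot) 1_\cal{M}$ from \eqref{Def-ergodic}, the trace property of $h_\cal{M}$, and the fact that $u^*u$ is the support projection of $|x|$, gives $\E(yy^*) = \E(z^*z) = \norm{x}_{\L^1(\cal{M})} \cdot 1_\cal{M}$; plugging into \eqref{formula-Junge} yields the scalar bound $\norm{\alpha(x)}_{\L^\infty(\cal{M}, \L^1(G))} \leq \norm{x}_{\L^1(\cal{M})}$. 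The complete version of this estimate requires running the same argument one level up, in $\M_n(\cal{M}) \otvn \L^\infty(G)$: the map $\Id_{\M_n} \otimes \alpha$ remains a $*$-homomorphism, trace-preserving for the tensor trace, and a polar decomposition $X = UP$ of $X \in \M_n(\cal{M})$ with $X = (UP^{1/2}) P^{1/2}$ produces $\tilde{\E}(YY^*) = (\Id_{\M_n} \otimes h_\cal{M})(UPU^*) \otimes 1_\cal{M}$ and $\tilde{\E}(Z^*Z) = (\Id_{\M_n} \otimes h_\cal{M})(P) \otimes 1_\cal{M}$, whose operator norms are bounded by $(\tr_n \otimes h_\cal{M})(P) = \norm{X}_{\M_n(\L^1(\cal{M}))}$ thanks to the elementary inequality $\norm{M}_{\M_n} \leq \tr_n(M)$ valid for positive $M \in \M_n$.

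With both endpoints established as completely contractive, the definitional identity $\L^p(\cal{M}, \L^1(G)) = (\L^\infty(\cal{M}, \L^1(G)), \L^1(\cal{M}, \L^1(G)))_{1/p}$ of the vector-valued noncommutative $\L^p$-spaces lets Pisier's complex interpolation for operator spaces combine the two estimates into the desired completely contractive map for every $1 \leq p \leq \infty$. The main technical obstacle is the matricial step in the $p=\infty$ case: reconciling the polar decomposition in $\M_n(\cal{M})$ with the operator-space identification $\M_n(\L^1(\cal{M})) = \L^1(\M_n \otvn \cal{M})$ with non-normalized trace on $\M_n$, so that the elementary $\L^\infty \leq \L^1$ comparison for positive matrices in $\M_n$ can be invoked at the right place.
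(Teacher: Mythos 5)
The endpoint $p=1$ (Lemma \ref{lemma-trace-preserving} plus Fubini \eqref{Fubini}) and the final interpolation step are correct and are exactly the route the paper takes in its analogous Propositions \ref{prop-autre-extension} and \ref{Prop-eta} (the proposition itself is only quoted from \cite{Arh24b}, but those two proofs are the template). The scalar part of your $p=\infty$ argument is also correct: the factorization $\alpha(x)=\alpha(u|x|^{1/2})\,\alpha(|x|^{1/2})$ combined with \eqref{Def-ergodic}, the traciality of $h_{\cal{M}}$ and \eqref{formula-Junge} gives $\norm{\alpha(x)}_{\L^\infty(\cal{M},\L^1(G))}\leq h_{\cal{M}}(|x|)=\norm{x}_{\L^1(\cal{M})}$.

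The gap is in the matricial step. To get \emph{complete} contractivity at $p=\infty$ you must bound $\norm{(\Id_{\M_n}\ot\alpha)(X)}_{\M_n(\L^\infty(\cal{M},\L^1(G)))}$ by $\norm{X}_{\M_n(\L^1(\cal{M}))}$, and the operator space matrix norm on $\L^1(\cal{M})=\cal{M}_*^{\op}$ is \emph{not} the trace norm of $\M_n\otvn\cal{M}$: for positive $X$ one has, by \eqref{Norm-LpL1-pos}, $\norm{X}_{\M_n(\L^1(\cal{M}))}=\norm{(\Id_{\M_n}\ot h_{\cal{M}})(X)}_{\M_n}$, the \emph{operator} norm of the partial trace, whereas your quantity $(\tr_n\ot h_{\cal{M}})(P)$ is the norm of $X$ in $\L^1(\M_n\otvn\cal{M})=S^1_n(\L^1(\cal{M}))$. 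The inequality $\norm{M}_{\M_n}\leq\tr_n(M)$ goes in the direction that makes $(\tr_n\ot h_{\cal{M}})(P)$ the \emph{larger} of the two: for $X=\I_n\ot x_0$ with $x_0\geq 0$ and $h_{\cal{M}}(x_0)=1$ it equals $n$ while $\norm{X}_{\M_n(\L^1(\cal{M}))}=1$, so your estimate only yields $\norm{\Id_{\M_n}\ot\alpha}\leq n$ at level $n$, not complete contractivity. (Switching to the $S^1_n$-amplification does make the domain norm the trace norm of $\M_n\otvn\cal{M}$, but then the target norm becomes $S^1_n(\L^\infty(\cal{M},\L^1(G)))$, to which \eqref{formula-Junge} no longer applies.) The repair is the one carried out in Propositions \ref{prop-autre-extension} and \ref{Prop-eta}: first show, via \eqref{Norm-LpL1-pos} and \eqref{Def-ergodic}, that $\Id_{\M_n}\ot\alpha$ is isometric on \emph{positive} elements of $S^\infty_n(\L^1(\cal{M}))$; then, for arbitrary $X$, take the near-optimal factorization $X=YZ$ furnished by \eqref{formula-Junge} applied in the domain $S^\infty_n(\L^1(\cal{M}))=\L^\infty(\M_n,\L^1(\cal{M}))$ --- not the polar decomposition in the von Neumann algebra $\M_n\otvn\cal{M}$ --- push it through the $*$-homomorphism, and conclude by applying the positive case to $YY^*$ and $Z^*Z$.
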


%\begin{thm}
%\label{thm-Smin-transfert}
%Let $G$ be a compact group equipped with its normalized Haar measure and $\cal{M}$ be a von Neumann algebra. Let $\alpha \co \cal{M} \to \cal{M} \otvn \L^\infty(G)$ be an ergodic right action. Consider $1 \leq p <\infty$. Then for any $f \in \L^1(G)$ we have
%\begin{equation}
%\label{transference-norms}
%\norm{T_{f,\alpha}}_{\cb,\L^1(\cal{M}) \to \L^p(\cal{M})}
%\leq \norm{T_f}_{\cb,\L^1(G) \to \L^p(G)}.
%\end{equation}
%and
%\begin{equation}
%\label{belles-estimations}
%\H_{\cb,\min}\big(T_{f,\alpha}\big)
%\geq \H(f)%, \quad
%\quad \text{and} \quad
%\chi(T_{f,\alpha}) 
%\leq -\H(f).
%%\quad \text{and} \quad
%%-\H(f)
%%\leq \Q^1(T_{f,\alpha}) 
%\end{equation}
%%where we suppose $\cal{M}=\M_n$ for the last inequality.
%\end{thm}

\paragraph{Ergodic actions of the $n$-dimensional discrete hypercube}
Let $H$ be a finite-dimensional real Hilbert space with dimension $n \geq 2$. We will present a right ergodic action from the group $\Omega_n$ on the fermion algebra $\scr{C}(H)$. For that, we introduce the linear maps $\alpha \co \scr{C}(H) \to \scr{C}(H) \otvn \L^\infty(\Omega_n)$ and $\beta \co \scr{C}(H) \to \L^\infty(\Omega_n) \otvn \scr{C}(H)$ defined
by
\begin{equation}
\label{actions-alpha-beta}
\alpha(s_A)
\ov{\mathrm{def}}{=} s_A \ot w_A,\quad
\beta(s_A)
\ov{\mathrm{def}}{=} w_A \ot s_A,\quad A \subset \{1,\ldots,n\}.
\end{equation}
We have $\beta = \flip \circ \alpha$.

\begin{prop}
\label{prop-ergodic}
Let $H$ be a finite-dimensional real Hilbert space with dimension $n \geq 2$. The map $\alpha$ defines a right action from the compact abelian group $\Omega_n$ on the fermion algebra $\scr{C}(H)$ which is ergodic. Finally, the unique $\Omega_n$-invariant normal state on the fermion algebra $\scr{C}(H)$ is the trace $\tau$ defined by \eqref{trace-fer}
\end{prop}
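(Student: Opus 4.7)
The plan is to verify the four requirements in turn: $\alpha$ is a unital $*$-homomorphism, normal and injective; it satisfies the coaction identity \eqref{Def-corep}; its fixed-point algebra is $\mathbb{C}1$; and the canonical trace $\tau$ satisfies the invariance condition \eqref{Def-ergodic}. Since $H$ is finite-dimensional, the fermion algebra $\scr{C}(H)$ is finite-dimensional, so normality is automatic and extension by linearity from the basis $\{s_A\}$ is unproblematic. Everything will reduce to a clean parallel between the product rule \eqref{product-sA} for the $s_A$'s and the product rule \eqref{prod-walsh} for the Walsh characters.

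First I would check that $\alpha$ is a unital $*$-homomorphism. Unitality is immediate from $\alpha(1)=s_\emptyset\ot w_\emptyset=1\ot 1$. For multiplicativity, I compute on basis elements
\begin{equation*}
\alpha(s_A)\alpha(s_B)=(s_A\ot w_A)(s_B\ot w_B)=s_As_B\ot w_Aw_B\ov{\eqref{product-sA},\eqref{prod-walsh}}{=}(-1)^{n(A,B)}s_{A\Delta B}\ot w_{A\Delta B}=\alpha(s_As_B).
\end{equation*}
Since $w_A$ is real-valued, $w_A^*=w_A$, so using \eqref{adjoint-sA} the identity $\alpha(s_A)^*=(-1)^{n_A}s_A\ot w_A=\alpha(s_A^*)$ follows. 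Injectivity comes from the observation that $\alpha$ sends the linearly independent family $\{s_A\}_A$ to $\{s_A\ot w_A\}_A$, which remains linearly independent in $\scr{C}(H)\otvn\L^\infty(\Omega_n)$ since the $s_A\ot w_B$ form a basis of the tensor product. The coaction identity \eqref{Def-corep} is checked on $s_A$ using \eqref{delta-epsi}:
\begin{equation*}
(\alpha\ot\Id)\circ\alpha(s_A)=s_A\ot w_A\ot w_A=(\Id\ot\Delta)(s_A\ot w_A)=(\Id\ot\Delta)\circ\alpha(s_A).
\end{equation*}

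Next, I would prove ergodicity. Let $x=\sum_A c_As_A\in\scr{C}(H)^{\alpha}$. Then
\begin{equation*}
\sum_A c_A\,s_A\ot w_A=\alpha(x)=x\ot 1=\sum_A c_A\,s_A\ot 1.
\end{equation*}
Expanding $1=w_\emptyset$ and using the linear independence of $\{s_A\ot w_B\}_{A,B}$ in $\scr{C}(H)\otvn\L^\infty(\Omega_n)$, we obtain $c_A=0$ for every $A\ne\emptyset$, so $x\in\mathbb{C}1$, which proves $\scr{C}(H)^\alpha=\mathbb{C}1$.

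Finally, to identify the unique $\Omega_n$-invariant normal state with $\tau$, I would compute the left-hand side of \eqref{Def-ergodic} on a basis element. For any $A\subset\{1,\ldots,n\}$,
\begin{equation*}
\bigg(\Id\ot\int_{\Omega_n}\bigg)\circ\alpha(s_A)=s_A\int_{\Omega_n}w_A\,\d\mu\ov{\eqref{int-walsh}}{=}\delta_{A,\emptyset}\,s_A=\delta_{A,\emptyset}\cdot 1\ov{\eqref{trace-fer}}{=}\tau(s_A)\cdot 1.
\end{equation*}
By linearity this gives $(\Id\ot\int_{\Omega_n})\circ\alpha(x)=\tau(x)\cdot 1$ for every $x\in\scr{C}(H)$, which is precisely \eqref{Def-ergodic} with $h_{\scr{C}(H)}=\tau$. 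Since we already have ergodicity, the uniqueness discussion recalled before the statement (following \cite{Boc95}) forces $h_{\scr{C}(H)}=\tau$, and $\tau$ is thus the unique $\Omega_n$-invariant normal state. The main potential pitfall in this argument is being careful that $w_A$ is indeed real-valued so that the $*$-operation matches up with the signs $(-1)^{n_A}$ in \eqref{adjoint-sA}; once this is observed, everything is a routine verification on the basis $\{s_A\}$.
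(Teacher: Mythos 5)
Your proposal is correct and follows essentially the same route as the paper: verification of the $*$-homomorphism property via the parallel product rules \eqref{product-sA} and \eqref{prod-walsh}, the coaction identity via \eqref{delta-epsi}, ergodicity by expanding in the basis $\{s_A\}$, and identification of the invariant state by integrating the Walsh functions. The only cosmetic difference is that you prove injectivity directly by linear independence of $\{s_A \ot w_A\}$ where the paper cites an external lemma, which is a perfectly acceptable (indeed more self-contained) substitute.
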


\begin{proof}
For any subsets $A$ and $B$ of $\{1,\ldots,n\}$, we have
\begin{align*}
\MoveEqLeft
\alpha(s_As_B)
\ov{\eqref{product-sA}}{=} (-1)^{n(A,B)} \alpha(s_{A \Delta B})
\ov{\eqref{actions-alpha-beta}}{=} (-1)^{n(A,B)} s_{A \Delta B} \ot w_{A \Delta B}
= \\
&\ov{\eqref{product-sA}\eqref{prod-walsh} }{=} s_A s_B \ot w_A w_B
=(s_A \ot w_A)(s_B \ot w_B)
\ov{\eqref{actions-alpha-beta}}{=} \alpha(s_A)\alpha(s_B)
\end{align*}
and
$$
\alpha(s_A^*)
\ov{\eqref{adjoint-sA}}{=} (-1)^{n_A} \alpha(s_A)
\ov{\eqref{actions-alpha-beta}}{=}(-1)^{n_A}s_A \ot w_A
\ov{\eqref{adjoint-sA}}{=}(s_A \ot w_A)^*
\ov{\eqref{actions-alpha-beta}}{=} \alpha(s_A)^*.
$$
Consequently, the linear map $\alpha$ is a $*$-homomorphism, which is unital by construction. We also have
\begin{align*}
\MoveEqLeft
\bigg(\tau \ot \int_{\Omega_n}\bigg)(\alpha(s_A))
\ov{\eqref{actions-alpha-beta}}{=} \bigg(\tau \ot \int_{\Omega_n}\bigg)(s_A \ot w_A) 
=\tau(s_A) \int_{\Omega_n} w_A
\ov{\eqref{trace-fer} \eqref{int-walsh}}{=} 
\delta_{A,\emptyset}
\ov{\eqref{trace-fer}}{=} \tau(s_A).         
\end{align*}
Hence the map $\alpha$ is trace preserving. The injectivity is a consequence of \cite[Lemma 2.1 p.~2283]{Arh19} but can also be proved directly. For any finite subset $A$ of the set $\{1,\ldots,n\}$, we have
$$
(\alpha \ot \Id)\alpha(s_A)
\ov{\eqref{actions-alpha-beta}}{=} (\alpha \ot \Id)(s_A \ot w_A)
=\alpha(s_A) \ot w_A
\ov{\eqref{actions-alpha-beta}}{=} s_A \ot w_A \ot w_A
$$
and
$$
(\Id \ot \Delta)\alpha(s_A)
\ov{\eqref{actions-alpha-beta}}{=} (\Id \ot \Delta)(s_A \ot w_A)
=s_A \ot \Delta(w_A)
\ov{\eqref{delta-epsi}}{=} s_A \ot w_A \ot w_A.
$$
By linearity, we obtain the equality \eqref{Def-corep}. 

In order to prove the ergodicity of $\alpha$,  consider some element $x \in \scr{C}(H)$ such that $\alpha(x) = x \ot 1$. If $\sum_{A} \lambda_A s_A$ is the <<Fourier series>> of $x$ in the space $\L^2(\scr{C}(H))$, where each $\lambda_A$ is a complex number, we have $\alpha(\sum_{A} \lambda_A s_A) \ov{\eqref{actions-alpha-beta}}{=} \sum_{A} \lambda_A s_A \ot w_A$. The equality $\sum_{A} \lambda_A s_A \ot w_A=\big(\sum_{A} \lambda_A s_A\big) \ot 1$ is equivalent to $\sum_{A} \lambda_A s_A=1$. So the ergodicity is clear.
%2. For any finite subset $A$, we have
%\begin{align*}
%\MoveEqLeft
%R_{\phi,\sigma}(s_A)            
%=(f \ot \Id)\sigma(s_A)   
%=(f \ot \Id)(w_A \ot s_A)  \\
%&=\phi(|A|)s_A
%=M_{\phi}(s_A).
%\end{align*}
%3. Let $\E_\sigma \co \L^\infty(\{-1,1\}^N) \otvn \scr{C} \to \scr{C}$ be the canonical trace preserving conditional expectation associated with $\sigma$. We have $\E_\sigma\sigma=\Id$. Let $\E \co \M_K \to \scr{C}$ from the matrix algebra $\M_{\{-1,1\}^N}$ onto and $J \co \scr{C} \to \M_K$ the canonical embedding.
%
%Note $\Id_{\ell^\infty_I}$ is a Schur multiplier hence Hadamard hence strongly additive \cite{WiY1}
%
%\begin{align*}
%\MoveEqLeft
%\Q(M_{\phi})          
%=\Q(R_{f,\sigma}) 
%=\Q(\E_\sigma \sigma R_{f,\sigma})
%\ov{\eqref{commute-transference}}{=} 
%\Q\big(\E_\sigma(R_f \ot \Id_{\scr{C}}) \circ \sigma\big)\\
%&\ov{\eqref{capacity-composition}}{\leq} \Q(R_f \ot \Id_{\scr{C}})
%=
%?=?\Q(R_f)
%\ov{\eqref{C-leq-Q}}{\leq} \C(R_f)
%\ov{\eqref{CT-Fourier}}{\leq} -\H(f).
%\end{align*} 
%
%By \cite[Theorem 5.1.5]{ER} and Definition \ref{Def-PPT}, note that $\Id_{\L^\infty} \co \L^\infty(\{-1,1\}^N) \to \L^\infty(\{-1,1\}^N)$ is entangling breaking. 
%Using \cite[Exercise 8.2 page 556]{Wat1} \cite[]{Wil17} in the fourth equality, we deduce that

Finally, for any finite subset $A$ of the set $\{1,\ldots,n\}$ we have
\begin{align*}
\MoveEqLeft
\bigg(\Id \ot \int_{\Omega_n}\bigg) \circ \alpha(s_A)
\ov{\eqref{actions-alpha-beta}}{=} \bigg(\Id \ot \int_{\Omega_n}\bigg) (s_A \ot w_A)
=s_A\int_{\Omega_n} w_A \\
&\ov{\eqref{int-walsh}}{=} \delta_{A,\emptyset} s_A 
=\delta_{A,\emptyset} 1_{\scr{C}(H)} 
\ov{\eqref{trace-fer}}{=} \tau(s_A)1_{\scr{C}(H)}.
\end{align*}
\end{proof}

\begin{prop}
\label{prop-autre-extension}
Let $H$ be a finite-dimensional real Hilbert space with dimension $n \geq 2$. Suppose that $1 \leq p \leq \infty$. The map $\beta \co \scr{C}(H) \to \L^\infty(\Omega_n) \otvn \scr{C}(H)$ is a trace preserving unital injective $*$-homomorphism and induces a complete contraction $\beta \co \L^1(\scr{C}(H)) \to \L^p(\Omega_n,\L^1(\scr{C}(H)))$.
\end{prop}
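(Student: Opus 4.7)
The plan is to separate algebraic content from the $\L^p$-norm estimate. Since by definition $\beta = \flip \circ \alpha$ and the flip $\flip \co \scr{C}(H) \otvn \L^\infty(\Omega_n) \to \L^\infty(\Omega_n) \otvn \scr{C}(H)$ is a normal, trace-preserving, unital $*$-isomorphism, the map $\beta$ inherits from $\alpha$ (Proposition \ref{prop-ergodic}) the properties of being a trace preserving unital injective normal $*$-homomorphism. So the algebraic assertion is immediate.

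For the operator-space estimate I would interpolate between $p=1$ and $p=\infty$, which is natural because the vector-valued spaces $\L^p(\Omega_n, \L^1(\scr{C}(H)))$ are defined as complex interpolation spaces between these two endpoints. At $p=1$, Fubini's identification \eqref{Fubini} gives $\L^1(\Omega_n, \L^1(\scr{C}(H))) = \L^1(\L^\infty(\Omega_n) \otvn \scr{C}(H))$, and Lemma \ref{lemma-trace-preserving} applied to the trace-preserving unital $*$-homomorphism $\beta \co \scr{C}(H) \to \L^\infty(\Omega_n) \otvn \scr{C}(H)$ upgrades it to a complete isometry into $\L^1(\L^\infty(\Omega_n) \otvn \scr{C}(H))$.

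At $p=\infty$, by definition \eqref{def-Lp-vec-1} we have $\L^\infty(\Omega_n, \L^1(\scr{C}(H))) = \L^\infty(\Omega_n) \ot_{\min} \L^1(\scr{C}(H))$, and since $\L^\infty(\Omega_n)$ is commutative and finite-dimensional this identifies completely isometrically with $\ell^\infty_{2^n}(\L^1(\scr{C}(H)))$, matrix amplifications being computed pointwise in $\omega \in \Omega_n$. The key observation is that $\beta(x)(\omega) = \Phi_\omega(x)$, where $\Phi_\omega \in \Aut(\scr{C}(H))$ is the sign-change $*$-automorphism characterised by $\Phi_\omega(s_i) = \omega_i s_i$ for each $i$; this follows from $\beta(s_A) = w_A \ot s_A$ together with $w_A(\omega) = \prod_{j\in A} \omega_j$. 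Since each $\Phi_\omega$ is a trace-preserving $*$-automorphism, Lemma \ref{lemma-trace-preserving} makes it a complete isometry on $\L^1(\scr{C}(H))$, and taking suprema over $\omega$ yields $\beta$ as a complete isometry into $\L^\infty(\Omega_n, \L^1(\scr{C}(H)))$.

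Finally, I would apply Stein's interpolation theorem \ref{thm-Stein} (or equivalently the definition of the interpolation scale) to the two endpoint bounds at level $\theta = 1/p$ to obtain the desired complete contraction for every $1 \leq p \leq \infty$. The main delicate point is verifying the completely isometric identification $\L^\infty(\Omega_n) \ot_{\min} E = \ell^\infty_{2^n}(E)$ at every matrix level so that the $p=\infty$ bound is genuinely a CB bound; this is routine because $\L^\infty(\Omega_n)$ is a commutative finite-dimensional von Neumann algebra and its $\min$-tensor products behave as $\ell^\infty$-sums of copies of $E$.
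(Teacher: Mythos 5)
Your proposal is correct, but the $p=\infty$ endpoint is handled by a genuinely different argument from the paper's. The paper establishes the complete contractivity of $\beta \co \L^1(\scr{C}(H)) \to \L^\infty(\Omega_n,\L^1(\scr{C}(H)))$ by first treating \emph{positive} elements of $S^\infty_k(\L^1(\scr{C}(H)))$ via the trace formula \eqref{Norm-LpL1-pos} together with the invariance identity $(\Id \ot \tau)\circ\beta(x)=\tau(x)1$, and then reducing the general case to the positive one through the factorization $x=yz$ supplied by \eqref{formula-Junge}. You instead exploit the commutativity and finite dimensionality of $\L^\infty(\Omega_n)$: the completely isometric identification $\L^\infty(\Omega_n)\ot_{\min}E=\ell^\infty_{2^n}(E)$ reduces everything to the pointwise maps $\Phi_\omega=(\mathrm{ev}_\omega\ot\Id)\circ\beta$, which are trace-preserving $*$-automorphisms (this is exactly the correspondence of Remark \ref{Rem-classical-action}; one can also check multiplicativity of $s_A\mapsto w_A(\omega)s_A$ directly from \eqref{product-sA} and \eqref{prod-walsh}), hence complete isometries on $\L^1(\scr{C}(H))$ by Lemma \ref{lemma-trace-preserving}. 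Your route is more elementary and even yields a complete \emph{isometry} at $p=\infty$, not merely a contraction. The price is that it is specific to a commutative outer tensor factor: the paper's factorization argument is reused almost verbatim for the map $\eta \co \L^1(\Omega_n)\to\L^p(\scr{C}(H),\L^1(\scr{C}(H)))$ of Proposition \ref{Prop-eta}, where the outer algebra is the noncommutative $\scr{C}(H)$ and your pointwise reduction is unavailable. The algebraic part ($\beta=\flip\circ\alpha$), the $p=1$ endpoint (Fubini plus Lemma \ref{lemma-trace-preserving}) and the concluding interpolation coincide with the paper's; note only that plain complex interpolation of the target couple suffices for the last step, since the operator does not vary Stein's theorem is not needed there.
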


\begin{proof}
Since $\beta = \flip \circ \alpha$ and since $\alpha$ is a trace preserving unital injective $*$-homomorphism by Proposition \ref{prop-ergodic}, we see that $\beta$ is trace preserving unital injective $*$-homomorphism. It is not difficult to prove that
\begin{equation}
\label{ergo-biz-biss}
(\Id_{\L^\infty(\Omega_n)} \ot \tau) \circ \beta(x)
=\tau(x) 1_{\L^\infty(\Omega_n)}, \quad x \in \scr{C}(H).
\end{equation}
Indeed, for any subset $A$ of $\{1,\ldots,n\}$, we have
\begin{align*}
\MoveEqLeft
(\Id_{\L^\infty(\Omega_n)} \ot \tau) \circ \beta(s_A)
\ov{\eqref{actions-alpha-beta}}{=} (\Id_{\L^\infty(\Omega_n)} \ot \tau) (w_A \ot s_A)
=\tau(s_A) w_A.
\end{align*}
In the case of a \textit{positive} element $x$ in the space $S^\infty_k(\L^1(\scr{C}(H)))$, we can write
\begin{align}
\MoveEqLeft
\label{divers-566789-bis}
\bnorm{(\Id_{S^\infty_k} \ot \beta)(x)}_{S^\infty_k(\L^\infty(\Omega_n,\L^1(\scr{C}(H))))}              
\ov{\eqref{Norm-LpL1-pos}}{=} \bnorm{(\Id_{S^\infty_k(\L^\infty(\Omega_n))} \ot \tau)((\Id_{S^\infty_k} \ot \beta)(x))}_{S^\infty_k(\L^\infty(\Omega_n))} \\
&= \bnorm{\big[\Id_{S^\infty_k} \ot \big(\Id_{\L^\infty(\Omega_n)} \ot \tau \big) \circ \beta\big](x) }_{S^\infty_n(\scr{C}(H))} 
\ov{\eqref{ergo-biz-biss}}{=} \bnorm{\big(\Id_{S^\infty_k} \ot \tau \big)(x) \ot 1_{\L^\infty(\Omega_n)}}_{S^\infty_k(\scr{C}(H))} \nonumber \\
&=\bnorm{\big(\Id_{S^\infty_k} \ot \tau \big)(x)}_{S^\infty_k} 
\ov{\eqref{Norm-LpL1-pos}}{=}  \norm{x}_{S^\infty_k(\L^1(\scr{C}(H)))}\nonumber. 
\end{align} 
On the other hand, consider an \textit{arbitrary} element $x$ in the space $S^\infty_k(\L^1(\scr{C}(H)))$. Let $\epsi >0$. According to \eqref{formula-Junge}, there exists element $y$ and $z$ such that $x=yz$ and 
\begin{equation}
\label{fin1}
\norm{yy^*}_{S^\infty_k(\L^1(\scr{C}(H))}^{\frac{1}{2}} \norm{z^*z}_{S^\infty_k(\L^1(\scr{C}(H)))}^{\frac{1}{2}} \leq \norm{x}_{S^\infty_k(\L^1(\scr{C}(H)))} +\epsi.
\end{equation}
Now, we have
\begin{align*}
\MoveEqLeft
\norm{(\Id \ot \beta)(y)((\Id \ot \beta)(y))^*}_{S^\infty_k(\L^\infty(\Omega_n,\L^1(\scr{C}(H))))}^{\frac{1}{2}} 
\norm{((\Id \ot \beta)(z))^*(\Id \ot \beta)(z)}_{S^\infty_k(\L^\infty(\Omega_n,(\L^1(\scr{C}(H))))}^{\frac{1}{2}} \\
&=\norm{(\Id \ot \beta)(yy^*)}_{S^\infty_k(\L^\infty(\Omega_n,\L^1(\scr{C}(H))))}^{\frac{1}{2}} 
\norm{(\Id \ot \beta)(z^*z)}_{S^\infty_k(\L^\infty(\Omega_n,\L^1(\scr{C}(H))))}^{\frac{1}{2}} \\      
&\ov{\eqref{divers-566789-bis}}{=} \norm{yy^*}_{S^\infty_k(\L^1(\scr{C}(H)))}^{\frac{1}{2}}\norm{z^*z}_{S^\infty_k(\L^1(\scr{C}(H)))}^{\frac{1}{2}} 
\ov{\eqref{fin1}}{\leq} \norm{x}_{S^\infty_k(\L^1(\scr{C}(H)))} +\epsi.
\end{align*}  
Moreover, it is immediate that 
$$
(\Id \ot \beta)(x)
=(\Id \ot \beta)(yz)
=(\Id \ot \beta)(y)(\Id \ot \beta)(z).
$$ 
From \eqref{formula-Junge}, we infer that $\norm{(\Id \ot \beta)(x)}_{S^\infty_k(\L^\infty(\Omega_n),\L^1(\scr{C}(H))))} \leq \norm{x}_{S^\infty_k(\L^1(\scr{C}(H))} +\epsi$. Since $\epsi >0$ is arbitrary, we obtain that the map $\beta \co \L^1(\scr{C}(H)) \to \L^\infty(\Omega_n,\L^1(\scr{C}(H)))$ is a complete contraction. Now, by Lemma \ref{lemma-trace-preserving} we have a complete isometry $\beta \co \L^1(\scr{C}(H)) \to \L^1(\Omega_n,\L^1(\scr{C}(H)))$. We conclude by interpolation that the map $\beta$ induces a complete contraction $\beta \co \L^1(\scr{C}(H)) \to \L^p(\Omega_n,\L^1(\scr{C}(H)))$. 
%The case of an \textit{arbitrary} element $x$ of $S^\infty_k(\L^1(\Omega_n))$ follows from \eqref{formula-Junge}.
%Finally,  we have for any $x \in \L^\infty(\scr{C}(H))$
%\begin{equation}
%\label{Eq-sans-fin-34}
%\norm{\beta(x)}_{\L^\infty(\Omega_n,\L^p(\scr{C}(H)))}
%\geq \norm{\beta(x)}_{\L^p(\Omega_n,\L^p(\scr{C}(H)))}
%=\norm{x}_{\L^p(\scr{C}(H))}.
%\end{equation}
\end{proof}

Recall that the radial multiplier $R_\phi \co \L^\infty(\scr{C}(H)) \to \L^\infty(\scr{C}(H))$ is defined in \eqref{def-rad-mult} by $R_\phi(s_A) = \phi(\vert A\vert) s_A$, where $A$ is any subset of the set $\{1,\ldots,n\}$. Now, we provide two useful intertwining relations. 

\begin{lemma}
Let $H$ be a finite-dimensional real Hilbert space with dimension $n \geq 2$. For any complex function $\phi \co \{0,\ldots,n\} \to \mathbb{C}$, we have
\begin{equation}
\label{Equation-Delta}
\beta \circ R_\phi
=\big(\Id_{\L^\infty(\Omega_n)} \ot R_\phi\big) \circ  \beta.
\end{equation}

\end{lemma}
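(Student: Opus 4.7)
The plan is to verify the intertwining relation on the orthonormal basis $(s_A)_{A \subset \{1,\ldots,n\}}$ of the fermion algebra $\scr{C}(H)$, which spans $\scr{C}(H)$ linearly since $H$ is finite-dimensional, and then conclude by linearity. Both $\beta \circ R_\phi$ and $(\Id_{\L^\infty(\Omega_n)} \ot R_\phi) \circ \beta$ are well-defined linear maps from $\scr{C}(H)$ into $\L^\infty(\Omega_n) \otvn \scr{C}(H)$, so checking equality on a spanning set suffices.

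For a fixed subset $A \subset \{1,\ldots,n\}$, the left-hand side evaluates as
\begin{equation*}
\beta(R_\phi(s_A))
\ov{\eqref{def-rad-mult}}{=} \beta(\phi(|A|) s_A)
= \phi(|A|) \beta(s_A)
\ov{\eqref{actions-alpha-beta}}{=} \phi(|A|)(w_A \ot s_A),
\end{equation*}
while the right-hand side evaluates as
\begin{equation*}
(\Id_{\L^\infty(\Omega_n)} \ot R_\phi)(\beta(s_A))
\ov{\eqref{actions-alpha-beta}}{=} (\Id_{\L^\infty(\Omega_n)} \ot R_\phi)(w_A \ot s_A)
= w_A \ot R_\phi(s_A)
\ov{\eqref{def-rad-mult}}{=} \phi(|A|)(w_A \ot s_A).
\end{equation*}
The two expressions coincide, so the identity \eqref{Equation-Delta} holds on every $s_A$ and therefore on $\scr{C}(H)$ by linearity.

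There is no real obstacle here: the computation is entirely mechanical, since both the action $\beta$ and the multiplier $R_\phi$ are diagonal in the basis $(s_A)$, with $\beta$ producing the tag $w_A$ that survives intact on both sides of the composition while $R_\phi$ contributes the same scalar $\phi(|A|)$ on either side. This is precisely the structural reason the intertwining holds, and it is exactly the manifestation of $R_\phi$ being a radial multiplier with respect to the $\Omega_n$-grading induced by $\beta$.
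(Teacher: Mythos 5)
Your proof is correct and follows essentially the same route as the paper: both evaluate the two compositions on the basis elements $s_A$, obtain $\phi(|A|)(w_A \ot s_A)$ on each side, and conclude by linearity. The paper's own proof is exactly this basis computation, with the linearity step left implicit.
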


\begin{proof}
For any subset $A$ of $\{1,\ldots,n\}$, we have
\begin{align*}
\MoveEqLeft
\beta R_\phi(s_A)         
\ov{\eqref{def-rad-mult}}{=} \phi(\vert A\vert)\beta(s_A) 
\ov{\eqref{actions-alpha-beta}}{=} \phi(\vert A\vert)(w_A \ot s_A)
\end{align*}
and
\begin{align*}
\MoveEqLeft
(\Id_{} \ot R_\phi) \beta(s_A)         
\ov{\eqref{actions-alpha-beta}}{=} (\Id_{} \ot R_\phi) (w_A \ot s_A) 
=w_A \ot R_\phi(s_A)
\ov{\eqref{def-rad-mult}}{=} \phi(\vert A\vert)(w_A \ot s_A).
\end{align*}
\end{proof}

The second intertwining relation uses the definition of a radial multiplier $T_\phi \co \L^\infty(\Omega_n) \to \L^\infty(\Omega_n)$, $w_A \mapsto \phi(|A|) w_A$ given in \eqref{def-mult-rad}.

\begin{lemma}
\label{lemma-commute-mult-Walsh}
Let $H$ be a finite-dimensional real Hilbert space with dimension $n \geq 2$. For any complex function $\phi \co \{0,\ldots,n\} \to \mathbb{C}$, we have
\begin{equation}
\label{Equation-Delta-op}
\alpha \circ R_\phi
=\big(\Id_{\scr{C}(H)} \ot T_\phi\big) \circ \alpha.
\end{equation}
\end{lemma}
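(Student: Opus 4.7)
The proof will be a direct computation on the canonical basis $(s_A)$, entirely parallel to the one just given for \eqref{Equation-Delta}. Since both sides of \eqref{Equation-Delta-op} are linear (and weak* continuous), and since $\Span\{s_A : A \subset \{1,\ldots,n\}\}$ equals $\scr{C}(H)$ in the finite-dimensional setting, it suffices to verify the identity on each $s_A$.

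The plan is to compute the two sides separately. On the one hand, using the definition \eqref{def-rad-mult} of the radial multiplier and then the definition \eqref{actions-alpha-beta} of $\alpha$,
\begin{align*}
\alpha \circ R_\phi(s_A)
\ov{\eqref{def-rad-mult}}{=} \phi(|A|)\,\alpha(s_A)
\ov{\eqref{actions-alpha-beta}}{=} \phi(|A|)\,(s_A \ot w_A).
\end{align*}
On the other hand, applying $\alpha$ first and then using the definition \eqref{def-mult-rad} of the Walsh radial multiplier $T_\phi$ on the second leg,
\begin{align*}
(\Id_{\scr{C}(H)} \ot T_\phi) \circ \alpha(s_A)
\ov{\eqref{actions-alpha-beta}}{=} (\Id_{\scr{C}(H)} \ot T_\phi)(s_A \ot w_A)
= s_A \ot T_\phi(w_A)
\ov{\eqref{def-mult-rad}}{=} \phi(|A|)\,(s_A \ot w_A).
\end{align*}

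Comparing the two expressions yields equality on every $s_A$, and by linearity the intertwining relation holds on all of $\scr{C}(H)$. There is no real obstacle here: the content of the lemma is simply that both the action $\alpha$ and the multiplier $R_\phi$ are ``diagonal'' in the $(s_A)$ basis with matching scalars $\phi(|A|)$, which is exactly what makes the Walsh-side multiplier $T_\phi$ the natural conjugate of $R_\phi$ under $\alpha$. The only thing to notice is that, in contrast with \eqref{Equation-Delta} where the multiplier acts on the second leg of $\beta$ (the fermion side), here it acts on the second leg of $\alpha$ (the Walsh side), which accounts for why $T_\phi$ rather than $R_\phi$ appears on the right-hand side.
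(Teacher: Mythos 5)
Your proof is correct and is essentially identical to the paper's own argument: both verify the identity on each basis element $s_A$ by computing $\alpha \circ R_\phi(s_A) = \phi(|A|)(s_A \ot w_A) = (\Id_{\scr{C}(H)} \ot T_\phi)\alpha(s_A)$ and concluding by linearity. Nothing is missing.
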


\begin{proof}
For any subset $A$ of the set $\{1,\ldots,n\}$, we have
\begin{align*}
\MoveEqLeft
\alpha \circ R_\phi(s_A)         
\ov{\eqref{def-rad-mult}}{=} \phi(\vert A\vert)\alpha(s_A) 
 \ov{\eqref{actions-alpha-beta}}{=} \phi(\vert A\vert)(s_A \ot w_A)
\end{align*}
and
\begin{align*}
\MoveEqLeft
(\Id_{} \ot T_\phi) \alpha(s_A)         
\ov{\eqref{actions-alpha-beta}}{=} (\Id_{} \ot T_\phi) (s_A \ot w_A) 
=s_A \ot T_\phi(w_A)
\ov{\eqref{def-mult-rad}}{=} \phi(\vert A\vert)(s_A \ot w_A).
\end{align*}
\end{proof}

\paragraph{A useful $*$-homomorphism} Let $H$ be a finite-dimensional real Hilbert space with dimension $n \geq 2$. We equally consider the linear map $\eta \co \L^\infty(\Omega_n) \to \scr{C}(H) \otvn \scr{C}(H)$ defined by
\begin{equation}
\label{autre-eta}
\eta(w_A)
\ov{\mathrm{def}}{=} s_A \ot s_A,\quad A \subset \{1,\ldots,n\}.
\end{equation}
As the following result shows, this map allows the intertwining of radial multipliers on the $n$-dimensional discrete hypercubes and the radial multipliers on the fermion algebras.  

\begin{prop}
\label{Prop-eta}
Let $H$ be a finite-dimensional real Hilbert space with dimension $n \geq 2$. Suppose that $1 \leq p \leq \infty$. The map $\eta$ is a trace preserving unital injective $*$-homomorphism and induces a complete contraction $\eta \co \L^1(\Omega_n) \to \L^p(\scr{C}(H),\L^1(\scr{C}(H)))$. Moreover, for any complex function $\phi \co \{0,\ldots,n\} \to \mathbb{C}$ we have
\begin{equation}
\label{commute-bis}
(\Id_{\scr{C}(H)} \ot R_\phi) \circ \eta
=\eta \circ T_\phi.
\end{equation}
\end{prop}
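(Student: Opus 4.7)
The plan is to mirror the strategy already used for $\alpha$ and $\beta$ in Proposition \ref{prop-ergodic} and Proposition \ref{prop-autre-extension}, adapted to the <<diagonal>> map $\eta$. I would proceed in three stages: algebraic properties, the $\L^p$-norm estimate, and the intertwining identity.

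\textbf{Stage 1 (algebraic properties).} On the linear basis $(w_A)$ I check that $\eta$ is a $*$-homomorphism. The multiplicativity reduces to the calculation
\[
\eta(w_A)\eta(w_B) = (s_As_B)\ot(s_As_B) \ov{\eqref{product-sA}}{=} (-1)^{2n(A,B)} s_{A\Delta B}\ot s_{A\Delta B} = \eta(w_{A\Delta B}) \ov{\eqref{prod-walsh}}{=} \eta(w_Aw_B),
\]
the key point being that the sign $(-1)^{n(A,B)}$ squares to $1$. The involution is similar using \eqref{adjoint-sA} and the fact that the $w_A$'s are real. Unitality is immediate from $s_\emptyset=1$, and trace preservation follows by comparing \eqref{trace-fer} with \eqref{int-walsh}, giving $(\tau\ot\tau)(s_A\ot s_A)=\delta_{A,\emptyset}=\int_{\Omega_n} w_A$. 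Injectivity then comes from \cite[Lemma~2.1]{Arh19} (already invoked in the paper for $\alpha$).

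\textbf{Stage 2 (the $\L^p$-bound).} The crucial observation, analogous to \eqref{ergo-biz-biss}, is the <<ergodic-type>> identity
\[
(\Id_{\scr{C}(H)}\ot\tau)\circ\eta(f) \;=\; \Big(\int_{\Omega_n} f\Big)\, 1_{\scr{C}(H)},\qquad f\in \L^\infty(\Omega_n),
\]
obtained by checking on $w_A$: $(\Id\ot\tau)(s_A\ot s_A)=\tau(s_A)s_A=\delta_{A,\emptyset}1_{\scr{C}(H)}$. Using this in conjunction with \eqref{Norm-LpL1-pos} shows, exactly as in \eqref{divers-566789-bis}, that for a \emph{positive} element $x\in S^\infty_k(\L^1(\Omega_n))$ the map $\Id_{S^\infty_k}\ot\eta$ preserves the norm. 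For arbitrary elements I would copy the Junge--Palazuelos factorization argument of Proposition \ref{prop-autre-extension}: take $x=yz$ with $\|yy^*\|^{1/2}\|z^*z\|^{1/2}$ arbitrarily close to $\|x\|$ via \eqref{formula-Junge}, push through $\Id\ot\eta$, and use the positive case on $yy^*$ and $z^*z$, giving complete contractivity $\eta\co \L^1(\Omega_n)\to \L^\infty(\scr{C}(H),\L^1(\scr{C}(H)))$. On the other hand, Lemma \ref{lemma-trace-preserving} applied to the trace-preserving unital $*$-homomorphism $\eta$ yields a complete isometry $\eta\co\L^1(\Omega_n)\to\L^1(\scr{C}(H)\otvn\scr{C}(H))$, which coincides with $\L^1(\scr{C}(H),\L^1(\scr{C}(H)))$ by Fubini \eqref{Fubini}. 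Interpolation between the endpoints $p=1$ and $p=\infty$ (via the definition of vector-valued $\L^p$-spaces) then delivers the desired complete contraction for every $1\leq p\leq \infty$.

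\textbf{Stage 3 (intertwining).} This is a one-line check on the basis:
\[
(\Id\ot R_\phi)\eta(w_A)\ov{\eqref{autre-eta}\eqref{def-rad-mult}}{=}\phi(|A|)\,s_A\ot s_A\ov{\eqref{def-mult-rad}\eqref{autre-eta}}{=}\eta T_\phi(w_A).
\]

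The main technical obstacle is Stage 2, specifically extending the norm control from positive elements to arbitrary ones; the rest is essentially a direct computation. This obstacle, however, has already been overcome in Proposition \ref{prop-autre-extension}, so only a careful transcription of that argument (replacing $\beta$ by $\eta$ and $\L^\infty(\Omega_n)$ by $\scr{C}(H)$ in the outer factor) is required.
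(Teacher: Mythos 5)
Your proposal is correct and follows essentially the same route as the paper: the same algebraic checks on the basis $(w_A)$ (with the sign $(-1)^{n(A,B)}$ squaring away), the same ergodic-type identity $(\Id_{\scr{C}(H)}\ot\tau)\circ\eta(f)=(\int_{\Omega_n}f)\,1$ combined with \eqref{Norm-LpL1-pos} for positive elements, a factorization argument for general elements, the $p=1$ endpoint via Lemma \ref{lemma-trace-preserving}, interpolation, and the one-line intertwining check. The only (harmless) divergence is in how complete contractivity is obtained: the paper first reduces to plain contractivity using the fact that any bounded map into a commutative $\C^*$-algebra is completely bounded with the same norm (hence, by duality, the same holds for maps out of $\L^1(\Omega_n)$) and then factors scalar functions via \eqref{inverse-Holder}, whereas you transcribe the matrix-level Junge--Palazuelos factorization \eqref{formula-Junge} from Proposition \ref{prop-autre-extension}; both are valid, yours being slightly heavier but not relying on commutativity of the domain algebra.
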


\begin{proof}
For any subsets $A$ and $B$ of the set $\{1,\ldots,n\}$, we have
\begin{align*}
\MoveEqLeft
\eta(w_Aw_B)
\ov{\eqref{prod-walsh}}{=}  \eta(w_{A \Delta B})
\ov{\eqref{autre-eta}}{=}  s_{A \Delta B} \ot s_{A \Delta B}
\ov{\eqref{product-sA}}{=} s_A s_B \ot s_A s_B \\
&=(s_A \ot s_A)(s_B \ot s_B)
\ov{\eqref{autre-eta}}{=} \eta(w_A)\eta(w_B)
\end{align*}
and
$$
\eta(w_A^*)
= \eta(w_A)
\ov{\eqref{autre-eta}}{=} s_A \ot s_A
\ov{\eqref{adjoint-sA}}{=} (s_A \ot s_A)^*
\ov{\eqref{autre-eta}}{=} \eta(w_A)^*.
$$
We deduce that $\eta$ is a $*$-homomorphism, which is clearly unital. We also have
\begin{align*}
\MoveEqLeft
(\tau \ot \tau)(\eta(w_A))
\ov{\eqref{autre-eta}}{=} (\tau \ot \tau)(s_A \ot s_A) 
=\tau(s_A)^2
\ov{\eqref{trace-fer}}{=} \delta_{A,\emptyset}
\ov{\eqref{int-walsh}}{=} \int_{\Omega_n} w_A.         
\end{align*}
Consequently, the map $\eta$ is trace preserving. The injectivity follows from \cite[Lemma 2.1 p.~2283]{Arh19}, but it is also possible to establish it through a direct proof. 

It is proved in \cite[1.2.6 p.~6]{BLM04} that any bounded linear map from an operator space into a \textit{commutative} $\C^*$-algebra is completely bounded and that its completely bounded norm coincides with its norm. By duality, it is therefore sufficient to prove that the map $\eta \co \L^1(\Omega_n) \to \L^p(\scr{C}(H),\L^1(\scr{C}(H)))$ is a contraction in order to obtain complete contractivity. Now, we use a proof similar to the one of Proposition \ref{prop-autre-extension}. It is easy to check that
\begin{equation}
\label{ergo-biz}
(\Id_{\scr{C}(H)} \ot \tau) \circ \eta(f)
=\bigg(\int_{\Omega_n} f \bigg) 1_{\scr{C}(H)}, \quad f \in \L^\infty(\Omega_n).
\end{equation}
Indeed, for any subset $A$ of $\{1,\ldots,n\}$, we have
\begin{align*}
\MoveEqLeft
(\Id_{\scr{C}(H)} \ot \tau) \circ \eta(w_A)
\ov{\eqref{autre-eta}}{=}(\Id_{\scr{C}(H)} \ot \tau) (s_A \ot s_A) 
=\tau(s_A) s_A \\
&\ov{\eqref{trace-fer}}{=} \delta_{A,\emptyset}s_A 
=\delta_{A,\emptyset} 1_{\scr{C}(H)}
\ov{\eqref{int-walsh}}{=} \bigg(\int_{\Omega_n} \omega_A \bigg) 1_{\scr{C}(H)}.
\end{align*}
In the case of a \textit{positive} function $f$ in the Banach space $\L^1(\Omega_n)$, we can write
\begin{align}
\MoveEqLeft
\label{divers-566789}
\norm{\eta(f)}_{\L^\infty(\scr{C}(H),\L^1(\scr{C}(H)))}              
\ov{\eqref{Norm-LpL1-pos}}{=} \bnorm{(\Id_{\L^\infty(\scr{C}(H))} \ot \tau)(\eta(f))}_{\L^\infty(\scr{C}(H))} \\
%&\ov{\eqref{ergo-biz}}{=} \bnorm{( (\smallint_{} \cdot)1_{\scr{C}(H)})(x) }_{\scr{C}(H)} 
&\ov{\eqref{ergo-biz}}{=} \norm{\bigg(\int_{\Omega_n}f\bigg) 1_{\scr{C}(H)}}_{\scr{C}(H)} 
=\int_{\Omega_n} f
= \norm{f}_{\L^1(\Omega_n)}\nonumber. 
\end{align}
Now, consider an \textit{arbitrary} function $f$ in the Banach space $\L^1(\Omega_n)$. Using \eqref{inverse-Holder}, there exist functions $g$ and $h$ such that $f=gh$ and 
\begin{equation}
\label{sansfin-2}
\norm{g}_{\L^2(\Omega_n)}\norm{h}_{\L^2(\Omega_n)}
= \norm{f}_{\L^1(\Omega_n)}.
\end{equation}
We obtain
\begin{align*}
\MoveEqLeft
\norm{\eta(g)(\eta(g))^*}_{\L^\infty(\scr{C}(H),\L^1(\scr{C}(H)))}^{\frac{1}{2}} 
\norm{(\eta(h))^* \eta(h)}_{\L^\infty(\scr{C}(H),(\L^1(\scr{C}(H))))}^{\frac{1}{2}} \\
&=\norm{\eta(gg^*)}_{\L^\infty(\scr{C}(H),\L^1(\scr{C}(H)))}^{\frac{1}{2}} 
\norm{\eta(h^*h)}_{\L^\infty(\scr{C}(H),\L^1(\scr{C}(H)))}^{\frac{1}{2}} \\      
&\ov{\eqref{divers-566789}}{=} \norm{gg^*}_{\L^1(\Omega_n)}^{\frac{1}{2}}\norm{h^* h}_{\L^1(\Omega_n)}^{\frac{1}{2}} 
\ov{\eqref{sansfin-2}}{=} \norm{x}_{\L^1(\Omega_n)}.
\end{align*}  
Moreover, we have the factorization $\eta(f)=\eta(gh)=\eta(g)\eta(h)$. From \eqref{formula-Junge}, we obtain the inequality $\norm{\eta(f)}_{\L^\infty(\scr{C}(H),\L^1(\scr{C}(H)))} \leq \norm{f}_{\L^1(\Omega_n)}$. Observe that by Lemma \ref{lemma-trace-preserving} we have a (complete) isometry $\eta \co \L^1(\Omega_n) \to \L^1(\scr{C}(H),\L^1(\scr{C}(H)))$. We conclude by interpolation that the map $\eta$ induces a  contraction $\eta \co \L^1(\Omega_n) \to \L^p(\scr{C}(H),\L^1(\scr{C}(H)))$. 

Finally, we prove the equality \eqref{commute-bis}. On the one hand, for any finite subset $A$ of the set $\{1,\ldots,n\}$, we have
\begin{align*}
\MoveEqLeft
(\Id \ot R_\phi)\eta(w_A)         
\ov{\eqref{autre-eta}}{=} (\Id \ot R_\phi)(s_A \ot s_A)
=s_A \ot R_\phi(s_A)
\ov{\eqref{def-rad-mult}}{=} \phi(|A|)s_A \ot s_A.
\end{align*}
On the other hand, we have
\begin{align*}
\MoveEqLeft
\eta T_\phi(w_A)         
\ov{\eqref{def-mult-rad}}{=} \phi(|A|)\eta(w_A) 
\ov{\eqref{autre-eta}}{=} \phi(|A|)s_A \ot s_A.
\end{align*}
\end{proof}

\begin{remark} \normalfont
The relatively straightforward tricks outlined in this section will also find applications in the realm of semigroups of operators, particularly in the quest to determine the optimal angles of the functional calculus of generators, see the forthcoming preprint and companion paper \cite{ArK24} where we will adapt some of the previous statements to the infinite-dimensional case $n=\infty$. Furthermore, these methods are expected to be valuable in various other contexts. Further connections between the Walsh system and the fermion system have been explored within the context of Riesz transforms, as detailed in the classical paper \cite{Lus98} of Lust-Piquard.
\end{remark}

%%%%%%%%%%%%%%%%%%%%%%%%%%%%%%%%%%%%%%%%%%%%%%%%%%%%%%%%%%%%%%%%%%%%%%%%%%%%%%%%%%%%%%%%%%%%%
\section{Completely bounded norms $\L^1$ to $\L^p$ of radial multipliers on fermion algebras}
\label{sec-completely-bounded-multipliers}

%Let $G$ be a locally compact group equipped with a left Haar measure. Recall that a continuous complex function $\varphi \co G \to \mathbb{C}$ is said to be positive definite if for any elements $s_1,\ldots,s_n$ of $G$, the matrix $[\varphi(s_i^{-1}s_j)]_{1 \leq i,j \leq n}$ is positive hermitian, see \cite[Definition 13.4.1 p.~286]{Dix77}.
%%for any continuous function with compact support $\xi \co G \to \mathbb{C}$ we have $\int_G\int_G\varphi(t^{-1}s) \ovl{\xi(t)} \xi(s) \geq 0$, see \cite[Proposition 13.4.4 p.~286]{Dix77}. 
%Such a function satisfies $\varphi(s^{-1})=\ovl{\varphi(s)}$ and $|\varphi(s)| \leq \varphi(e)$ for any $s \in G$. 

In the next result, we characterize completely positive radial multipliers on fermion algebras.

\begin{prop}
\label{Prop-cp}
Let $H$ be a finite-dimensional real Hilbert space with dimension $n \geq 2$ and let $\phi \co \{0,\ldots,n\} \to \mathbb{C}$ be a complex function. Then the following properties are equivalent.
\begin{enumerate}
	\item The radial multiplier $R_\phi \co \scr{C}(H) \to \scr{C}(H)$ is completely positive.
	%\item The radial multiplier $R_\phi \co \L^1(\scr{C}(H)) \to \L^1(\scr{C}(H))$ is positive.
	\item The complex function $f_\phi \co \Omega_n \to \mathbb{C}$ of \eqref{def-hg} is positive on the compact abelian group $\Omega_n$.
\end{enumerate} 
\end{prop}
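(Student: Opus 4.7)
The plan is to exploit the two intertwining identities \eqref{Equation-Delta-op} and \eqref{commute-bis} already established in the paper, which connect the radial multiplier $R_\phi$ on $\scr{C}(H)$ with the Fourier multiplier $T_\phi$ on $\L^\infty(\Omega_n)$ through, respectively, the $*$-homomorphisms $\alpha$ and $\eta$. The point is that for the commutative algebra $\L^\infty(\Omega_n)$, complete positivity, positivity, and positivity of the convolution kernel $f_\phi$ all coincide, so reducing to $T_\phi$ reduces the problem to an easy calculation on the finite group $\Omega_n$.

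First I would record the scalar fact that $T_\phi$ is a positive operator on $\L^\infty(\Omega_n)$ if and only if $f_\phi\geq 0$. Since $T_\phi(g)=f_\phi*g$, positivity of $f_\phi$ immediately implies positivity of $T_\phi$; conversely, plugging in $g=\mathbb{1}_{\{e\}}\geq 0$ one gets $T_\phi(g)(t)=\tfrac{1}{|\Omega_n|}f_\phi(t)$, so positivity of $T_\phi$ forces $f_\phi\geq 0$. Because $\L^\infty(\Omega_n)$ is commutative, any positive linear map with values in it (or on it) is automatically completely positive by \cite[1.2.6 p.~6]{BLM04}, hence these conditions are equivalent to the complete positivity of $T_\phi$.

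For the direction $f_\phi\geq 0\Rightarrow R_\phi$ completely positive, I would observe from \eqref{actions-alpha-beta} that the range of $\alpha$ is invariant under $\Id\otimes T_\phi$ (indeed $(\Id\otimes T_\phi)(s_A\otimes w_A)=\phi(|A|)s_A\otimes w_A=\alpha(R_\phi(s_A))$), and so Proposition~\ref{prop-ergodic} combined with \eqref{Equation-Delta-op} yields the factorisation
\begin{equation*}
R_\phi=\alpha^{-1}\circ(\Id_{\scr{C}(H)}\otimes T_\phi)\circ\alpha,
\end{equation*}
where $\alpha^{-1}$ is understood as the inverse of the $*$-isomorphism from $\scr{C}(H)$ onto the image of $\alpha$. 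Since $T_\phi$ is completely positive by the first step, the tensor map $\Id_{\scr{C}(H)}\otimes T_\phi$ is completely positive, and composing with the $*$-homomorphisms $\alpha$ and $\alpha^{-1}$ preserves complete positivity.

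For the converse direction I would use \eqref{commute-bis} in the form $\eta\circ T_\phi=(\Id_{\scr{C}(H)}\otimes R_\phi)\circ\eta$. Assuming $R_\phi$ completely positive, the right-hand side sends positive elements to positive elements ($\eta$ being a $*$-homomorphism, it is positive). Since $\eta$ is an injective $*$-homomorphism by Proposition~\ref{Prop-eta}, it is isometric onto its image and therefore reflects positivity: $\eta(T_\phi(g))\geq 0$ forces $T_\phi(g)\geq 0$ for every $g\geq 0$ in $\L^\infty(\Omega_n)$. Thus $T_\phi$ is positive, and by the first step $f_\phi\geq 0$.

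There is no serious obstacle; the only small care is to check that the range of $\alpha$ is stable under $\Id\otimes T_\phi$ (so that $R_\phi$ can literally be reconstructed via $\alpha^{-1}$) and to invoke injectivity of $\eta$ to reflect positivity. Both are immediate from the explicit formulae $\alpha(s_A)=s_A\otimes w_A$ and $\eta(w_A)=s_A\otimes s_A$.
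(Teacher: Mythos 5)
Your proof is correct and follows essentially the same route as the paper: both directions are obtained by transferring positivity between $R_\phi$ and $T_\phi$ via the intertwining relations \eqref{Equation-Delta-op} and \eqref{commute-bis}, and then using that positivity of $T_\phi$ on the commutative algebra $\L^\infty(\Omega_n)$ is equivalent to $f_\phi\geq 0$. The only (cosmetic) differences are that the paper composes with the trace-preserving conditional expectations $\E$ onto the ranges of $\alpha$ and $\eta$ where you instead use $\alpha^{-1}$ on the range and the fact that the injective $*$-homomorphism $\eta$ reflects positivity, and that you verify the equivalence ``$T_\phi$ positive $\Leftrightarrow f_\phi\geq 0$'' by hand rather than invoking Theorem \ref{thm-positivity}.
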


\begin{proof}
%We only prove the case where $n <\infty$. The case $n=\infty$ is true with minor modifications.
1 $\Rightarrow$ 2: Suppose that the radial multiplier $R_\phi \co \scr{C}(H) \to \scr{C}(H)$ is completely positive. We denote by $\E \co \scr{C}(H) \otvn \scr{C}(H) \to \L^\infty(\Omega_n)$ the canonical trace preserving faithful conditional expectation associated to the trace preserving injective unital $*$-homomorphism $\eta \co \L^\infty(\Omega_n) \to \scr{C}(H) \otvn \scr{C}(H)$, provided by Proposition \ref{prop-existence-conditional-expectation}. Using the equality $\E \circ \eta=\Id$, we see that
$$
T_\phi
=\E \circ \eta \circ T_\phi
\ov{\eqref{commute-bis}}{=} \E \circ (\Id \ot R_\phi) \circ \eta.
$$ 
As a composition of completely positive maps, we deduce that the radial multiplier $T_\phi \co \L^\infty(\Omega_n) \to \L^\infty(\Omega_n)$ is (completely) positive. Recall that this operator can be seen as the convolution operator by the measure $\nu \ov{\mathrm{def}}{=} f_\phi \cdot \mu$, where the function $f_\phi \co \Omega_n \to \mathbb{C}$ is defined in \eqref{def-hg}. By Theorem \ref{thm-positivity}, we infer that the measure $\nu$ is positive. Consequently, the function $f_\phi \co \Omega_n \to \mathbb{C}$ is also positive.

%\cite[Proposition 5.4.9 p.~184]{KaL18}

2 $\Rightarrow$ 1: Suppose that the complex function $f_\phi \co \Omega_n \to \mathbb{C}$ is positive. Recall again that the linear operator $T_\phi \co \L^\infty(\Omega_n) \to \L^\infty(\Omega_n)$ is the convolution operator by the measure $\nu \ov{\mathrm{def}}{=} f_\phi \cdot \mu$. Since this measure is positive, we deduce by Theorem \ref{thm-positivity} that the Fourier multiplier $T_\phi \co \L^\infty(\Omega_n) \to \L^\infty(\Omega_n)$ is positive and even completely positive by \cite[Corollary 3.5 p.~194]{Tak02}, since $\L^\infty(\Omega_n)$ is a \textit{commutative} $\C^*$-algebra. Now, we consider the canonical trace preserving faithful conditional expectation $\E \co \scr{C}(H) \otvn \L^\infty(\Omega_n) \to \scr{C}(H)$ associated to the trace preserving injective unital $*$-homomorphism $\alpha \co \scr{C}(H) \to \scr{C}(H) \otvn \L^\infty(\Omega_n)$, provided by Proposition \ref{prop-existence-conditional-expectation}. With the equality $\E \circ \alpha=\Id$, we infer that
$$
R_\phi
=\E \circ \alpha \circ R_\phi
\ov{\eqref{Equation-Delta-op}}{=} \E \circ \big(\Id_{\scr{C}(H)} \ot T_\phi\big) \circ \alpha.
$$
We obtain by composition that the radial multiplier $R_\phi \co \scr{C}(H)) \to \scr{C}(H)$ is completely positive.
\end{proof}

%In this case, the Fourier transform $\hat{\mu}$ is the unique function $\varphi \in \L^\infty(\hat{G})$ such that
%\begin{equation}
%\label{}
%\widehat{T(f)}
%=\varphi \hat{f}, \quad f \in \L^2(G) \cap \L^p(G).
%\end{equation} 
%holds.
%
 %A continuous function is positive definite if and only if there exists a bounded positive measure such that $f=\hat{\mu}$. In this case the measure is unique. Bochner's theorem

\begin{thm}
\label{th-norm-cb-rad}
Let $H$ be a finite-dimensional real Hilbert space with dimension $n \geq 2$. Suppose that $1 < p \leq \infty$. For any function $\phi \co \{0,\ldots,n\} \to \mathbb{C}$, we have
\begin{equation}
\label{norm-cb}
\norm{R_\phi}_{\cb, \L^1(\scr{C}(H)) \to \L^p(\scr{C}(H))}
= \norm{f_\phi}_{\L^p(\Omega_n)}
\end{equation}
and
\begin{equation}
\label{norm-cb-2}
\norm{R_\phi^*}_{\cb, \L^{p^*}(\scr{C}(H)) \to \L^\infty(\scr{C}(H))}
=\norm{f_\phi}_{\L^{p}(\Omega_n)}.
\end{equation}
\end{thm}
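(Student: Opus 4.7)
The plan is to prove \eqref{norm-cb} by a two-sided sandwich, then deduce \eqref{norm-cb-2} from \eqref{norm-cb} by operator-space duality. The backbone of both inequalities is the dictionary between radial multipliers on $\scr{C}(H)$ and radial (Walsh) multipliers on $\Omega_n$ furnished by the intertwining identities \eqref{Equation-Delta-op} and \eqref{commute-bis}, combined with Theorem \ref{Thm-description-multipliers-2} which already identifies $\norm{T_\phi}_{\cb,\L^1(\Omega_n) \to \L^p(\Omega_n)}$ with $\norm{f_\phi}_{\L^p(\Omega_n)}$.

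For the upper bound $\norm{R_\phi}_{\cb,\L^1(\scr{C}(H)) \to \L^p(\scr{C}(H))} \leq \norm{f_\phi}_{\L^p(\Omega_n)}$, I would exploit the factorization $R_\phi = \E \circ (\Id \ot T_\phi) \circ \alpha$, where $\E$ is obtained by composing the trace-preserving conditional expectation onto $\alpha(\scr{C}(H))$ (granted by Proposition \ref{prop-existence-conditional-expectation}) with $\alpha^{-1}$; this factorization is legitimate thanks to \eqref{Equation-Delta-op}. The three factors have the following properties on $\L^p$-levels: $\alpha \co \L^1(\scr{C}(H)) \to \L^p(\scr{C}(H), \L^1(\Omega_n))$ is a complete contraction by Proposition \ref{Prop-transfer-1}; $\Id \ot T_\phi \co \L^p(\scr{C}(H), \L^1(\Omega_n)) \to \L^p(\scr{C}(H), \L^p(\Omega_n))$ has cb norm bounded by $\norm{f_\phi}_{\L^p(\Omega_n)}$ via \eqref{ine-tensorisation-os}; and after identifying the target with $\L^p(\scr{C}(H) \otvn \L^\infty(\Omega_n))$ by Fubini \eqref{Fubini}, $\E$ is a complete contraction on $\L^p$ by Lemma \ref{lemma-trace-preserving} together with the standard complete contractivity of trace-preserving conditional expectations on $\L^p$.

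For the lower bound, I would use the dual intertwining \eqref{commute-bis}, $(\Id \ot R_\phi) \circ \eta = \eta \circ T_\phi$. On the right-hand side, Lemma \ref{lemma-trace-preserving} makes $\eta \co \L^p(\Omega_n) \to \L^p(\scr{C}(H) \otvn \scr{C}(H))$ a complete isometry, so the cb norm of $\eta \circ T_\phi$ as a map $\L^1(\Omega_n) \to \L^p(\scr{C}(H) \otvn \scr{C}(H))$ equals $\norm{T_\phi}_{\cb,\L^1(\Omega_n) \to \L^p(\Omega_n)} = \norm{f_\phi}_{\L^p(\Omega_n)}$. On the left-hand side, I would use Fubini \eqref{Fubini} to view the target as $\L^p(\scr{C}(H), \L^p(\scr{C}(H)))$, then factor through $\eta \co \L^1(\Omega_n) \to \L^p(\scr{C}(H), \L^1(\scr{C}(H)))$ (complete contraction by Proposition \ref{Prop-eta}) followed by $\Id \ot R_\phi \co \L^p(\scr{C}(H), \L^1(\scr{C}(H))) \to \L^p(\scr{C}(H), \L^p(\scr{C}(H)))$ (of cb norm at most $\norm{R_\phi}_{\cb,\L^1 \to \L^p}$ by \eqref{ine-tensorisation-os}). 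The desired inequality follows. Finally, \eqref{norm-cb-2} results from \eqref{norm-cb} by applying the general principle that the adjoint induces a complete isometry $\CB(E,F) \simeq \CB(F^*, E^*)$ to $T = R_\phi$, together with the operator-space dualities $\L^p(\scr{C}(H))^* = \L^{p^*}(\scr{C}(H))$ and $\L^1(\scr{C}(H))^* = \L^\infty(\scr{C}(H))$ dictated by the conventions of Section \ref{sec-preliminaries}.

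The main obstacle will be the bookkeeping of the various operator-space structures and making sure the three-step factorizations remain in compatible spaces when Fubini and \eqref{ine-tensorisation-os} are combined; once this is settled, each individual step is routine. A secondary technical point worth spelling out is the complete contractivity of the map $\E$ built from $\alpha^{-1}$: this requires invoking Lemma \ref{lemma-trace-preserving} on the trace-preserving $*$-isomorphism $\alpha \co \scr{C}(H) \to \alpha(\scr{C}(H))$ in order to make sense of $\alpha^{-1}$ on $\L^p$-levels.
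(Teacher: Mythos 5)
Your proposal is correct and follows essentially the same route as the paper: the upper bound via the factorization $R_\phi=\E\circ(\Id\ot T_\phi)\circ\alpha$ using Proposition \ref{Prop-transfer-1}, \eqref{Equation-Delta-op}, \eqref{ine-tensorisation-os} and \eqref{norm-equality-QG}; the lower bound via \eqref{commute-bis} together with Proposition \ref{Prop-eta} and the complete isometry of $\eta$ on the $\L^p$-level; and \eqref{norm-cb-2} by duality. The only cosmetic difference is that the paper phrases the first step as the complete isometry $\norm{R_\phi}_{\cb}=\norm{\alpha R_\phi}_{\cb}$ rather than explicitly composing with $\E$, which amounts to the same thing.
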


\begin{proof}
The second formula is an immediate consequence of the first formula by duality. So, it suffices to prove the first one. Note that by Proposition \ref{Prop-transfer-1}, we have
\begin{equation}
\label{Ine-div-4457}
\norm{\alpha}_{\cb, \L^1(\scr{C}(H)) \to \L^p(\scr{C}(H),\L^1(\Omega_n))}
\leq 1.
\end{equation}
Recall that the linear map $\alpha$ is a trace preserving unital $*$-homomorphism. We denote by $\E \co \scr{C}(H) \otvn \L^\infty(\Omega_n) \to \scr{C}(H)$ the associated canonical trace preserving faithful conditional expectation, provided by Proposition \ref{prop-existence-conditional-expectation}. Thus, by Lemma \ref{lemma-trace-preserving} the linear map $\alpha$ induces a complete isometry $\alpha \co \L^p(\scr{C}(H)) \to \L^p(\scr{C}(H) \otvn \L^\infty(\Omega_n))$ and we have a complete contraction $\E \co \L^p(\scr{C}(H) \otvn \L^\infty(\Omega_n)) \to \L^p(\scr{C}(H))$. On the one hand, using $\E \alpha=\Id_{\L^p(\scr{C}(H))}$ in the first equality, we deduce that
\begin{align*}
\MoveEqLeft
\norm{R_\phi}_{\cb,\L^1(\scr{C}(H)) \to \L^p(\scr{C}(H))}
=\norm{\alpha R_\phi}_{\cb,\L^1(\scr{C}(H)) \to \L^p(\scr{C}(H) \otvn \L^\infty(\Omega_n))} \\
&\ov{\eqref{Equation-Delta-op}}{=} \norm{(\Id_{} \ot T_\phi ) \alpha}_{\cb,\L^1(\scr{C}(H)) \to \L^p(\scr{C}(H) \otvn \L^\infty(\Omega_n))} \\     
&\leq \norm{\Id \ot T_\phi}_{\cb,\L^p(\scr{C}(H),\L^1(\Omega_n)) \to \L^p(\scr{C}(H) \otvn \L^\infty(\Omega_n))} \norm{\alpha}_{\cb,\L^1(\scr{C}(H)) \to \L^p(\scr{C}(H),\L^1(\Omega_n))} \\
&\ov{\eqref{ine-tensorisation-os}\eqref{Ine-div-4457}}{\leq} \norm{T_\phi}_{\cb, \L^1(\Omega_n) \to \L^p(\Omega_n)}
\ov{\eqref{norm-equality-QG}}{=} \norm{f_\phi}_{\L^p(\Omega_n)}.
\end{align*}
On the other hand, using the trace preserving unital injective $*$-homomorphism $\eta \co \L^\infty(\Omega_n) \to \scr{C}(H) \otvn \scr{C}(H)$ defined in \eqref{autre-eta} and a similar reasoning using its associated canonical trace preserving faithful conditional expectation $\E \co \scr{C}(H) \otvn \scr{C}(H) \to \L^\infty(\Omega_n)$, which satisfies $\E\eta=\Id$, we see that
\begin{align*}
\MoveEqLeft
 \norm{f_\phi}_{\L^p(\Omega_n)}
\ov{\eqref{norm-equality-QG}}{=}\norm{T_\phi}_{\L^1(\Omega_n) \to \L^p(\Omega_n)}         
=\norm{\eta T_\phi}_{\L^1(\Omega_n) \to \L^p(\scr{C}(H) \otvn \scr{C}(H))}      \\
&\ov{\eqref{commute-bis}}{=} \norm{(\Id \ot R_\phi)\eta}_{\L^1(\Omega_n) \to \L^p(\scr{C}(H) \otvn \scr{C}(H))} \\
&\leq \norm{\Id_{} \ot R_\phi}_{\L^p(\scr{C}(H),\L^1(\scr{C}(H))) \to \L^p(\scr{C}(H) \otvn \scr{C}(H))} \norm{\eta}_{\L^1(\Omega_n) \to \L^p(\scr{C}(H),\L^1(\scr{C}(H)))} \\
&\ov{\eqref{ine-tensorisation-os}}{\leq}  \norm{R_\phi}_{\cb, \L^1(\scr{C}(H)) \to \L^p(\scr{C}(H))},
\end{align*}
where we used Proposition \ref{Prop-eta} in the last inequality.
\end{proof}

\begin{remark} \normalfont
Let $H$ be a finite-dimensional real Hilbert space with \textit{even} dimension $n=2k \geq 2$. If we consider the constant function $\phi \co \{0,\ldots,n\} \to \mathbb{C}$, $k \mapsto 1$, the associated radial multiplier is the operator identity $\Id \co \L^1(\scr{C}(H)) \to \L^1(\scr{C}(H))$, which identifies to the identity map $\Id_{S^1_{N}} \co S^{1}_N \to S^1_N$ with $N \ov{\mathrm{def}}{=} 2^k$. Suppose that $1 \leq p \leq \infty$. It is easy to check that the identity map $\Id \co S^1_N \to S^p_N$ satisfies 
$$
\norm{\Id}_{\cb,S^1_N \to S^p_N} 
= N^{1-\frac{1}{p}} 
\not= 1
= \norm{\Id}_{S^1_N \to S^p_N}.
$$
We may note along the way that $\H_{\cb,\min,\tr}(\Id)=-\log_2(N)$ and $\H_{\min,\tr}(\Id)=0$. Consequently, for a radial multiplier $R_\phi$ we have 
$$
\norm{R_\phi}_{\cb, \L^1(\scr{C}(H)) \to \L^p(\scr{C}(H))} \not= \norm{R_\phi}_{ \L^1(\scr{C}(H)) \to \L^p(\scr{C}(H))}
$$ 
in general.
\end{remark}

%%%%%%%%%%%%%%%%%%%%%%%%%%%%%%%%%%%%%%%%%%%%%%%%%%%%%%%%%%%%%%%%%%%%%%%%%%%%%%%%%%%%%%%%%%%%%
\section{Completely bounded minimal output entropy of the radial multipliers}
\label{entropy-cb-radial-multipliers}

Now, we can compute the completely bounded minimal output entropy of any radial multiplier which defines a quantum channel. By remark \ref{remark-trace-preserving} and Proposition \ref{Prop-cp}, this assumption means that $\phi(0)=1$ and that the complex function $f_\phi \co \Omega_n \to \mathbb{C}$ is positive.

\begin{cor}
\label{Cor-ent-cb}
Let $H$ be a finite-dimensional real Hilbert space with dimension $n \geq 2$. Consider a function $\phi \co \{0,\ldots,n\} \to \mathbb{C}$. Suppose that the radial multiplier $R_\phi \co \scr{C}(H) \to \scr{C}(H)$ is a quantum channel. We have
\begin{equation}
\label{entropy-cb}
\H_{\cb,\min,\tau}(R_\phi)
=\H(f_\phi),
\end{equation}
where $\H(f_\phi)=-\int_{\Omega_n} f_\phi\log_2 f_\phi$ is the Segal entropy of $f_\phi$ defined in \eqref{Segal-entropy} with the normalized integral $\int_{\Omega_n}$ and where $\tau$ is the normalized trace defined in \eqref{trace-fer}.
\end{cor}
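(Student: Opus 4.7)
The plan is to reduce the computation to an application of the abstract differentiation formula \eqref{deriv-norm-p} via the completely bounded norm identity already established in Theorem \ref{th-norm-cb-rad}. More precisely, starting from the defining formula
\[
\H_{\cb,\min,\tau}(R_\phi)
\ov{\eqref{Def-intro-Scb-min}}{=} -\frac{1}{\log 2}\frac{\d}{\d p}\big[\norm{R_\phi}_{\cb,\L^1(\scr{C}(H)) \to \L^p(\scr{C}(H))}\big]\big|_{p=1},
\]
I would substitute the equality $\norm{R_\phi}_{\cb,\L^1(\scr{C}(H)) \to \L^p(\scr{C}(H))}=\norm{f_\phi}_{\L^p(\Omega_n)}$ supplied by \eqref{norm-cb} for each $p$ in a right neighbourhood of $1$. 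This turns the computation into one of differentiating a scalar $\L^p$-norm of the single function $f_\phi \in \L^\infty(\Omega_n)$ at $p=1$.

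The next step is to verify that Proposition \ref{prop-deriv-norm-p} applies to $f_\phi$, i.e.~that $f_\phi$ is a positive element of $\L^1(\Omega_n)$ with $\norm{f_\phi}_{\L^1(\Omega_n)}=1$. Positivity is exactly the content of the implication $1 \Rightarrow 2$ of Proposition \ref{Prop-cp}, applied to the hypothesis that $R_\phi$ is a quantum channel (hence completely positive). The unit $\L^1$-norm follows because $\int_{\Omega_n} f_\phi \ov{\eqref{coeff-1}}{=} \widehat{f_\phi}(\emptyset)=\phi(0)$, together with the observation in Remark \ref{remark-trace-preserving} that $\phi(0)=1$ whenever $R_\phi$ is trace preserving. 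Note that the identity \eqref{norm-cb} also extends by continuity to $p=1$ since both sides then equal $1$, so the one-sided derivative at $p=1$ really is the derivative of the map $p \mapsto \norm{f_\phi}_{\L^p(\Omega_n)}$.

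Putting these ingredients together and invoking \eqref{deriv-norm-p} for the trace on $\L^\infty(\Omega_n)$ given by the normalized integral yields
\[
\H_{\cb,\min,\tau}(R_\phi)
= -\frac{1}{\log 2}\frac{\d}{\d p}\big[\norm{f_\phi}_{\L^p(\Omega_n)}\big]\big|_{p=1}
\ov{\eqref{deriv-norm-p}}{=} -(-\H(f_\phi))
= \H(f_\phi),
\]
which is the desired identity. There is essentially no obstacle here beyond bookkeeping, because all the analytic work has already been absorbed into Theorem \ref{th-norm-cb-rad} and Proposition \ref{prop-deriv-norm-p}; the only points requiring care are (i) matching the sign conventions between \eqref{Def-intro-Scb-min} and \eqref{deriv-norm-p}, and (ii) checking that the $p=1$ value of $\norm{f_\phi}_{\L^p(\Omega_n)}$ equals $1$, so that the Segal entropy $\H(f_\phi)$ is defined and the derivative formula for norms is legitimately applicable.
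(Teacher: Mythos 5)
Your proposal is correct and follows essentially the same route as the paper's own proof: positivity of $f_\phi$ via Proposition \ref{Prop-cp}, the normalization $\norm{f_\phi}_{\L^1(\Omega_n)}=\phi(0)=1$, substitution of \eqref{norm-cb}, and differentiation via \eqref{deriv-norm-p}. No gaps.
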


\begin{proof}
First note that $f_\phi \geq 0$ by Proposition \ref{Prop-cp} and consequently
\begin{align}
\MoveEqLeft
\label{norm-hg}
\norm{f_\phi}_{\L^1(\Omega_n)}
= \int_{\Omega_n} f_\phi      
\ov{\eqref{def-hg}}{=} \int_{\Omega_n} \sum_A \phi(|A|) w_A 
\ov{\eqref{def-hg}}{=} \sum_A \phi(|A|) \int_{\Omega_n} w_A
=\phi(0)
=1.
\end{align}
We will use Theorem \ref{th-norm-cb-rad}. Using the equality $\norm{f_\phi}_{\L^1(\Omega_n)}=1$ in the third equality, we obtain
\begin{align*}
\MoveEqLeft
\H_{\cb,\min,\tau}(R_\phi)
\ov{\eqref{Def-intro-Scb-min}}{=} -\frac{1}{\log 2}\frac{\d}{\d p} \big[\norm{R_\phi}_{\cb,\L^1(\scr{C}(H)) \to \L^p(\scr{C}(H))}\big]|_{p=1} \\
&\ov{\eqref{norm-cb}}{=} -\frac{1}{\log 2}\frac{\d}{\d p} \big[\norm{f_\phi}_{\L^p(\Omega_n)}\big] |_{p=1} 
\ov{\eqref{deriv-norm-p}}{=} \H(f_\phi).%-\int_{\Omega_n} f_\phi\log_2 h_\phi
%\ov{\eqref{Segal-entropy}}{=}
\end{align*}
\end{proof}

\begin{remark} \normalfont
\label{rem-cbmin-est-max}
We warn the reader that the trace used on the algebra $\scr{C}(H)$ in the previous result for defining the completely bounded minimal output entropy is the \textit{normalized} one, defined in \eqref{trace-fer}. In particular, we have the inequality $\H_{\cb,\min,\tau}(R_\phi) \leq 0$ by \eqref{cbminless0}. If $n=2k$ is even and if we identify the fermion algebra $\scr{C}(H)$ with the matrix algebra $\M_{N}$ with $N=2^k$, we replace the \textit{normalized} trace $\tau$ of \eqref{trace-fer} by the \textit{non-normalized} trace $\tr$. So the completely bounded minimal output entropy of the quantum channel $R_\phi \co \M_N \to \M_N$ becomes 
\begin{equation}
\label{Hcb-non-normalized}
\H_{\cb,\min,\tr}(R_\phi)
=\H(f_\phi)+\log_2 {N}
\end{equation}
for this new trace. Moreover, by \cite[pp.~10-11]{AlC18} combined with \eqref{changement-de-trace} we have $$
-2\log_2N
=-\log_2 2^n \leq \H(f_\phi) 
\leq 0.
$$ 
We recover the inequalities
$$
-\log_2 {N}
\leq \H_{\cb,\min,\tr}(R_\phi) 
\leq \log_2 {N}.
$$
\end{remark}

\begin{example} \normalfont
If $n=2k$ is even and $N=2^k$, then the completely bounded minimal output entropy is minimal, i.e.~we have the equality $\H_{\cb,\min,\tr}(R_\phi)=-\log_2 {N}$, if and only if $\H(f_\phi)=-2\log_2 {N}$. By \cite[Lemma 2.5 p.~10]{AlC18} and \eqref{changement-de-trace}, this amounts to saying that the function $f_\phi \co \Omega_n \to \mathbb{C}$ is constant equal to 1 since $\phi(0)=1$ by Remark \ref{remark-trace-preserving}. This is equivalent to $R_\phi$ being the identity channel.
\end{example}

\begin{example} \normalfont
By the equivalence \eqref{carac-entropy-max} and the previous equality \eqref{entropy-cb}, the completely bounded minimal output entropy is maximal, i.e.~$\H_{\cb,\min,\tau}(R_\phi)=0$, if and only if $f_\phi=1$. This is equivalent to $\phi=\delta_0$, i.e.~$\phi(0)=1$ and $\phi(k)=0$ if $k \geq 1$. In this case, the associated quantum channel is the completely noisy channel.
\end{example}

\begin{remark} \normalfont
Note that we have a trace preserving $*$-isomorphism $\scr{C}(H) \otvn \scr{C}(H)=\scr{C}(H \ot_2 H)$ and that a tensor product of radial multipliers identifies to a radial multiplier acting on $\scr{C}(H \ot_2 H)$. This allows us the recover with formula \eqref{entropy-cb} the additivity \eqref{additivity-cb} of the completely bounded minimal output entropy on radial multipliers since the Segal entropy $\H$ is additive by \eqref{entropy-tensor-product}.
\end{remark}

\begin{remark} \normalfont%des infos dans Mokshay Madiman Entropy and the Additive Combinatorics of Probability Densities on LCA groups,SHANNON’S THEOREM FOR LOCALLY COMPACT GROUPS;Derriennic. Entropie, th´eor`emes limite et marches al´eatoires
%using the normalized Haar measure $\mu_{\Omega_n}$ on the abelian finite group $\Omega_n$.
%Maximum Entropy on Compact Groups Peter Harremo¨es
%(Kullback-Leibler divergence)
With \cite[Proposition 2.3]{LoW22}, we can express the entropy $\H(f_\phi)$ with the help of the relative entropy. We obtain the formula
$$
\H_{\cb,\min,\tau}(R_\phi)
=-\H(f_\phi\mu \| \mu),
$$  
where $f_\phi\mu$ is the probability measure on the abelian group $\Omega_n$ defined by the density $f_\phi$.
%Let $G$ be a compact abelian group equipped with its normalized Haar measure $\mu_G$. Recall that the entropy of a probability measure $\mu$ on $G$ which is absolutely continuous with respect to the Haar measure $\mu_G$ with density $f$ is defined by 
%$$
%h(\mu) \ov{\mathrm{def}}{=} -\int_G f(s)\log_2 f(s) \d\mu_G(s),
%$$ 
%see e.g.~\cite[p.~13]{Der85} for the more general case of second countable locally compact groups and \cite[Definition 1.1 p.~463]{KaV83} for the case of countable discrete groups.
%If $\mathrm{D}$ denotes the relative entropy (Kullback-Leibler divergence) \cite[p.~19]{CoT06} or \cite[Definition 2.29 p.~26]{AlC18}, a simple observation \cite[(2.93) p.~29]{CoT06} combined with \eqref{changement-de-trace} imply that
%$$
%\H(\mu)
%=-\mathrm{D}(\nu \| \mu_G) \leq 0.
%$$ 
%The Segal entropy $\H(h_g)$ can be seen as the entropy of the probability measure $h_g\mu_{\Omega_n}$ on the abelian group $\Omega_n$ and we obtain the formula
%$$
%\H_{\cb,\min,\tau}(R_g)
%=-\mathrm{D}(h_g\mu_{\Omega_n} \| \mu_{\Omega_n})
%$$  
%which is quite intuitive.
%See e.g.~\cite[Definition 1.1 p.~463]{KaV83} for this notion defined for at most countable discrete groups.
\end{remark}
%$\H(\mu)$ may or may not exist; if it does, it takes values in the extended real line $[-\infty,+\infty]$.

Our result allows us to obtain a lower bound of the channel coherent information defined in \eqref{def-Q1}.

\begin{cor}
\label{Cor-Q1}
Let $H$ be a finite-dimensional real Hilbert space with even dimension $n=2k \geq 2$. Consider a function $\phi \co \{0,\ldots,n\} \to \mathbb{C}$. Suppose that the radial multiplier $R_\phi \co \scr{C}(H) \to \scr{C}(H)$ is a quantum channel. We have
\begin{equation}
\label{mino-Q1}
\Q^{(1)}(R_\phi)
\geq \max\{-\log_2 {N}-\H(f_\phi),0\}.
\end{equation}
where $N \ov{\mathrm{def}}{=} 2^k$.
\end{cor}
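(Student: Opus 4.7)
The plan is to combine two lower bounds on the channel coherent information $\Q^{(1)}(R_\phi)$. Since the dimension $n=2k$ is even, we identify the fermion algebra $\scr{C}(H)$ with the matrix algebra $\M_N$ where $N \ov{\mathrm{def}}{=} 2^k$, and work with the non-normalized trace $\tr$, as in Remark~\ref{rem-cbmin-est-max}. Since by hypothesis $R_\phi$ is a quantum channel (in particular trace preserving), Remark~\ref{remark-trace-preserving} ensures that $\phi(0)=1$ and that $R_\phi$ is also unital, which will allow us to invoke the bounds stated for unital channels in the introduction.

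The first bound comes from the elementary observation recalled after \eqref{def-Q1}: for any \emph{unital} quantum channel $T\co S^1_N \to S^1_N$ one has
$$
\Q^{(1)}(T) \geq -\H_{\cb,\min,\tr}(T).
$$
Applied to $T = R_\phi$ and combined with the explicit expression \eqref{Hcb-non-normalized} from Remark~\ref{rem-cbmin-est-max}, namely $\H_{\cb,\min,\tr}(R_\phi) = \H(f_\phi) + \log_2 N$, this immediately gives
$$
\Q^{(1)}(R_\phi) \geq -\log_2 N - \H(f_\phi).
$$

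The second bound, $\Q^{(1)}(R_\phi) \geq 0$, is obtained directly from the variational formula \eqref{def-Q1} by restricting the maximum to pure input states $\rho = |\varphi\ra\la\varphi|$. For such $\rho$ we may take the reference system to be one-dimensional and the purification $|\psi\ra = |\varphi\ra$ itself, so that $(R_\phi \ot \Id)(|\psi\ra\la\psi|) = R_\phi(\rho)$ and the two entropies $\H(R_\phi(\rho))$ and $\H((R_\phi \ot \Id)(|\psi\ra\la\psi|))$ in \eqref{def-Q1} coincide, making the corresponding expression equal to zero.

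Taking the maximum of the two lower bounds yields the claim. This proof is corollary-level: all the real work is already contained in Corollary~\ref{Cor-ent-cb} and in the trace-rescaling computation of Remark~\ref{rem-cbmin-est-max}, so no serious obstacle arises here.
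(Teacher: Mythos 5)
Your proof is correct and follows essentially the same route as the paper: the paper's proof simply invokes the bound $\Q^{(1)}(T) \geq -\H_{\cb,\min,\tr}(T)$ for unital channels together with the identity \eqref{Hcb-non-normalized}. Your additional explicit justification of the $\Q^{(1)}(R_\phi)\geq 0$ half of the maximum (via pure inputs with a trivial reference system) is a standard fact that the paper leaves implicit, and your use of Remark \ref{remark-trace-preserving} to secure unitality is exactly the right observation.
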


\begin{proof}
Recall that we have $\Q^{(1)}(T) \geq -\H_{\cb,\min,\tr}(T)$ for any \textit{unital} quantum channel $T \co S^1_N \to S^1_N$. Hence the result is a consequence of the equality \eqref{Hcb-non-normalized}.
\end{proof}

\begin{example} \normalfont
\label{Example-suite}
Consider the case $n=2$ and the situation of Example \ref{Ex-multiplier-Fermions}. Here $N=2$. For any $0 \leq t \leq 1$, consider the function $\phi \co \{0,1,2\} \to \mathbb{C}$ defined by $\phi(0)=1$, $\phi(1)=1-2t$ and $\phi(2)=1$.
We have seen that the radial multiplier $R_\phi$ identifies to the dephasing channel $T_t \co \M_2 \to \M_2$, $x \mapsto (1-t)x+tZxZ$. We have
$$
f_\phi
\ov{\eqref{def-hg}}{=} 1+(1-2t)\epsi_1+(1-2t)\epsi_2+\epsi_{\{1,2\}}.
$$
In particular, we have $f_\phi(-1,-1)=4t$, $f_\phi(-1,1)=0$, $f_\phi(1,-1)=0$, $f_\phi(1,1)=4-4t$. We deduce that
\begin{align}
\MoveEqLeft
\label{sans-fin-5}
-\H(f_\phi)
\ov{\eqref{Segal-entropy}}{=} \int_{\Omega_2} f_\phi\log_2 f_\phi
\ov{\eqref{measure-hypercube}}{=} \frac{1}{4}\sum_{\epsi \in \{-1,1\}^2} f_\phi(\epsi)\log_2 f_\phi(\epsi) \\
&=\frac{1}{4}\big[  4t\log_2(4t) +  (4-4t)\log_2(4-4t) \big]
=t\log_2(4t) +  (1-t)\log_2(4-4t) \nonumber \\
&=2+t\log_2 t+(1-t)\log_2(1-t). \nonumber
\end{align}
%If we consider the radial multiplier $R_g$ as a quantum channel $R_g \co S^1_2 \to S^1_2$, we have to replace the normalized trace $\tau$ by the classical trace $\tr$ on the matrix algebra $\M_2$. 
We deduce the formula
\begin{equation}
\label{Hcbmin-dephasing}
-\H_{\cb,\min,\tr}(T_t)
\ov{\eqref{Hcb-non-normalized}}{=} -\H(f_\phi)-\log_2 N
=1+t\log_2 t+( 1-t)\log_2(1-t).
\end{equation}
We can recover this equation from the formula
\begin{equation}
\label{Hcbmin-HS}
-\H_{\cb,\min,\tr}\big(M_\varphi^\HS\big)
=\log_2 |G|-\H_{\tr}\bigg(\frac{1}{|G|}C_\varphi\bigg) 
\geq 0
\end{equation}  
of the paper \cite[p.~114]{GJL18a} applicable to all Herz-Schur multiplier 
$$
M_\varphi^\HS \co S^1_G \to S^1_G, x \mapsto [\varphi(sr^{-1})x_{sr}]_{s,r \in G},
$$ 
which are quantum channels. Here $G$ is a finite group with cardinal $|G|$, $\varphi \co G \to \mathbb{C}$ is a positive definite function with $\varphi(e)=1$ and $C_\varphi$ is the matrix $[\varphi(sr^{-1})]_{s,r \in G}$.

With $G=\Z/2\Z=\{\ovl{0},\ovl{1}\}$, we have seen in Example \ref{Ex-multiplier-Fermions} that the dephasing channel $T_t \co \M_2 \to \M_2$ identifies to the Herz-Schur multiplier of symbol $C_\varphi=\begin{bmatrix}
   1  &  1-2t \\
   1-2t  &  1 \\
\end{bmatrix}$, where $\varphi \co G \to \mathbb{C}$ is defined by $\varphi(\ovl{0})=1$ and $\varphi(\ovl{1})=1-2t$. The \textit{opposite} of the von Neumann entropy of the matrix 
$$
\frac{1}{|G|}C_\varphi=\begin{bmatrix}
   \frac{1}{2}  &  \frac{1}{2}-t \\
   \frac{1}{2}-t  &  \frac{1}{2} \\
\end{bmatrix}
$$ 
is\footnote{\thefootnote. The eigenvalues of a matrix $\begin{bmatrix}
   b  &  a \\
   a &  b \\
\end{bmatrix}$ are $b-a$ and $b+a$.} $t\log_2 t+\big( 1-t\big)\log_2\big(1-t\big)$. So we recover the formula \eqref{Hcbmin-dephasing}. With Corollary \ref{Cor-Q1}, we obtain the inequality 
$$
\Q(T_t) \geq \Q^{(1)}(T_t) \geq 1+t\log_2 t+(1-t)\log_2(1-t).
$$ 
Indeed, it is known by \cite[Exercise 24.7.1 p.~666]{Wil17} that the quantum capacity of the quantum channel $T_t$ is exactly equal to
%%we are able with Theorem \ref{Th-CEA-Herz-Schur} to recover
\begin{equation}
\label{quantum-cap-dephasing}
\Q(T_t)
=1+t\log_2 t+(1-t)\log_2(1-t).
\end{equation}
\end{example} 

\begin{example} \normalfont
\label{Example-suite-2}
Consider the situation of Example \ref{Ex-Ornstein} with $n=2$. For any $t \geq 0$, we have
$$
f_{\phi_t}
\ov{\eqref{def-hg}}{=} 1+\e^{-t}\epsi_1+\e^{-t}\epsi_2+\e^{-2t}\epsi_{\{1,2\}}.
$$
We have $f_{\phi_t}(-1,-1)=1-2\e^{-t}+\e^{-2t}=(1-\e^{-t})^2$, $f_{\phi_t}(-1,1)=1-\e^{-2t}$, $f_{\phi_t}(1,-1)=1-\e^{-2t}$, $f_{\phi_t}(1,1)=1+2\e^{-t}+\e^{-2t}=(1+\e^{-t})^2$. We deduce that
\begin{align*}
\MoveEqLeft
%\label{sans-fin-99}
-\H(f_{\phi_t})
\ov{\eqref{Segal-entropy}}{=} \int_{\Omega_2} f_{\phi_t}\log h_{g_t}
\ov{\eqref{measure-hypercube}}{=} \frac{1}{4}\sum_{\epsi \in \{-1,1\}^2} f_{\phi_t}(\epsi)\log_2 f_{\phi_t}(\epsi) \\
&=\frac{1}{4}\big[ (1-\e^{-t})^2\log_2((1-\e^{-t})^2) +2(1-\e^{-2t})\log_2(1-\e^{-2t})+(1+\e^{-t})^2\log_2((1+\e^{-t})^2) \big] \\
&=\frac{1}{2}\big[ (1-\e^{-t})^2\log_2(1-\e^{-t}) +(1-\e^{-2t})\log_2(1-\e^{-2t})+(1+\e^{-t})^2\log_2(1+\e^{-t}) \big] .
\end{align*}
\end{example}

%%%%%%%%%%%%%%%%%%%%%%%%%%%%%%%%%%%%%%%%%%%%%%%%%%%%%%%%%%%%%%%%%%%%%%%%%%%%%%%%%%%%%%%%%%%%%
\section{Completely $p$-summing radial multipliers on fermion algebras}
\label{summing-radial-multipliers}

\paragraph{Completely $p$-summing maps}
We start by providing some background on the class of completely $p$-summing maps defined in \eqref{def-norrm-cb-p-summing} where $1 \leq p <\infty$.  We begin by recalling the classical concept of $p$-summing operator, which is useful for understanding the notion of completely $p$-summing operator. Following the book \cite[p.~197]{DJT95}, we say that a bounded operator $T \co X \to Y$ is $p$-summing if there exists a constant $C \geq 0$ such that
\begin{equation}
\label{pq-summing}
\bigg( \sum_{j=1}^n \norm{T(x_j)}_Y^p \bigg)^\frac1p 
\leq C \sup_{y \in B_{X^*}} \bigg( \sum_{j=1}^n |\la y,x_j\ra_{X^*,X}|^p \bigg)^\frac1p
\end{equation}
for any integer $n \geq 1$ and any $x_1, \ldots, x_n \in X$. The $p$-summing norm of $T$, defined as the infimum of the constants $C$ as in \eqref{pq-summing}, is denoted by $\norm{T}_{\pi_{p},X \to Y}$.%The space of $p$-summing linear operators from $X$ to $Y$ is denoted by $\Pi_{p}(X,Y)$. If $X=Y$, we write $\Pi_{p,q}(X)$ and $\Pi_{p}(X)$.

Recall that the class of $p$-summing operators on Banach spaces is a Banach operator ideal by \cite[p.~128]{DeF93}. In particular, we have the inequality 
$$
\norm{R T S}_{\pi_{p}, W \to Z} \leq \norm{R}_{Y \to Z} \norm{T}_{\pi_{p},X \to Y} \norm{S}_{W \to X},
$$ 
with obvious notations. The class of completely $p$-summing maps satisfies a similar property where the operator norm is replaced by the completely bounded norm as the following easy lemma shows. This result \cite[(5.1) p.~51]{Pis98} is in the spirit of the definition of an operator space mapping ideal of \cite[p.~210]{EfR00}.

\begin{lemma}
\label{Lemma-is-bounded}
Suppose that $1 \leq p <\infty$. Let $E,F,G,H$ be operator spaces. Let $T \co E \to F$ be a completely $p$-summing map. If the linear maps $R \co F \to G$ and $S \co H \to E$ are completely bounded, then the linear map $R T S \co H \to G$ is completely $p$-summing and we have
\begin{equation}
\label{Ideal-ellp}
\norm{R T S}_{\pi_{p}^\circ, H \to G}
\leq \norm{R}_{\cb, F \to G} \norm{T}_{\pi_{p}^\circ, E \to F} \norm{S}_{\cb,H \to E}.
\end{equation}
\end{lemma}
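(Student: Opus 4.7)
The plan is to factorize the tensorization $\Id_{S^p} \ot (RTS)$ as a composition of three maps and control each factor separately, obtaining the claimed multiplicative inequality. Namely, we write
\[
\Id_{S^p} \ot (RTS)
= \big(\Id_{S^p} \ot R\big) \circ \big(\Id_{S^p} \ot T\big) \circ \big(\Id_{S^p} \ot S\big),
\]
viewed as a composition
\[
S^p \ot_{\min} H \xrightarrow{\Id \ot S} S^p \ot_{\min} E \xrightarrow{\Id \ot T} S^p(F) \xrightarrow{\Id \ot R} S^p(G).
\]

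First, since $S \co H \to E$ is completely bounded, the functoriality of the minimal operator space tensor product (which is, by construction, the injective tensor product in the operator space category) yields that the map $\Id_{S^p} \ot S \co S^p \ot_{\min} H \to S^p \ot_{\min} E$ is bounded with norm at most $\norm{S}_{\cb, H \to E}$. Second, the middle factor $\Id_{S^p} \ot T \co S^p \ot_{\min} E \to S^p(F)$ is bounded with norm exactly $\norm{T}_{\pi_p^\circ, E \to F}$, by the very definition \eqref{def-norrm-cb-p-summing} of the completely $p$-summing norm (this is the step that uses the hypothesis that $T$ is completely $p$-summing, and in particular that this composition lands in $S^p(F)$). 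Third, applying \eqref{ine-tensorisation-os} with $\cal{M} = \B(\ell^2)$ equipped with its canonical normal semifinite faithful trace, so that $\L^p(\cal{M}, F) = S^p(F)$ and $\L^p(\cal{M}, G) = S^p(G)$, the complete boundedness of $R \co F \to G$ ensures that $\Id_{S^p} \ot R \co S^p(F) \to S^p(G)$ is bounded with norm at most $\norm{R}_{\cb, F \to G}$.

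Composing these three bounds, the map $\Id_{S^p} \ot (RTS) \co S^p \ot_{\min} H \to S^p(G)$ is bounded with norm at most $\norm{R}_{\cb, F \to G}\, \norm{T}_{\pi_p^\circ, E \to F}\, \norm{S}_{\cb, H \to E}$, which by \eqref{def-norrm-cb-p-summing} applied to $RTS$ yields \eqref{Ideal-ellp}. The only non-routine step is the first one, namely the injectivity-type estimate for $\Id_{S^p} \ot S$ on $\ot_{\min}$; this is standard and follows directly from the definition of the minimal operator space tensor product, but it is the single point that distinguishes the proof from a purely formal composition argument.
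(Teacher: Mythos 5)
Your proof is correct and is exactly the standard factorization argument: the paper itself gives no proof of this lemma, deferring to Pisier [Pis98, (5.1) p.~51], and your three-step composition (functoriality of $\ot_{\min}$ for $S$, the definition \eqref{def-norrm-cb-p-summing} for $T$, and \eqref{defnormecb} for $R$ on $S^p(\cdot)$) is precisely that argument. Nothing to add.
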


The following is a simple useful observation done in \cite{Arh24b} which is a noncommutative analogue of the classical result \cite[Example 2.9 (d) p.~40]{DJT95}, which says that if $\Omega$ is a \textit{finite} measure space then the canonical inclusion $i_p \co \L^\infty(\Omega) \hookrightarrow  \L^p(\Omega)$ is $p$-summing with 
$$
\norm{i_p}_{\pi_p,\L^\infty(\Omega) \to \L^p(\Omega)}=\mu(\Omega)^{\frac{1}{p}}.
$$

\begin{prop}
\label{Prop-inj-finite-avn}
Suppose that $1 \leq p < \infty$. If $\cal{M}$ is an approximately finite-dimensional finite von Neumann algebra equipped with a normal finite faithful trace $\tau$, then the canonical inclusion $i_p \co \cal{M} \hookrightarrow \L^p(\cal{M})$ is completely $p$-summing and
\begin{equation}
\label{norm-ip-bis}
\norm{i_p}_{\pi_{p}^\circ,\cal{M} \to \L^p(\cal{M})}
=(\tau(1))^{\frac{1}{p}}.
\end{equation}
\end{prop}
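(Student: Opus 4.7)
The plan is to prove the two bounds separately: for the upper bound, I will apply the factorization formula \eqref{LpME-inf} to the trivial factorization $1 \cdot x \cdot 1$ (which makes sense because $\tau$ is finite), and for the matching lower bound I will evaluate the tensorized map on the explicit simple tensor $e_{11} \ot 1$. The key trick on which everything hinges is to realize $(\Id_{S^p} \ot i_p)(x)$ as an element of $\L^p(\cal{M}, S^p)$ rather than $\L^p(\B(\ell^2), \L^p(\cal{M}))$, so that the factorization involves $1 \in \L^{2p}(\cal{M})$ (which has finite norm) rather than $1 \in \L^{2p}(\B(\ell^2))$ (which does not).

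For the upper bound, I take $x = \sum_i a_i \ot b_i$ in the algebraic tensor product $S^p \ot \cal{M}$ with $a_i \in S^p$ and $b_i \in \cal{M}$. Since both $\cal{M}$ and $\B(\ell^2)$ are approximately finite-dimensional, two applications of Fubini's theorem \eqref{Fubini} combined with the commutativity of the von Neumann tensor product give the completely isometric identifications
\begin{equation*}
S^p(\L^p(\cal{M}))
= \L^p(\B(\ell^2) \otvn \cal{M})
= \L^p(\cal{M} \otvn \B(\ell^2))
= \L^p(\cal{M}, S^p),
\end{equation*}
under which $(\Id_{S^p} \ot i_p)(x)$ corresponds to $z \ov{\mathrm{def}}{=} \sum_i b_i \ot a_i \in \cal{M} \ot S^p$. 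Since $\tau$ is finite, $1 \in \L^{2p}(\cal{M})$ with $\norm{1}_{\L^{2p}(\cal{M})} = \tau(1)^{1/(2p)}$, so applying \eqref{LpME-inf} to the trivial factorization $z = 1 \cdot z \cdot 1$ and using the completely isometric flip $\cal{M} \ot_{\min} S^p = S^p \ot_{\min} \cal{M}$ will yield
\begin{equation*}
\norm{(\Id_{S^p} \ot i_p)(x)}_{S^p(\L^p(\cal{M}))}
\leq \norm{1}_{\L^{2p}(\cal{M})}^2 \norm{z}_{\cal{M} \ot_{\min} S^p}
= \tau(1)^{1/p} \norm{x}_{S^p \ot_{\min} \cal{M}}.
\end{equation*}
By density of $S^p \ot \cal{M}$ inside $S^p \ot_{\min} \cal{M}$, this extends to give $\norm{i_p}_{\pi_p^\circ, \cal{M} \to \L^p(\cal{M})} \leq \tau(1)^{1/p}$.

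For the matching lower bound, I will test on $e_{11} \ot 1 \in S^p \ot \cal{M}$: since the minimal operator space tensor norm is a cross-norm on simple tensors, $\norm{e_{11} \ot 1}_{S^p \ot_{\min} \cal{M}} = \norm{e_{11}}_{S^p}\norm{1}_{\cal{M}} = 1$, while the fact that $e_{11} \ot 1 \in \L^p(\B(\ell^2) \otvn \cal{M})$ is a projection gives
\begin{equation*}
\norm{(\Id_{S^p} \ot i_p)(e_{11} \ot 1)}_{S^p(\L^p(\cal{M}))}
= \bigl[(\tr \ot \tau)(e_{11} \ot 1)\bigr]^{1/p}
= \tau(1)^{1/p}.
\end{equation*}
This forces equality in \eqref{norm-ip-bis}. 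I do not anticipate any real obstacle here; the only mildly delicate point is tracking the element $(\Id_{S^p} \ot i_p)(x)$ through the chain of Fubini identifications to confirm that it corresponds to $z$, so that the trivial factorization is applied in the correct direction.
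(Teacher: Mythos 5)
Your argument is correct. Note first that this paper does not contain its own proof of Proposition \ref{Prop-inj-finite-avn}: the statement is imported from \cite{Arh24b}, so there is no in-paper argument to compare yours against line by line.

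That said, your two steps are exactly the expected mechanism, and both are sound. For the upper bound, the trivial factorization $z=1\cdot z\cdot 1$ in Pisier's formula \eqref{LpME-inf} is legitimate precisely because $\tau$ is finite, so that $1\in\L^{2p}(\cal{M})$ with $\norm{1}_{\L^{2p}(\cal{M})}^2=\tau(1)^{1/p}$; your observation that one must first pass from $S^p(\L^p(\cal{M}))$ to $\L^p(\cal{M},S^p)$ via Fubini and the flip, so that the factorization is taken over $\cal{M}$ and not over $\B(\ell^2)$, is indeed the whole point of the proof. Equivalently, the upper bound can be packaged as the factorization $i_p=\tilde{T}\circ\Mult_{a,b}$ with $a=b=\tau(1)^{-1/(2p)}1$ and $\tilde{T}=\tau(1)^{1/p}\,\Id_{\L^p(\cal{M})}$, which is the form in which the paper exploits completely $p$-summing norms in Theorem \ref{thm-factorization-positivity}. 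The lower bound by testing on the projection $e_{11}\ot 1$ (cross-norm property of $\ot_{\min}$ on one side, $\bigl[(\tr\ot\tau)(e_{11}\ot 1)\bigr]^{1/p}=\tau(1)^{1/p}$ on the other) is correct and is the standard way of seeing that $\norm{i_p}_{\pi_p^\circ}\geq\norm{i_p(1)}_{\L^p(\cal{M})}$. The only cosmetic caveat is that the flip identification $\L^p(\B(\ell^2)\otvn\cal{M})=\L^p(\cal{M}\otvn\B(\ell^2))$ involves a semifinite, non-finite trace, so it does not literally follow from Lemma \ref{lemma-trace-preserving} as stated in the paper; it is of course standard for trace-preserving normal $*$-isomorphisms, but it deserves a word.
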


Let $\cal{M}$ be a von Neumann algebra equipped with a normal finite faithful trace. If $a,b \in \cal{M}$, we denote by $\Mult_{a,b} \co \cal{M} \to \L^p(\cal{M})$, $x \mapsto axb$ the two-sided multiplication map.
The following is a slight variation of \cite[Theorem 2.2 p.~360]{JuP15}, \cite[Theorem 2.4 p.~366]{JuP15} and \cite[Remark 2.2 p.~363]{JuP15}. It provides a nice characterization of the completely $p$-summing norm of linear maps acting on finite-dimensional von Neumann algebras. The proof is similar and we skip the details.
%Noter l'analogie avec \cite[Theorem 5.1]{Oik1}.
\begin{thm}
\label{thm-factorization-positivity}
Suppose that $1 \leq p < \infty$. Let $\cal{M}$ and $\cal{N}$ be finite-dimensional von Neumann algebras such that $\cal{M}$ is equipped with a faithful trace. Let $T \co \cal{M} \to \cal{N}$ be a linear map. Then, the following assertions are equivalent.
\begin{enumerate}
\item $\norm{T}_{\pi_{p}^\circ,\cal{M} \to \cal{N}} \leq C$.

%\item There exists elements $a,b \in \cal{M}$ satisfying $\norm{a}_{\L^{2p}(\cal{M})} \leq 1$, $\norm{b}_{\L^{2p}(\cal{M})} \leq 1$ such that for every $x \in S^p_d \ot_{\min} \cal{M}$ we have 
%\begin{equation}
%\label{Ine-lp-summing}
%\norm{\big(\Id_{S^p_d} \ot T\big)(x)}_{S^p(\cal{N})}
%\leq C\norm{(1 \ot a)x(1 \ot b)}_{S^p(\L^p(\cal{M}))}.
%\end{equation}

\item There exist elements $a,b \in \cal{M}$ satisfying $\norm{a}_{\L^{2p}(\cal{M})} \leq 1$, $\norm{b}_{\L^{2p}(\cal{M})} \leq 1$ and a linear map $\tilde{T} \co \L^p(\cal{M}) \to \cal{N}$ such that
\begin{equation}
\label{sans-fin-33}
T
=\tilde{T} \circ \Mult_{a,b}
\quad \text{and} \quad 
\norm{\tilde{T}}_{\cb, \L^p(\cal{M}) \to \cal{N}} 
\leq C.
\end{equation}
\end{enumerate}
Furthermore, $
\norm{T}_{\pi_{p}^\circ,\cal{M} \to \cal{N}}=\inf\big\{C : C \text{ satisfies the previous conditions}\big\}$.
\end{thm}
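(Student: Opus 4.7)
Plan. The proof mirrors that of \cite[Theorems 2.2 and 2.4]{JuP15}; the finite-dimensional hypothesis here permits a somewhat simpler implementation of the same strategy.

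For the easier direction (2) $\Rightarrow$ (1), the central claim is that
\begin{equation*}
\norm{\Mult_{a,b}}_{\pi_p^\circ, \cal{M} \to \L^p(\cal{M})}
\leq \norm{a}_{\L^{2p}(\cal{M})} \norm{b}_{\L^{2p}(\cal{M})}.
\end{equation*}
Given this, the ideal property of Lemma \ref{Lemma-is-bounded} yields $\norm{T}_{\pi_p^\circ} \leq \norm{\tilde T}_{\cb}\cdot\norm{\Mult_{a,b}}_{\pi_p^\circ} \leq C$. To prove the claim, fix $n$ and $y \in S^p_n \otimes_{\min} \cal{M}$; then $(\Id_{S^p_n} \otimes \Mult_{a,b})(y) = (1_n \otimes a) y (1_n \otimes b)$. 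Fubini's isomorphism \eqref{Fubini} identifies $S^p_n(\L^p(\cal{M}))$ with $\L^p(\cal{M}, S^p_n)$, and the symmetry of the minimal tensor product identifies $S^p_n \otimes_{\min} \cal{M}$ with $\cal{M} \otimes_{\min} S^p_n$; under these identifications the image becomes $a \cdot \flip(y) \cdot b$ in the notation of \eqref{product-cdot}. Proposition \ref{prop-ineg} applied with $E = S^p_n$, $r = p$, and $q = p$ (so that the input lies in $\L^\infty(\cal{M}, E) = \cal{M} \otimes_{\min} E$) then delivers the desired bound.

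The reverse direction (1) $\Rightarrow$ (2) is a noncommutative Pietsch-type factorization and constitutes the main obstacle. The plan is to rephrase the hypothesis as $\norm{(\Id_{S^p_n} \otimes T)(y)}_{S^p_n(\cal{N})} \leq C\norm{y}_{S^p_n \otimes_{\min} \cal{M}}$ for every $n$ and every $y$, and then apply a Hahn-Banach/minimax separation argument on a suitable convex cone inside $\L^1(\cal{M})_{\sa} \times \L^1(\cal{M})_{\sa}$. Finite-dimensionality of $\cal{M}$ makes the unit sphere of $\L^{2p}(\cal{M})_+$ compact, which I would use to extract unit-norm positive elements $a, b \in \L^{2p}(\cal{M})$ satisfying the matrix-valued pointwise bound
\begin{equation*}
\norm{(\Id_{S^\infty_n} \otimes T)(y)}_{S^\infty_n(\cal{N})}
\leq C \norm{(1 \otimes a)\, y\, (1 \otimes b)}_{S^\infty_n(\L^p(\cal{M}))}
\end{equation*}
for every $n$ and every $y \in S^\infty_n \otimes \cal{M}$. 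This bound means precisely that the assignment $axb \mapsto T(x)$ is a well-defined completely bounded map of cb-norm at most $C$ on the subspace $\Ran(\Mult_{a,b}) \subseteq \L^p(\cal{M})$; since $\cal{N}$ is finite-dimensional and hence injective, Wittstock's extension theorem extends it to $\tilde T \co \L^p(\cal{M}) \to \cal{N}$ without increasing the cb-norm, giving the factorization \eqref{sans-fin-33}.

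The main difficulty lies in setting up the separation argument so as to produce the genuinely matrix-valued estimate rather than a merely scalar one; this requires choosing the cone inside an appropriate amplified space and exploiting amplification-invariance of the $\pi_p^\circ$-norm, in the spirit of the arguments in \cite{JuP15}. Once this bound is achieved, taking the infimum over admissible factorizations and matching it with the bound from (2) $\Rightarrow$ (1) gives the final equality $\norm{T}_{\pi_p^\circ,\cal{M}\to\cal{N}} = \inf\{C\}$.
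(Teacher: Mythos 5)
The paper gives no proof of this theorem: it states that the result is a slight variation of Theorems 2.2 and 2.4 and Remark 2.2 of \cite{JuP15} and that ``the proof is similar and we skip the details.'' Your proposal follows exactly the strategy the paper is invoking, so there is no divergence of approach to report. Your treatment of (2) $\Rightarrow$ (1) is complete and correct: the bound $\norm{\Mult_{a,b}}_{\pi_p^\circ,\cal{M}\to\L^p(\cal{M})}\leq\norm{a}_{\L^{2p}}\norm{b}_{\L^{2p}}$ does follow from Proposition \ref{prop-ineg} (with the input space $\L^\infty_0(\cal{M},S^p_n)=\cal{M}\ot_{\min}S^p_n$ and Fubini identifying $S^p_n(\L^p(\cal{M}))$ with $\L^p(\cal{M},S^p_n)$), and Lemma \ref{Lemma-is-bounded} then gives the ideal estimate. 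For (1) $\Rightarrow$ (2), however, what you have written is a plan rather than a proof: the Pietsch-type separation argument that actually produces the positive elements $a,b\in\L^{2p}(\cal{M})$ and the \emph{matrix-valued} domination $\norm{(\Id_{\M_n}\ot T)(y)}\leq C\norm{(1\ot a)y(1\ot b)}_{\M_n(\L^p(\cal{M}))}$ is precisely the content of \cite[Theorem 2.2]{JuP15}, and you explicitly defer it (``the main difficulty lies in setting up the separation argument''). The surrounding steps you do supply are sound --- once the domination is in hand, well-definedness of $axb\mapsto T(x)$ on $\Ran(\Mult_{a,b})$ and the Arveson--Wittstock extension into the injective algebra $\cal{N}$ give $\tilde{T}$ with $\norm{\tilde{T}}_{\cb}\leq C$, and the two directions together yield the infimum formula. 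One small correction: the compact convex set over which the minimax/separation runs should be the set of density operators (the positive part of the unit ball of $\L^1(\cal{M})$, or the state space), not the unit sphere of $\L^{2p}(\cal{M})_+$, which is compact in finite dimensions but not convex; the elements $a,b$ are then obtained as $(2p)$-th roots of the resulting densities. In short: same approach as the paper, easy direction fully done, hard direction left at exactly the level of detail at which the paper itself defers to \cite{JuP15}.
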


%\begin{thm}
%\label{Th-p-sum}
%Suppose that $1 \leq p < \infty$. Let $H$ be a separable Hilbert space with dimension $n \in \{2,\ldots,\infty\}$. A function $g \co \{0,\ldots,n\} \to \mathbb{C}$ induces a completely $p$-summing radial multiplier $R_{g} \co \scr{C}(H) \to \scr{C}(H)$ if and only if it induces a completely bounded radial multiplier $R_g \co \L^p(\scr{C}(H)) \to \scr{C}(H)$. In this case, we have
%\begin{equation}
%\label{norm-p-sum}
%\norm{R_g}_{\pi_p^\circ,\scr{C}(H) \to \scr{C}(H)}
%%=\norm{M_f}_{\cb,\L^p(\scr{C}(H)) \to \L^\infty(\scr{C}(H))}
%=\norm{R_g}_{\cb, \L^p(\scr{C}(H)) \to \scr{C}(H)}.
%\end{equation}
%\end{thm}

We will use the two following results.

\begin{lemma}
\label{magic-434}
For any elements $c$ and $d$ belonging to the unit ball of the Banach space $\L^4(\scr{C}(H))$, the linear map
\begin{equation}
\label{Theta-magic}
\L^2(\scr{C}(H)) \to \L^2(\Omega_n,\L^2(\scr{C}(H))),\,  
x \mapsto (1 \ot c) \beta(x) (1 \ot d)
\end{equation}
is a (complete) contraction, where the map $\beta \co \scr{C}(H) \to \L^\infty(\Omega_n) \otvn \scr{C}(H)$ is defined in \eqref{actions-alpha-beta}.
\end{lemma}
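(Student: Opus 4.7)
The key point is that while the two-sided multiplication $y \mapsto (1 \otimes c)\, y\, (1 \otimes d)$ is \emph{not} a contraction on $\L^2(\L^\infty(\Omega_n) \otvn \scr{C}(H))$ for arbitrary $y$ (one would need $c,d \in \L^\infty$), it becomes contractive when restricted to the image of $\beta$ because $\beta(x)$ has a very rigid ``diagonal'' Fourier expansion in the basis $(w_A \otimes s_A)_A$. The plan is to expand $\beta(x)$ in this basis, exploit the orthonormality of the Walsh system to decouple the computation, and then apply H\"older on the fermion side using the fact that each $s_A$ is a unitary of norm one.

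First, I would use the Fubini formula \eqref{Fubini} to identify $\L^2(\Omega_n, \L^2(\scr{C}(H)))$ isometrically with $\L^2(\L^\infty(\Omega_n) \otvn \scr{C}(H))$, both carrying the product of the normalized traces. Writing $x = \sum_A \hat{x}(A)\, s_A$ in the orthonormal basis $(s_A)$ of $\L^2(\scr{C}(H))$, the definition \eqref{actions-alpha-beta} of $\beta$ gives $\beta(x) = \sum_A \hat{x}(A)\, w_A \otimes s_A$, hence
\begin{equation*}
(1 \otimes c)\, \beta(x)\, (1 \otimes d) = \sum_A \hat{x}(A)\, w_A \otimes (c s_A d).
\end{equation*}
Since $\int_{\Omega_n} w_A w_B = \delta_{A,B}$ by \eqref{int-walsh}, and since $\tau_{\scr{C}(H)}$ is a trace, the cross terms vanish, leaving
\begin{equation*}
\bnorm{(1 \otimes c)\, \beta(x)\, (1 \otimes d)}_{\L^2(\L^\infty(\Omega_n) \otvn \scr{C}(H))}^2 = \sum_A |\hat{x}(A)|^2\, \norm{c s_A d}_{\L^2(\scr{C}(H))}^2.
\end{equation*}

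Now each $s_A$ is a product of the self-adjoint unitaries $s_i$ satisfying the anticommutation rule \eqref{anticommute-bis}, so $s_A$ is itself a unitary and $\norm{s_A}_{\L^\infty(\scr{C}(H))} = 1$. By H\"older's inequality \eqref{Holder} with exponents $(4,\infty,4)$,
\begin{equation*}
\norm{c s_A d}_{\L^2(\scr{C}(H))} \leq \norm{c}_{\L^4(\scr{C}(H))}\, \norm{s_A}_{\L^\infty(\scr{C}(H))}\, \norm{d}_{\L^4(\scr{C}(H))} \leq 1.
\end{equation*}
Combining this with the previous identity yields $\bnorm{(1 \otimes c)\, \beta(x)\, (1 \otimes d)}_{\L^2}^2 \leq \sum_A |\hat{x}(A)|^2 = \norm{x}_{\L^2(\scr{C}(H))}^2$, which is contractivity.

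For complete contractivity I would simply amplify by matrices. Any $X \in S^2_m(\L^2(\scr{C}(H))) = \L^2(\M_m \otvn \scr{C}(H))$ expands as $X = \sum_A X_A \otimes s_A$ with $X_A \in \L^2(\M_m) = S^2_m$, and
\begin{equation*}
(1 \otimes 1 \otimes c)\, (\Id_{\M_m} \otimes \beta)(X)\, (1 \otimes 1 \otimes d) = \sum_A X_A \otimes w_A \otimes (c s_A d),
\end{equation*}
whose squared $\L^2$-norm equals $\sum_A \norm{X_A}_{S^2_m}^2\, \norm{c s_A d}_{\L^2(\scr{C}(H))}^2$ by the same Walsh orthogonality. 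The H\"older bound $\norm{c s_A d}_{\L^2(\scr{C}(H))} \leq 1$ is unchanged, so the whole expression is bounded by $\sum_A \norm{X_A}_{S^2_m}^2 = \norm{X}_{\L^2(\M_m \otvn \scr{C}(H))}^2$. There is no real obstacle here; the only subtlety worth noting is that one must not try to bound $(1 \otimes c)\,(\cdot)\,(1 \otimes d)$ as a map on the ambient $\L^2$ space (which would require $c,d \in \L^\infty$), but rather exploit from the outset the precise block-diagonal form of $\beta(x)$ in the basis $(w_A \otimes s_A)_A$.
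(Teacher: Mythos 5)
Your proof is correct, but it takes a genuinely different route from the paper's. The paper first reduces complete contractivity to plain contractivity by observing that both $\L^2(\scr{C}(H))$ and $\L^2(\Omega_n,\L^2(\scr{C}(H)))$ are operator Hilbert spaces (so bounded maps between them are automatically completely bounded with the same norm), and then obtains the contraction abstractly: by ergodicity of $\alpha$ and Proposition \ref{Prop-transfer-1} one has a contraction $\alpha \co \L^1(\scr{C}(H)) \to \L^\infty(\scr{C}(H),\L^1(\Omega_n))$, interpolation with $\alpha \co \L^\infty \to \L^\infty \otvn \L^\infty$ gives a contraction $\alpha \co \L^2(\scr{C}(H)) \to \L^\infty(\scr{C}(H),\L^2(\Omega_n))$, and the sup-formula \eqref{LinftyLp-norms} with $q=2$ (hence $2q=4$, which is where the $\L^4$ unit-ball hypothesis enters) yields exactly the desired family of bounds; one concludes with the flip and Fubini. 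You instead verify the same family of inequalities by a bare-hands computation: the rigid expansion $\beta(x)=\sum_A \hat{x}(A)\, w_A \ot s_A$, Walsh orthogonality to kill the cross terms, and H\"older with exponents $\frac{1}{2}=\frac{1}{4}+\frac{1}{4}$ together with $\norm{s_A}_{\L^\infty(\scr{C}(H))}=1$. Your argument is more elementary and self-contained (it needs neither Proposition \ref{Prop-transfer-1} nor the formula \eqref{LinftyLp-norms}), and in effect it is an explicit verification, for this particular action, of the membership $\alpha(x) \in \L^\infty(\scr{C}(H),\L^2(\Omega_n))$ that the paper derives abstractly; the paper's route has the advantage of working for an arbitrary ergodic action, where no such orthonormal ``diagonal'' expansion with unitary coefficients is available. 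Two cosmetic remarks: the vanishing of the cross terms needs only the orthonormality of the (real-valued) Walsh system, not the traciality of $\tau$; and your explicit $S^2_m$-amplification is a legitimate substitute for the paper's appeal to the operator Hilbert space property, since $S^2_m(\L^2(\cal{M}))=\L^2(\M_m \otvn \cal{M})$ by Fubini.
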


\begin{proof}
Observe that by \cite[p.~139]{Pis2}, the operator spaces $\L^2(\scr{C}(H))$ and $\L^2(\Omega_n,\L^2(\scr{C}(H)))$ are operator Hilbert spaces. Therefore, by \cite[Proposition 7.2 (iii) p.~127]{Pis2}, to establish complete contractivity, it is sufficient to prove that the map defined in \eqref{Theta-magic} is contractive.

Let $x \in \L^2(\scr{C}(H))$. We will use the $*$-homomorphism $\alpha \co \scr{C}(H) \to \scr{C}(H) \otvn  \L^\infty(\Omega_n)$ of \eqref{actions-alpha-beta} which defines an ergodic action by Proposition \ref{prop-ergodic}. Consequently, by Proposition \ref{Prop-transfer-1}, we have a contraction $\alpha \co \L^1(\scr{C}(H)) \to \L^\infty(\scr{C}(H),\L^1(\Omega_n))$. By interpolation with the compatible contractive map $\alpha \co \L^\infty(\scr{C}(H)) \to \L^\infty(\scr{C}(H)) \otvn \L^\infty(\Omega_n)$, we see that $\alpha$ induces a  contraction $\alpha \co \L^2(\scr{C}(H)) \to \L^\infty(\scr{C}(H),\L^2(\Omega_n))$. By \eqref{LinftyLp-norms} applied with $q=2$, we deduce that
\begin{equation}
\label{}
\bnorm{(c \ot 1)\alpha(x)(d \ot 1)}_{\L^2(\scr{C}(H),\L^2(\Omega_n))}   
\leq \norm{\alpha(x)}_{\L^\infty(\scr{C}(H),\L^2(\Omega_n))}
%\leq\norm{\alpha(x)}_{\L^\infty(\scr{C}(H),\L^2(\Omega_n))}
\leq \norm{x}_{\L^2(\scr{C}(H))}.
\end{equation}
This means that the map
\begin{equation*}
\L^2(\scr{C}(H)) \to \L^2(\scr{C}(H),\L^2(\Omega_n)),\,  
x \mapsto (c \ot 1) \alpha(x) (d \ot 1)
\end{equation*}
is a well-defined contraction. Since $\beta=\flip \circ \alpha$, the map defined in \eqref{Theta-magic} is a contraction by Fubini's theorem, described in \eqref{Fubini}.  
\end{proof}

\begin{prop}
Suppose that $1 \leq p \leq \infty$. Assume that $a$ and $b$ are positive elements in the unit ball of the Banach space $\L^{2p}(\scr{C}(H))$. Then the map
\begin{equation}
\label{Theta-bbis}
\Theta \co \L^p(\scr{C}(H)) \to \L^p(\Omega_n,\L^p(\scr{C}(H))),\,  
x \mapsto (1 \ot a) \beta(x) (1 \ot b)
\end{equation}
is a well-defined complete contraction, where the $*$-homomorphism $\beta \co \scr{C}(H) \to \L^\infty(\Omega_n) \otvn \scr{C}(H)$ is defined in \eqref{actions-alpha-beta}.
\end{prop}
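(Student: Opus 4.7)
The plan is to generalise the proof of Lemma \ref{magic-434} from $p=2$ to every $1 \leq p \leq \infty$ by working through the right action $\alpha$ and then converting to $\beta$ via $\beta=\flip\circ\alpha$. The argument has three steps: promote $\alpha$ to a complete contraction into the mixed-norm target $\L^\infty(\scr{C}(H),\L^p(\Omega_n))$, multiply by $a$ and $b$ to collapse the $\L^\infty$-factor into an $\L^p$-factor via Hölder, and finally transfer the estimate by the flip.

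For the first step I would apply Stein's interpolation theorem (Theorem \ref{thm-Stein}) to the constant family $T_z=\alpha$ with the compatible couples $\bigl(\L^\infty(\scr{C}(H)),\L^1(\scr{C}(H))\bigr)$ and $\bigl(\L^\infty(\scr{C}(H))\otvn\L^\infty(\Omega_n),\L^\infty(\scr{C}(H),\L^1(\Omega_n))\bigr)$: the endpoint at $p=\infty$ is covered by the fact that $\alpha$ is a unital $*$-homomorphism, while the endpoint at $p=1$ is given by Proposition \ref{Prop-transfer-1}. The intermediate range is identified with $\L^\infty(\scr{C}(H),\L^p(\Omega_n))=\scr{C}(H)\otimes_{\min}\L^p(\Omega_n)$ by combining the identity $\bigl(\M_N(E_0),\M_N(E_1)\bigr)_\theta=\M_N\bigl((E_0,E_1)_\theta\bigr)$ with the decomposition of the finite-dimensional algebra $\scr{C}(H)$ into matrix summands. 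This is precisely the interpolation carried out \emph{en passant} in the proof of Lemma \ref{magic-434} for $p=2$; amplifying by $S^n$ and reapplying Stein at each $n$ upgrades it to a complete contraction.

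For the second step, given $a,b$ positive in the unit ball of $\L^{2p}(\scr{C}(H))$ and $x\in\L^p(\scr{C}(H))$, Proposition \ref{prop-ineg} applied with exponents $\infty,\,p,\,p$ (in the roles of $p,q,r$ there) and with $E=\L^p(\Omega_n)$ produces
$$
\norm{(a\ot 1)\alpha(x)(b\ot 1)}_{\L^p(\scr{C}(H),\L^p(\Omega_n))}
\leq \norm{a}_{\L^{2p}(\scr{C}(H))}\norm{\alpha(x)}_{\L^\infty(\scr{C}(H),\L^p(\Omega_n))}\norm{b}_{\L^{2p}(\scr{C}(H))}\leq\norm{x}_{\L^p(\scr{C}(H))};
$$
its proof, based on the factorisation \eqref{LpME-inf} and Hölder's inequality \eqref{Holder}, survives tensoring with $S^n$, so the multiplication map is in fact a complete contraction. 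For the third step I would identify $\L^p(\scr{C}(H),\L^p(\Omega_n))$ with $\L^p(\scr{C}(H)\otvn\L^\infty(\Omega_n))$ by Fubini \eqref{Fubini} and use that the flip $\scr{C}(H)\otvn\L^\infty(\Omega_n)\to\L^\infty(\Omega_n)\otvn\scr{C}(H)$, being a trace-preserving $*$-isomorphism, extends by Lemma \ref{lemma-trace-preserving} to a complete isometry onto $\L^p(\L^\infty(\Omega_n)\otvn\scr{C}(H))=\L^p(\Omega_n,\L^p(\scr{C}(H)))$. Since $\flip\bigl((a\ot 1)\alpha(x)(b\ot 1)\bigr)=(1\ot a)\beta(x)(1\ot b)=\Theta(x)$, the estimate transfers directly to $\Theta$.

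The main technical obstacle is the operator-space interpolation in the first step — namely, the identification of the complex interpolation of the two minimum tensor products with the minimum tensor product against $\L^p(\Omega_n)$ — but since $\scr{C}(H)$ is finite-dimensional this reduces to standard interpolation of matrix-valued operator spaces, and the same identification is already used without further comment in the proof of Lemma \ref{magic-434}. All other ingredients (Stein, Proposition \ref{prop-ineg}, Fubini, and the complete isometry from Lemma \ref{lemma-trace-preserving}) are already in place in the paper.
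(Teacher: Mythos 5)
Your route is genuinely different from the paper's. The paper runs Stein's interpolation on the analytic family $T_z(x)=\big(1 \ot a^{\frac{pz}{2}}\big)\beta(x)\big(1 \ot b^{\frac{pz}{2}}\big)$ between the two endpoints $\L^\infty(\scr{C}(H)) \to \L^\infty(\Omega_n) \otvn \scr{C}(H)$ and $\L^2(\scr{C}(H)) \to \L^2(\Omega_n,\L^2(\scr{C}(H)))$ (the latter being exactly Lemma \ref{magic-434}); the whole point of letting the powers of $a$ and $b$ vary with $z$ is that complete contractivity then only has to be checked where it comes for free --- at $\Re z=0$ because $\beta$ is a $*$-homomorphism and the multipliers are unitaries, and at $\Re z=1$ by the operator-Hilbert-space trick of \cite[Proposition 7.2 (iii) p.~127]{Pis2}. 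You instead interpolate $\alpha$ alone into the mixed-norm space $\L^\infty(\scr{C}(H),\L^p(\Omega_n))$ and then apply a \emph{fixed} two-sided multiplication by $a\ot 1$ and $b \ot 1$. At the Banach-space level your argument is correct and is literally the $p=2$ computation inside Lemma \ref{magic-434} carried out for general $p$: the endpoint inputs (unitality of $\alpha$ at $p=\infty$, Proposition \ref{Prop-transfer-1} at $p=1$), the identification of the interpolated target with $\scr{C}(H)\ot_{\min}\L^p(\Omega_n)$ via matrix summands, the use of \eqref{LinftyLp-norms} with $q=p$, and the transfer by $\flip$ and Fubini are all sound. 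Note also that a single interpolation of the composite $\Theta$ would not work, since the same $a,b$ cannot serve at both endpoints --- which is why either the powers must vary (the paper) or the multiplication must be separated from $\alpha$ (you).

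The one step you cannot wave through is the complete contractivity of the multiplication map $\Mult_{a\ot 1, b\ot 1} \co \L^\infty(\scr{C}(H),\L^p(\Omega_n)) \to \L^p(\scr{C}(H)\otvn\L^\infty(\Omega_n))$ for general $p$. Proposition \ref{prop-ineg} is a purely Banach-space statement, and its proof does \emph{not} "survive tensoring with $S^n$" in the naive way you suggest: if you replace $\cal{M}$ by $\M_n \otvn \cal{M}$ (with the non-normalized trace needed for $S^p_n$-amplifications) and $a$ by $1_{\M_n}\ot a$, then $\norm{1_{\M_n}\ot a}_{\L^{2p}(\M_n\otvn\cal{M})}=n^{\frac{1}{2p}}\norm{a}_{\L^{2p}(\cal{M})}$, so the argument as written produces a constant $n^{\frac{1}{p}}$ rather than a complete contraction. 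The correct statement (a mixed-norm H\"older inequality in which the multiplier is $\L^\infty$ in the $\M_n$-variable and $\L^{2p}$ in the $\cal{M}$-variable) is true and standard in the framework of \cite{Pis98} and \cite{JuP10}, but it is an additional input not contained in the paper's preliminaries, and it is precisely the technical cost that the paper's varying-power analytic family is designed to avoid. Either import that mixed-norm H\"older inequality explicitly, or close the gap the way the paper does elsewhere: prove your contractive estimate for the $S^p_n$-amplifications directly and interpolate them, using the formulas $S^p_n[\L^p(\cal{M},E)]=\L^p(\M_n\otvn\cal{M},E)$ together with the endpoint cases $p=\infty$ and $p=2$ where complete contractivity is automatic.
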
 

\begin{proof}
Suppose that $2 \leq p \leq \infty$. For any complex number $z$ in the closed strip $\ovl{S} \ov{\mathrm{def}}{=} \{z \in \mathbb{C} : 0 \leq \Re z \leq 1 \}$, we consider the linear map
\begin{equation}
\label{def-Tz}
T_z \co \scr{C}(H) \to \L^1(\Omega_n,\L^1(\scr{C}(H))),\quad x \mapsto \big(1 \ot a^{\frac{pz}{2}}\big) \beta(x) \big(1 \ot b^{\frac{pz}{2}}\big).
\end{equation}
%where $a^{\frac{pz}{2}}$ and $b^{\frac{pz}{2}}$ are defined as in \cite{StZ79}. \textbf{See \cite{StZ79} \cite{ABHN11} Nikolski p49} 
Since $a$ and $b$ are positive elements in the unit ball of $\L^{2p}(\scr{C}(H))$, the elements $a^{\frac{p}{2}}$ and $b^{\frac{p}{2}}$ belong to the unit ball of the Banach space $\L^4(\scr{C}(H))$. Note that
\begin{equation}
\label{unitaries}
\bnorm{1 \ot a^{\frac{\i t p}{2}}}_{\L^\infty(\Omega_n) \otvn \scr{C}(H)}
=\norm{1}_{\L^\infty(\Omega_n)} \bnorm{a^{\frac{\i t p}{2}}}_{\scr{C}(H)}
=1
\end{equation}
and similarly
\begin{equation}
\label{unitaries-2}
\bnorm{1 \ot b^{\frac{\i t p}{2}}}_{\L^\infty(\Omega_n) \otvn \scr{C}(H)}
=1.
\end{equation}
For any $t \in \R$ and any $x \in \scr{C}(H)$, we deduce with H\"older's inequality and Lemma \ref{magic-434} that
\begin{align*}
\MoveEqLeft
\norm{T_{1+\i t}(x)}_{\L^2(\Omega_n,\L^2(\scr{C}(H)))}            
\ov{\eqref{def-Tz}}{=} \bnorm{\big(1 \ot a^{\frac{p+\i t p}{2}}\big) \beta(x) \big(1 \ot b^{\frac{p+\i t p}{2}}\big)}_{\L^2(\Omega_n,\L^2(\scr{C}(H)))} \\
&\ov{\eqref{Holder}}{\leq} \bnorm{1 \ot a^{\frac{\i t p}{2}}}_{\L^\infty(\L^\infty)} \bnorm{\big(1 \ot a^{\frac{p}{2}}\big) \beta(x) \big(1 \ot b^{\frac{p}{2}}\big)}_{\L^2(\Omega_n,\L^2(\scr{C}(H)))} \bnorm{1 \ot b^{\frac{\i t p}{2}}}_{\L^\infty(\L^\infty)} \\
&\ov{\eqref{unitaries}\eqref{unitaries-2}}{=} \bnorm{\big(1 \ot a^{\frac{p}{2}}\big) \beta(x) \big(1 \ot b^{\frac{p}{2}}\big)}_{\L^2(\Omega,\L^2(\scr{C}(H)))}
\ov{\eqref{Theta-magic}}{\leq} \norm{x}_{\L^2(\scr{C}(H))}.
\end{align*}
We infer that the map $T_{1+\i t} \co \L^2(\scr{C}(H)) \to \L^2(\Omega_n,\L^2(\scr{C}(H)))$ is contractive. Note that by \cite[p.~139]{Pis2}, the operator space $\L^2(\scr{C}(H))$ is an operator Hilbert space. We conclude that this map is even completely contractive by \cite[Proposition 7.2 (iii) p.~127]{Pis2}.

For any $t \in \R$ and any $x \in \scr{C}(H)$, we have
\begin{align*}
\MoveEqLeft
T_{\i t}(x)         
\ov{\eqref{def-Tz}}{=} \big(1 \ot a^{\frac{\i t p}{2}}\big) \beta(x) \big(1 \ot b^{\frac{\i t p}{2}}\big).
%&=\norm{\beta(x)}_{\L^\infty(\Omega_n) \otvn \scr{C}(H)}.
\end{align*} 
Note that the $*$-homomorphism $\beta \co \scr{C}(H) \to \L^\infty(\Omega_n) \otvn \scr{C}(H)$ is completely contractive by \cite[Proposition 1.2.4 p.~5]{BLM04}. Moreover, it is elementary with the equalities \eqref{unitaries} and \eqref{unitaries-2} to check that the two-sided multiplication map 
$$
\L^\infty(\Omega_n) \otvn \scr{C}(H) \to \L^\infty(\Omega_n) \otvn \scr{C}(H),\, y \mapsto (1 \ot a^{\frac{\i t p}{2}})y(1 \ot b^{\frac{\i t p}{2}})
$$ 
is also completely contractive (and even contractively decomposable by \cite[Exercise 12.1 p.~251]{Pis2}).
 As a composition of the completely contracting maps, the operator $T_{\i t} \co \L^\infty(\scr{C}(H)) \to \L^\infty(\Omega_n) \otvn \scr{C}(H)$ is completely contractive. 

Moreover, for any $x \in \scr{C}(H)$, the function $\ovl{S} \to \L^1(\Omega_n,\L^1(\scr{C}(H)))$, $z \mapsto T_z(x)$ is continuous and bounded %(pas totalement triviale) 
on the closed strip $\ovl{S}$ and analytic on the open strip $S$. 
%Finally, for any $x \in \scr{C}(H)$ the functions $t \to T_{\i t}(x)$ and $t \mapsto T_{1+\i t}(x)$ take values in $Y_0\L^2(\Omega_n,\L^2(\scr{C}(H)))$ and $\L^\infty(\Omega_n) \otvn \scr{C}(H)$, respectively,
%the functions $\R \to \L^2(\Omega_n,\L^2(\scr{C}(H)))$, $t \mapsto T_{1+\i t}(x)$ and $\R \to \L^\infty(\Omega_n) \otvn \scr{C}(H)$, $t \mapsto T_{\i t}(x)$ are continuous. 
Observe that by the reiteration theorem \cite[Theorem 4.6.1 p.~101]{BeL76} we have
\begin{equation}
\label{ident5}
\L^p(\scr{C}(H))
=(\L^\infty(\scr{C}(H)), \L^2(\scr{C}(H)))_{\frac{2}{p}}
\end{equation}
and
\begin{equation}
\label{ident4}
\L^p(\Omega_n,\L^p(\scr{C}(H)))
=(\L^\infty(\Omega_n) \otvn \scr{C}(H),\L^2(\Omega_n,\L^2(\scr{C}(H))))_{\frac{2}{p}}.
\end{equation}
By Stein's interpolation theorem (Theorem \ref{thm-Stein}), we conclude by taking $z=\frac{2}{p}$ that the linear map defined in \eqref{Theta-bbis} is a contraction. Actually, it is a complete contraction if we combine Stein's interpolation theorem with \eqref{defnormecb} and replacing the interpolation formulas \eqref{ident5} and \eqref{ident4} by the interpolation formulas 
$$
S^p(\L^p(\scr{C}(H)))
=(S^\infty(\L^\infty(\scr{C}(H))), S^2(\L^2(\scr{C}(H))))_{\frac{2}{p}}
$$
and
$$
S^p(\L^p(\Omega_n,\L^p(\scr{C}(H))))
=(S^\infty(\L^\infty(\Omega_n) \otvn \scr{C}(H)),S^2(\L^2(\Omega_n,\L^2(\scr{C}(H)))))_{\frac{2}{p}}.
$$

The case $1 \leq p \leq 2$ is similar. 
\end{proof}

\paragraph{Description of the completely $p$-summing norm of radial multipliers}
Here, we prove that there exists a simple expression for the completely $p$-summing norm of any radial multiplier $R_\phi \co \scr{C}(H) \to \scr{C}(H)$ viewed as an operator acting on the von Neumann algebra $\scr{C}(H)$. It is equal to the completely bounded norm from $\L^p$ into $\L^\infty$. In the next result, we use the \textit{normalized} trace on the finite-dimensional von Neumann algebra $\scr{C}(H)$.

\begin{thm}
\label{Th-p-sum}
Suppose that $1 \leq p < \infty$. Let $H$ be a finite-dimensional real Hilbert space with dimension $n \geq 2$. Consider a function $\phi \co \{0,\ldots,n\} \to \mathbb{C}$. We have
\begin{equation}
\label{norm-p-sum}
\norm{R_\phi}_{\pi_p^\circ,\scr{C}(H) \to \scr{C}(H)}
=\norm{R_\phi}_{\cb, \L^p(\scr{C}(H)) \to \scr{C}(H)}.
\end{equation}
\end{thm}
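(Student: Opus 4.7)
The inequality $\norm{R_\phi}_{\pi_p^\circ, \scr{C}(H) \to \scr{C}(H)} \leq \norm{R_\phi}_{\cb, \L^p(\scr{C}(H)) \to \scr{C}(H)}$ is immediate from the ideal property. Factor $R_\phi$ as the composition $\scr{C}(H) \xrightarrow{i_p} \L^p(\scr{C}(H)) \xrightarrow{R_\phi} \scr{C}(H)$, where $i_p$ is the canonical inclusion; by Proposition \ref{Prop-inj-finite-avn} and the normalization $\tau(1) = 1$, $i_p$ has completely $p$-summing norm $1$, so Lemma \ref{Lemma-is-bounded} yields the bound.

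For the reverse inequality, fix $\epsi > 0$ and apply Theorem \ref{thm-factorization-positivity} to obtain a factorization $R_\phi = \tilde R \circ \Mult_{a, b}$ on $\scr{C}(H)$ with $\norm{a}_{\L^{2p}}, \norm{b}_{\L^{2p}} \leq 1$ and $\norm{\tilde R}_{\cb, \L^p \to \scr{C}(H)} \leq \norm{R_\phi}_{\pi_p^\circ} + \epsi$. By polar decomposition (absorbing the partial isometries into $\tilde R$, which is legitimate since two-sided multiplication by contractions in $\scr{C}(H)$ is $\cb$-contractive on $\L^p$), we may assume $a$ and $b$ are positive. The radiality of $R_\phi$ is crucial: for each $\epsi \in \Omega_n$, the $\ast$-automorphism $\Phi_\epsi$ of $\scr{C}(H)$ defined by $\Phi_\epsi(s_A) = w_A(\epsi) s_A$ commutes with $R_\phi$, is trace preserving, extends to a complete isometry of $\L^p(\scr{C}(H))$, and satisfies $\beta(x)(\epsi) = \Phi_\epsi(x)$.

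Now combine two maps. By \eqref{Theta-bbis} (using positivity of $a, b$), the operator $\Theta \co \L^p(\scr{C}(H)) \to \L^p(\Omega_n, \L^p(\scr{C}(H)))$, $x \mapsto (1 \ot a) \beta(x) (1 \ot b)$, is a complete contraction; and define the averaging map
$$
\tilde W \co \L^p(\Omega_n, \L^p(\scr{C}(H))) \to \scr{C}(H), \quad f \mapsto \int_{\Omega_n} \Phi_\epsi^{-1}\big(\tilde R(f(\epsi))\big)\d\mu(\epsi).
$$
Using the Fubini identification \eqref{Fubini} at the matrix level together with Jensen's inequality on the probability space $\Omega_n$, one checks $\norm{\tilde W}_{\cb} \leq \norm{\tilde R}_{\cb, \L^p \to \scr{C}(H)}$. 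For $x \in \scr{C}(H)$, the radiality $R_\phi \Phi_\epsi = \Phi_\epsi R_\phi$ gives
$$
\tilde W(\Theta(x)) = \int_{\Omega_n} \Phi_\epsi^{-1}\big(\tilde R(a \Phi_\epsi(x) b)\big)\d\mu(\epsi) = \int_{\Omega_n} \Phi_\epsi^{-1}\big(\Phi_\epsi(R_\phi(x))\big)\d\mu(\epsi) = R_\phi(x),
$$
so $\tilde W \circ \Theta$ is a $\cb$ extension of $R_\phi|_{\scr{C}(H)}$ to a map $\L^p(\scr{C}(H)) \to \scr{C}(H)$ with $\cb$ norm at most $\norm{R_\phi}_{\pi_p^\circ} + \epsi$. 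Letting $\epsi \to 0$ concludes the proof.

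The main technical obstacle is the $\cb$ estimate $\norm{\tilde W}_{\cb} \leq \norm{\tilde R}_{\cb}$: one must lift a pointwise bound to the vector-valued $\L^p$-space. The Fubini identification converts the matrix amplification of $\tilde W$ into an integral over $\Omega_n$ of maps with uniform $\cb$-norm controlled by $\norm{\tilde R}_{\cb}$, and Jensen's inequality against the probability measure $\mu$ then delivers the global bound.
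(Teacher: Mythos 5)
Your proof is correct and is in substance the paper's own proof. The easy inequality is handled identically (inclusion $i_p$, Proposition \ref{Prop-inj-finite-avn}, Lemma \ref{Lemma-is-bounded}), and for the converse you use the same factorization $R_\phi=\tilde{R}\circ\Mult_{a,b}$ from Theorem \ref{thm-factorization-positivity}, the same polar-decomposition reduction to positive $a,b$, and the same complete contraction $\Theta$ of \eqref{Theta-bbis}. Your averaging map $\tilde{W}$ is exactly the paper's $\E\circ(\Id_{\L^p(\Omega_n)}\ot\tilde{R}_{\phi})$ written out pointwise: the adjoint of $\beta$ is precisely $f\mapsto\int_{\Omega_n}\Phi_\epsi^{-1}(f(\epsi))\d\mu(\epsi)$, and the commutation $R_\phi\Phi_\epsi=\Phi_\epsi R_\phi$ is the pointwise form of the intertwining relation \eqref{Equation-Delta}. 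The only genuine divergence is how the completely bounded norm of this last factor is estimated: the paper splits it as the tensorization bound \eqref{ine-tensorisation-os} followed by the complete contraction $\E\co\L^p(\Omega_n,\scr{C}(H))\to\scr{C}(H)$ obtained by dualizing Proposition \ref{prop-autre-extension}, whereas you estimate the composite directly, using that $\L^\infty(\Omega_n)$ is commutative and finite-dimensional, so the matrix amplifications of $\L^p(\Omega_n,F)$ are classical Bochner spaces over a probability space, and then applying the triangle inequality for the integral, the complete isometry of each $\Phi_\epsi^{-1}$, and H\"older. Both routes are valid; yours is marginally more elementary for this one step but leans on the commutativity of $\L^\infty(\Omega_n)$, while the paper's conditional-expectation formulation is the one that survives when the acting object is no longer a classical group. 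Your closing remark correctly identifies the only point requiring care, and the standard facts you invoke there (Fubini for vector-valued $\L^p$ and $\norm{\cdot}_{\L^1}\leq\norm{\cdot}_{\L^p}$ on a probability space) do suffice.
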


\begin{proof}
%$\Leftarrow$: %Suppose that the radial multiplier $R_g \co \L^p(\scr{C}(H)) \to \scr{C}(H)$ is completely bounded. 
If we consider the canonical inclusion $i_p \co \scr{C}(H) \hookrightarrow \L^p(\scr{C}(H))$ then by Proposition \ref{Prop-inj-finite-avn}, %the canonical inclusion $i_p \co \scr{C}(H) \hookrightarrow \L^p(\scr{C}(H))$ is $\ell^p(S^p_d)$-summing 
we have the equality
\begin{equation}
\label{norm-ip}
\norm{i_p}_{\pi_{p}^\circ,\scr{C}(H) \to \L^p(\scr{C}(H))}
\ov{\eqref{norm-ip-bis}}{=} 1.
\end{equation} 
By Lemma \ref{Lemma-is-bounded}, we obtain by composition the estimate %that $R_{\phi} \co \scr{C}(H) \to \scr{C}(H)$ is completely $p$-summing and the inequality
\begin{align*}
\MoveEqLeft
\norm{R_\phi}_{\pi_p^\circ, \scr{C}(H) \to \scr{C}(H)}
\ov{\eqref{Ideal-ellp}}{\leq}
\norm{R_\phi}_{\cb, \L^p(\scr{C}(H)) \to \scr{C}(H)} \norm{i_p}_{\pi_{p}^\circ\pi_{p}^\circ,\scr{C}(H) \to \L^p(\scr{C}(H))} \\  
&\ov{\eqref{norm-ip}}{=} \norm{R_\phi}_{\cb, \L^p(\scr{C}(H)) \to \scr{C}(H)}.
\end{align*}

Now we prove the reverse inequality, which is more involved. Let $\epsi > 0$. 
%$\Rightarrow$: Suppose that we have a completely $p$-summing radial multiplier $R_{\phi} \co \L^\infty(\scr{C}(H)) \to \L^\infty(\scr{C}(H))$. 
By Theorem \ref{thm-factorization-positivity}, there exist elements $a,b \in \L^{2p}(\scr{C}(H))$ satisfying $\norm{a}_{\L^{2p}(\scr{C}(H))} \leq 1$ and $\norm{b}_{\L^{2p}(\scr{C}(H))} \leq 1$ and a linear map $\tilde{R}_{\phi} \co \L^p(\scr{C}(H)) \to \scr{C}(H)$ such that 
\begin{equation}
\label{Facto-Fourier-multiplier}
R_{\phi}
\ov{\eqref{sans-fin-33}}{=} \tilde{R}_{\phi} \circ \Mult_{a,b}
\quad \text{and} \quad 
\norm{\tilde{R}_{\phi}}_{\cb,\L^p(\scr{C}(H)) \to \scr{C}(H)} 
\leq \norm{R_\phi}_{\pi_p^\circ, \scr{C}(H) \to \scr{C}(H)} + \epsi,
\end{equation}
%The algebra $\mathbb{C}[G]$ of finite sums $\sum_s \alpha_s \lambda_s$ is dense in $\L^{p^*}(\VN(G))$ and $M_{\phi}^*(\C[G]) \subset \C[G]$. This shows that $T_{\phi}$ is normal. 
where $\Mult_{a,b} \co \scr{C}(H) \to \L^p(\scr{C}(H))$ is the two-sided multiplication map. Replacing the map $\tilde{R}_{\phi}$ by the map $x \mapsto\tilde{R}_{\phi}(u x v^*)$ where $a=u|a|$ and $b^*=v|b^*|$ are the polar decompositions of the elements $a$ and $b^*$, we can suppose that $a$ and $b$ are positive elements. Indeed, firstly for any $x \in \L^p(\scr{C}(H))$ we have 
$$
\tilde{R}_{\phi}\big(u \Mult_{|a|,|b^*|}(x) v^*\big)
=\tilde{R}_{\phi}\big(u |a|\,  x \, |b^*|v^*\big)
=\tilde{R}_{\phi}(a x b)
\ov{\eqref{Facto-Fourier-multiplier}}{=} R_\phi(x).
$$ 
Secondly, by \cite[Lemma 5.1]{Arh24a}, the two-sided multiplication map $\Mult_{u,v} \co \L^p(\scr{C}(H)) \to \L^p(\scr{C}(H))$, $x \mapsto u x v$ is decomposable with decomposable norm 
$$
\norm{\Mult_{u,v}}_{\dec,\L^p(\scr{C}(H)) \to \L^p(\scr{C}(H))} 
\leq \norm{u}_{\L^\infty(\scr{C}(H))} \norm{v}_{\L^\infty(\scr{C}(H))} 
=1,
$$ 
hence completely bounded by \cite[Proposition 3.30 p.~50]{ArK23} with completely bounded norm 
$$
\norm{\Mult_{u,v}}_{\cb,\L^p(\scr{C}(H)) \to \L^p(\scr{C}(H))}
\leq \norm{\Mult_{u,v}}_{\dec,\L^p(\scr{C}(H)) \to \L^p(\scr{C}(H))}
\leq 1
$$
(an elementary proof of this estimate is also possible). Since the linear map $\tilde{R}_{\phi} \co \L^p(\scr{C}(H)) \to \scr{C}(H)$ is completely bounded, we deduce the estimate 
%by \eqref{ine-tensorisation-os} that the map $\Id_{\L^p(\Omega_n)} \ot \tilde{R}_{g} \co \L^p(\Omega_n,\L^p(\scr{C}(H))) \to \L^p(\Omega_n,\scr{C}(H))$ is completely bounded and that we have
\begin{align}
\MoveEqLeft
\label{Equa-138676}
\norm{\Id_{\L^p(\Omega_n)} \ot \tilde{R}_{\phi}}_{\cb,\L^p(\Omega_n,\L^p(\scr{C}(H))) \to \L^p(\Omega_n,\scr{C}(H))}            
\ov{\eqref{ine-tensorisation-os}}{\leq} \norm{\tilde{R}_{\phi}}_{\cb,\L^p(\scr{C}(H)) \to \scr{C}(H)} \\
&\ov{\eqref{Facto-Fourier-multiplier}}{\leq}  \norm{R_{\phi}}_{\pi_p^\circ, \scr{C}(H) \to \scr{C}(H)} +\epsi. \nonumber
\end{align}

Let $\E \co \L^\infty(\Omega_n) \otvn \scr{C}(H) \to \scr{C}(H)$ be the canonical trace preserving  faithful conditional expectation associated with the trace preserving unital injective $*$-homomorphism $\beta \co \scr{C}(H) \to \L^\infty(\Omega_n) \otvn \scr{C}(H)$ defined in \eqref{actions-alpha-beta}, provided by Proposition \ref{prop-existence-conditional-expectation}. By Proposition \ref{prop-autre-extension}, we have a complete contraction $\beta \co \L^1(\scr{C}(H)) \to \L^{p^*}(\Omega_n,\L^1(\scr{C}(H)))$. Using \eqref{Esp-dual}, we see that the adjoint map of this map identifies to the linear map $\E \co \L^p(\Omega_n,\scr{C}(H)) \to \scr{C}(H)$, which is therefore a complete contraction. Now, we have
\begin{align*}
\MoveEqLeft
\beta R_{\phi}
\ov{\eqref{Equation-Delta}}{=} (\Id \ot R_{\phi}) \beta
\ov{\eqref{Facto-Fourier-multiplier}}{=}(\Id \ot \tilde{R}_{\phi} \Mult_{a,b}) \beta \\
&=(\Id \ot \tilde{R}_{\phi}) (\Id \ot \Mult_{a,b}) \beta
\ov{\eqref{Theta-bbis}}{=} (\Id \ot \tilde{R}_{\phi}) \Theta.           
\end{align*} 
Since $\E \beta=\Id_{\scr{C}(H)}$, we deduce the equality 
$$
R_{\phi}=\E \beta R_{\phi}=\E (\Id \ot \tilde{R}_{\phi}) \Theta.
$$ 
That means that we have the following commutative diagram.
$$
\xymatrix @R=1.5cm @C=2cm{
\L^p(\Omega_n,\L^p(\scr{C}(H))) \ar[r]^{\Id_{\L^p(\Omega_n)} \ot \tilde{R}_{\phi}}   & \ar[d]^{\E} \L^p(\Omega_n,\scr{C}(H)) \\
\L^p(\scr{C}(H)) \ar[r]_{R_\phi} \ar[u]^{\Theta}   & \scr{C}(H) \\
  }
$$
We infer that we have the estimate%the function $g \co \{0,\ldots,n\} \to \mathbb{C}$ induces a completely bounded multiplier $R_{g} \co \L^p(\scr{C}(H)) \to \scr{C}(H)$ and that
\begin{align*}
\MoveEqLeft
\norm{R_{\phi}}_{\cb,\L^p(\scr{C}(H)) \to \scr{C}(H)}
=\norm{\Theta(\Id_{\L^p} \ot \tilde{R}_{\phi})\E}_{\cb, \L^p(\scr{C}(H)) \to \scr{C}(H)} \\
&\leq \norm{\Theta}_{\cb,\L^p \to \L^p(\L^p)} \norm{\Id_{\L^p} \ot \tilde{R}_{\phi}}_{\cb, \L^p(\L^p) \to \L^p(\L^\infty)} \norm{\E}_{\cb,\L^p(\L^\infty) \to \scr{C}(H)} \\
&\ov{\eqref{Equa-138676}}{\leq} \norm{R_\phi}_{\pi_p^\circ, \scr{C}(H) \to \scr{C}(H)}+\epsi.
\end{align*}
Since $\epsi >0$ is arbitrary, we obtain the desired estimate 
$$
\norm{R_{\phi}}_{\cb,\L^p(\scr{C}(H)) \to \scr{C}(H)} 
\leq \norm{R_\phi}_{\pi_p^\circ, \scr{C}(H) \to \scr{C}(H)}.
$$
\end{proof}

Theorem \ref{Th-p-sum} combined with Theorem \ref{th-norm-cb-rad} allows us to use formula \eqref{CEA-formula} to obtain the exact value of the entanglement-assisted classical capacity of these multipliers. 

\begin{thm}
\label{th-capacity-assisted}
Let $H$ be a finite-dimensional real Hilbert space with even dimension $n=2k \geq 2$. Consider a function $\phi \co \{0,\ldots,n\} \to \mathbb{C}$. Suppose that the radial multiplier $R_\phi \co \scr{C}(H) \to \scr{C}(H)$ is a quantum channel. We have
$$
\C_{\EA}(R_\phi)
=-\H(f_\phi), 
$$ 
where $\H(f_\phi)=-\int_{\Omega_n} f_\phi\log_2 f_\phi$ is the Segal entropy of $f_\phi$ defined in \eqref{Segal-entropy} with the normalized integral $\int_{\Omega_n}$.
\end{thm}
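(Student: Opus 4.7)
The plan is to apply the Junge--Palazuelos formula \eqref{CEA-formula}. Since $n = 2k$ is even, the fermion algebra $\scr{C}(H)$ is $*$-isomorphic to $\M_N$ with $N \ov{\mathrm{def}}{=} 2^k$, and under this identification $R_\phi$ is a unital quantum channel on $\M_N$. Formula \eqref{CEA-formula} yields
$$
\C_{\EA}(R_\phi) = \frac{1}{\log 2} \frac{\d}{\d p}\big[\norm{R_\phi^*}_{\pi_{p^*}^\circ, \M_N \to \M_N}\big]\big|_{p=1}.
$$

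First, I would reduce the completely $p^*$-summing norm to a completely bounded $\L^{p^*} \to \L^\infty$ norm by applying Theorem \ref{Th-p-sum} at exponent $p^*$. By definition \eqref{def-norrm-cb-p-summing} the completely $p^*$-summing norm depends only on the operator space structure of $\M_N \cong \scr{C}(H)$ and not on the choice of trace, so we can freely switch to the normalized trace used throughout the paper. Moreover, the quantum-channel hypothesis forces the symbol $\phi$ to be real-valued: Proposition \ref{Prop-cp} gives $f_\phi \geq 0$, and since the Walsh functions $w_A$ take real values, the decomposition $f_\phi = \sum_A \phi(|A|) w_A$ forces $\phi(k) \in \R$ for every $k$. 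Hence $R_\phi^*$ (whether the Schr\"odinger--Heisenberg adjoint or the $\L^{p^*}$--$\L^\infty$ Banach space adjoint) acts on the basis $(s_A)$ by the same formula as $R_\phi$. Theorem \ref{Th-p-sum} then gives
$$
\norm{R_\phi^*}_{\pi_{p^*}^\circ, \M_N \to \M_N} = \norm{R_\phi^*}_{\cb, \L^{p^*}(\scr{C}(H)) \to \scr{C}(H)}.
$$

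Second, I would invoke formula \eqref{norm-cb-2} of Theorem \ref{th-norm-cb-rad} directly, which identifies this completely bounded norm with an $\L^p$-norm on the hypercube:
$$
\norm{R_\phi^*}_{\cb, \L^{p^*}(\scr{C}(H)) \to \L^\infty(\scr{C}(H))} = \norm{f_\phi}_{\L^p(\Omega_n)}.
$$
Combining the two preceding equalities with the Junge--Palazuelos identity, the computation of $\C_{\EA}(R_\phi)$ reduces to differentiating the scalar-valued function $p \mapsto \norm{f_\phi}_{\L^p(\Omega_n)}$ at $p = 1$.

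Finally, I would conclude with Proposition \ref{prop-deriv-norm-p}. The trace-preservation of $R_\phi$ yields $\phi(0) = 1$ by Remark \ref{remark-trace-preserving}, so $\norm{f_\phi}_{\L^1(\Omega_n)} = \phi(0) = 1$ as already computed in \eqref{norm-hg}. The hypotheses of \eqref{deriv-norm-p} being satisfied, we obtain
$$
\C_{\EA}(R_\phi) = \frac{1}{\log 2} \frac{\d}{\d p}\big[\norm{f_\phi}_{\L^p(\Omega_n)}\big]\big|_{p=1} \ov{\eqref{deriv-norm-p}}{=} -\H(f_\phi).
$$
The proof is thus an assembly of previously established results; the only delicate point is the bookkeeping of the two distinct adjoints and of the fact that completely $p$-summing norms do not depend on the normalization of the trace, both of which are resolved by the observation that $\phi$ is real-valued.
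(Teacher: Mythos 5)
Your proposal is correct and follows essentially the same route as the paper's proof: apply the Junge--Palazuelos formula \eqref{CEA-formula}, reduce the completely $p^*$-summing norm of $R_\phi^*$ via Theorem \ref{Th-p-sum}, identify the resulting completely bounded norm with $\norm{f_\phi}_{\L^p(\Omega_n)}$ by \eqref{norm-cb-2}, and differentiate with Proposition \ref{prop-deriv-norm-p}. Your extra bookkeeping (real-valuedness of $\phi$, hence self-adjointness of the radial multiplier, and trace-independence of the summing norm) only makes explicit what the paper dispatches with the remark that the adjoint of a radial multiplier is again a radial multiplier.
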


\begin{proof}
Note that the adjoint of a radial multiplier is clearly a radial multiplier. Recall that we have $f_\phi \geq 0$ by Proposition \ref{Prop-cp} and that $\norm{f_\phi}_{\L^1(\Omega_n)}=1$ by \eqref{norm-hg}. 
Thus, we can compute in the following way
\begin{align*}
\MoveEqLeft
\C_{\EA}(R_\phi)
\ov{\eqref{CEA-formula}}{=} \frac{1}{\log 2} \frac{\d}{\d p} \big[\norm{R_\phi^*}_{\pi_{p^*}^\circ, \scr{C}(H) \to \scr{C}(H)} \big]|_{p=1}  \\
&\ov{\eqref{norm-p-sum}}{=} \frac{1}{\log 2} \frac{\d}{\d p} \big[\norm{R_\phi^*}_{\cb, \L^{p^*}(\scr{C}(H)) \to \L^\infty(\scr{C}(H))} \big]|_{p=1}       
\ov{\eqref{norm-cb-2}}{=} \frac{1}{\log 2} \frac{\d}{\d p} \big[\norm{f_\phi}_{\L^{p}(\Omega_n)} \big]|_{p=1} \\  
&\ov{\eqref{deriv-norm-p}}{=} -\H(f_\phi)
\ov{\eqref{Segal-entropy}}{=} \int_{\Omega_n} f_\phi\log_2 f_\phi.
\end{align*}
\end{proof}

\begin{remark} \normalfont
In particular, combined with \eqref{mino-Q1} we obtain that 
$$
\max\{-\log_2 {N}-\H(f_\phi),0\} 
\leq \Q(R_\phi) 
\leq \C(R_\phi) 
\leq -\H(f_\phi),
$$
where $\Q(R_\phi)$ and $\C(R_\phi)$ are the quantum capacity and the classical capacity of the quantum channel $R_\phi$. 
\end{remark}

\begin{example} \normalfont
Consider the situation of Example \ref{Ex-multiplier-Fermions} and Example \ref{Example-suite}. If $0 \leq t \leq 1$, we have examined the dephasing channel $T \co \M_2 \to \M_2$, $x \mapsto (1-t)x+tZxZ$ of \cite[p.~155]{Wil17} viewed as a radial multiplier. We have seen that
$$
-\H(f_\phi)
\ov{\eqref{sans-fin-5}}{=} 2+t\log_2 t+(1-t)\log_2(1-t).
$$
Our result for the entanglement-assisted capacity allows us to recover the formula
$$
\C_{\EA}(T)
=2+t\log_2 t+(1-t)\log_2(1-t)
$$
of \cite[Exercise 21.6.2 p.~597]{Wil17}.
\end{example}

%\begin{prop}
%We have
%$$
%\norm{M_g}_{\cb, \L^p(\scr{C}(H)) \to \L^\infty(\scr{C}(H))} 
%\leq \norm{M_g}_{\cb, \L^p(\Omega_n) \to \L^\infty(\Omega_n)}
%=\norm{g}_{\ell^p}
%$$
%\end{prop}
%
%\begin{proof}
%$$
%\xymatrix @R=1.5cm @C=2cm{
%\L^1(\scr{C}(H)),\L^1(\Omega_n)) \ar[r]^{\Id_{\L^p(\scr{C}(H))} \ot M_g}   & \ar[d]^{\E} \L^p(\scr{C}(H),\L^p(\Omega_n)) \\
%\L^1(\scr{C}(H)) \ar[r]_{M_g} \ar[u]^{\Delta}   & \L^p(\scr{C}(H)) \\
  %}
%$$
%\end{proof}

%%%%%%%%%%%%%%%%%%%%%%%%%%%%%%%%%%%%%%%%%%%%%%%%%
\section{Future directions and open questions}
\label{sec-future}
%%%%%%%%%%%%%%%%%%%%%%%%%%%%%%%%%%%%%%%%%%%%%%%%%

The exact computation of the classical capacity, the quantum capacity and the private capacity of radial multipliers is an interesting open question. A characterization of these quantum channels which are entanglement breaking channels or PPT is missing. Consequently, we intend in a sequel to delve into various properties of these quantum channels and explore variations and generalizations of these multipliers.

\paragraph{Declaration of interest} None.

\paragraph{Competing interests} The author declares that he has no competing interests.

\paragraph{Data availability} No data sets were generated during this study.
%s and Grants

\paragraph{Acknowledgment} I would like to thank Li Gao for a very brief yet enlightening conversation regarding the <<proof>> of \cite[Lemma 3.11 p.~3434]{GJL20}. The proof of Proposition \ref{prop-Gao} is inspired by this discussion. The author wishes to express profound gratitude to the referee whose meticulous and comprehensive review significantly contributed to the improvement of this paper. The detailed comments and constructive suggestions provided were invaluable in enhancing both the accuracy and clarity of the mathematical arguments presented herein. This thorough engagement has undoubtedly enriched the quality of the work.

%The author gratefully acknowledges the support from the French National Research Agency grant ANR-18-CE40-0021 (project HASCON). We are thankful to Shouhei Honda, Naotaka Kajino, Jun Kigami, Bogdan Nica, Arup Kumar Pal, Jiayin Pan, El Maati Ouhabaz, Adrián González-Pérez, Sang-Gyun Youn, and Melchior Wirth for their valuable insights during brief discussions. Our appreciation extends to Li Gao and Bogdan Nica for their feedback and corrections, and to Christopher Sogge for pointing us to the reference \cite{BSS21}. Finally, a special note of gratitude goes to Bruno Iochum for the numerous discussions we had between the inception of the first version of this paper and the release of its preprint \cite{IoZ23}.

%%%%%%%%%%%%%%%%%%%%%%%%%%%%%%%%%%%%%%%%%%%%%%%%%%%%%%%%%%%%%%%%%%%%%%%%%%%%%%%%%%%%%%%%%%%%%%%%%%%%%%%%%%%%%%%%%%%%%%%%%%%%%
%%%%%%%%%%%%%%%%%%%%%%%%%%%%%%%%%%%%%%%%%%%%%%%%%%%%%%%%%%%%%%%%%%%%%%%%%%%%%%%%%%%%%%%%%%%%%%%%%%%%%%%%%%%%%%%%%%%%%%%%%%%%%
\small

{\footnotesize

\vspace{0.2cm}

\noindent C\'edric Arhancet\\ 
\noindent 6 rue Didier Daurat, 81000 Albi, France\\
URL: \href{http://sites.google.com/site/cedricarhancet}{https://sites.google.com/site/cedricarhancet}\\
cedric.arhancet@protonmail.com\\%cedric.arhancet@gmail.com 
ORCID: 0000-0002-5179-6972 
}

\end{document}